\documentclass[10pt,a4paper,reqno]{amsart}
\usepackage{amsthm,amssymb,amstext,graphicx,comment}
\graphicspath{{numerics/results/}}
\newcommand{\floorCF}{1.2248\times10^{-5}}
\newcommand{\floorMC}{1.2286\times10^{-5}}
\newcommand{\floorSE}{4.3511\times10^{-8}}
\newcommand{\floorRelErr}{0.31\%}

\usepackage[foot]{amsaddr}
\usepackage[a4paper, margin=2.5cm]{geometry}
\usepackage{bbm}
\usepackage{enumerate}
\usepackage{amscd}
\usepackage{mathtools}
\usepackage{hyperref}
\hypersetup{
  colorlinks=true,
  pdfpagelayout=TwoColumnRight,
  pdftitle={Hedging Maturity-Specific Risk in Forward Curve Derivatives under Stochastic Volatility},
  pdfauthor={Riccardo Alberti and Sven Karbach},
  pdfsubject={Variance-optimal hedging for forward-curve derivatives},
  pdfkeywords={quadratic hedging, variance-optimal hedging, forward curves, stochastic volatility, HJMM, finite-rank approximation}
}
\usepackage{enumitem}
\usepackage{empheq}
\usepackage[utf8]{inputenc}
\usepackage[T1]{fontenc}
\usepackage{lmodern}
\usepackage{tikz}
\usepackage{eso-pic}
\usepackage{ifthen}
\usepackage{bibentry}
\usepackage{soul}
\usepackage[capitalise]{cleveref}
\usepackage{natbib}
\setcitestyle{numbers,comma,open={[},close={]}}

\newtheorem{theorem}{Theorem}[section]
\newtheorem{corollary}[theorem]{Corollary}
\newtheorem{lemma}[theorem]{Lemma}
\newtheorem{definition}[theorem]{Definition}
\newtheorem{remark}[theorem]{Remark}
\newtheorem{proposition}[theorem]{Proposition}

\newtheoremstyle{example}
  {.3\baselineskip}
  {.3\baselineskip}
  {\normalsize}  
  {0pt}       
  {\bfseries} 
  {.}         
  {5pt plus 1pt minus 1pt} 
  {}          
\theoremstyle{example}
\newtheorem{example}[theorem]{Example}

\newtheorem*{assumption*}{\assumptionnumber}
\providecommand{\assumptionnumber}{}
\makeatletter

\newcommand{\df}{\coloneqq}

\newcommand{\interior}[1]{({\kern0pt#1})^{\textnormal{o}}}
\newcommand{\set}[1]{\left\{ #1\right\}}

\begin{document}

\title[Hedging Maturity-Specific Risk in Forward Curve Derivatives]{Hedging Maturity-Specific Risk in Forward Curve Derivatives under Stochastic Volatility}

\address{\today{}, Korteweg-de Vries Institute for Mathematics and Informatics Institute, University
of Amsterdam, The Netherlands.}     

\maketitle

\vspace{-5mm}

\begin{center}
  \begin{tabular}{cc}
    \textsc{Riccardo Alberti} & \textsc{Sven Karbach} \\
    \small[\texttt{\MakeLowercase{r-alberti@live.it}}] &  \small[\texttt{\MakeLowercase{sven@karbach.org}}] \\
  \end{tabular}
\end{center}

\begin{abstract}
    We study the variance-optimal hedging of European contingent claims written on forwards.
    We assume that
the dynamics of the underlying forward curves follow a Heath--Jarrow--Morton--Musiela
stochastic partial differential equation modulated by an infinite-rank stochastic
covariance component. The variance-optimal hedge is then given by the Galtchouk--Kunita--Watanabe
projection with respect to some covariance-norm quotient generated by the forward curve
martingale. We show density of finite-maturity and
delivery-window strategies, convergence of spectral finite-rank hedge projections and an exact decomposition of the quadratic hedging error into bucket, rank and residual risk components. In enlarged filtrations, the
residual risk is a stochastic-volatility floor for claims loading on non-traded covariance noise. We
illustrate the hedging framework in affine stochastic covariance and multiplicative HJMM
models, and give a concrete example of the
decomposition in a CIR stochastic covariance model.\par{}\medskip{}

\noindent\textbf{Keywords:} Quadratic hedging, variance-optimal hedging, forward-curve derivatives,
maturity-specific risk, stochastic volatility, Heath--Jarrow--Morton framework, finite-rank approximation, energy markets.
\end{abstract}

\maketitle

\section{Introduction}\label{sec:Introduction}
Derivatives whose underlying is a forward rate or forward price arise naturally in both fixed–income and
commodity markets. Under some pricing measure, collateralised
forward or futures prices are martingales, and European payoffs may depend either on one maturity
or on a delivery window. For example, a \emph{caplet} with strike $\kappa$ and payment date $T$ is
a single-maturity option written on a forward rate with payoff 
$$
\big(F_T(\tau^\ast) - \kappa\big)^{+},
\qquad \tau^\ast\geq T\ \text{fixed}.
$$ 
By contrast, a typical \emph{power option} is a European option written on an electricity base-load future
delivering over $[\tau_1,\tau_2]$ that pays
$$
\left(\frac{1}{\tau_2-\tau_1}\int_{\tau_1}^{\tau_2} F_T(u)\,\mathrm{d}u - \kappa\right)^{+},
$$
so the payoff depends on a whole segment of the curve. Both cases fit the same
infinite-dimensional Heath--Jarrow--Morton framework, in which the field $(t,\tau)\mapsto F_t(\tau)$ is
treated as a single forward-curve state variable. More precisely, we denote by $x\df\tau-t$ the time-to-maturity of the forward, and set $f_t(x)\df
F_t(t+x)$. In these coordinates, the risk-neutral evolution of the forward curve is described by the 
Heath--Jarrow--Morton--Musiela (HJMM) equation~\cite[§2.4]{CT06}:
\begin{align}\label{eq:HJMM}
  \begin{cases}
    \mathrm{d} f_t(x)
    = \partial_x f_t(x)\,\mathrm{d}t
      + \sum_{i=1}^{d}\sigma_t^{(i)}(x)\,\mathrm{d}W_t^{(i)}, 
      & t>0,\ x>0, \\
    f_0(x) = F_0(x), & x>0.
  \end{cases}
\end{align}
Here $\{W^{(i)}\}_{i=1,\ldots,d}$ denotes a family of independent Brownian motions with 
$d \in \mathbb{N} \cup \{+\infty\}$. The parameter $d$ is the rank of the noise: finite $d$
gives a finite-rank HJM model, while $d=\infty$ gives an infinite-rank model. In this framework, European
claims written on forwards are interpreted as nonlinear functionals of the state variable $x\mapsto
f_T(x)$. In this paper, we develop a quadratic hedging theory for such claims in HJMM models with both stochastic
volatility and infinite-rank noise, i.e., the family $\set{\sigma^{(i)}\colon i=1,\ldots,d}$ may
vary randomly in time and we let $d=\infty$. Note that empirical evidence from energy and commodity forward markets supports high-dimensional,
maturity-dependent risk structures with time-varying correlations~\cite{AKW10}. Principal
component analyses of NordPool power futures require more than ten factors to explain $95\%$ of
total variance~\cite{Koekebakker2005ForwardCD,Fre08,BBK08}, in contrast with many interest-rate
term structures where a few factors explain most variation~\cite[§1.7.3]{CT06}.
Explained variance, however, is not the appropriate criterion for hedging. A direction with low
unconditional variance may still materially affect option values, hedge ratios, quantiles, or
value-at-risk whenever the payoff is sufficiently exposed to that
direction~\cite{Con05}. Consequently, an a priori rank reduction based solely on principal-component
analysis may lead to substantial hedging errors. The correct and relevant notion of approximation
error is instead determined by the covariance-based hedging norm introduced in this paper. More
precisely, we shall formulate the hedging problem in a full infinite-rank setting and
assess any subsequent rank reduction directly in the metric relevant for hedging.

\subsection{Stochastic Covariance and Maturity-Specific Risk}

Throughout the paper, we work on a filtered probability space supporting a cylindrical Brownian motion $W$ on a separable Hilbert space $G$. The instantaneous volatility is represented by a predictable process
\begin{align*}
  \sigma_t \in \mathcal{L}_2(G,H),
\end{align*}
where $H$ denotes the forward-curve state space. Accordingly, the noise term in~\eqref{eq:HJMM} is understood as the Hilbert-space stochastic integral $\sigma_t\,\mathrm{d}W_t$. The corresponding instantaneous covariance operator is
\begin{align*}
\Sigma_t \coloneqq \sigma_t \sigma_t^\ast \in \mathcal{L}(H),
\qquad t \geq 0.
\end{align*}
For every $t \geq 0$, the operator $\Sigma_t$ is self-adjoint, non-negative, and trace class, but it may have infinite rank. We refer to the joint process
$(f_t,\Sigma_t)_{t\geq 0}$ as a \emph{stochastic-volatility-modulated HJMM model}.

Infinite-rank covariance provides a natural representation of maturity-specific risk. In an
idealised market with a continuum of maturity dates, distinct regions of the forward curve may be
exposed to distinct sources of risk. Therefore, capturing such effects generally requires infinitely
many factors ($d=\infty$), see also the discussion~\cite[Section~6.5.3]{CT06}. 

\subsection{Hedging European Contingent Claims in Forward Markets}

Let $K = h\bigl(F_T(\cdot)\bigr)$ be a European contingent 
claim in $L^2(\Omega,\mathcal{F}_T,\mathbb{P})$ written on the forward curve at time $T$. This class
includes both point-maturity payoffs and payoffs depending on averages over delivery windows. We
assume that trading is possible in a strip of forward or futures contracts, with admissible
self-financing gains defined in Section~\ref{sec:trading-strategies} below. Among all admissible
strategies and initial endowments, we minimise the mean-square hedging error. We refer to the resulting optimisation problem as the \emph{variance-optimal hedging problem}. Let
$$
M_t \coloneqq \int_0^t \sigma_s\,\mathrm{d}W_s,
\qquad t\geq 0,
$$
denote the $H$-valued martingale component of the forward-curve dynamics in~\eqref{eq:HJMM}. Since the forward curve is a martingale up to the Musiela shift under the pricing measure, the variance-optimal hedge is characterised by the Galtchouk--Kunita--Watanabe (GKW) decomposition (Theorem~\ref{thm:variance-optimal-hedging}):
$$
K
=
\mathbb{E}[K]
+
\int_0^T \phi_s^\star\,\mathrm{d}M_s
+
N_T,
\qquad
N\perp M.
$$
Here $\phi^\star$ is the predictable GKW integrand and $N$ a square-integrable martingale strongly orthogonal to $M$; the initial capital $\mathbb{E}[K]$ together with $\phi^\star$ minimises the mean-square hedging error, with minimal residual risk $\mathbb{E}\bigl[N_T^2\bigr]$.

Maturity-specific risk has two conceptually distinct consequences for this problem. \emph{First}, even when the infinite-rank volatility is deterministic, exact replication by strictly viable portfolios involving only finitely many maturities may fail. Writing $Q_M(t)$ for the martingale covariance density introduced below, its range is dense in its closed support,
$\overline{\operatorname{Ran} Q_M(t)}=\bigl(\ker Q_M(t)\bigr)^\perp$,
but need not itself be closed. The pointwise optimal hedge may then involve an unbounded
pseudo-inverse and exist only as a limit of admissible hedges in the covariance norm, so the
abstract GKW integrand $\phi^\star$ need not be implementable by finitely many traded maturities or
delivery periods. This is a failure of strict viability, not of market completeness: under
deterministic volatility, a Brownian filtration, and the usual covariance-support condition the market
remains \emph{approximately complete} (Corollary~\ref{cor:approx-completeness};
Example~\ref{ex:infinite-rank-nonviable} works out an explicit hedge of this kind).

\emph{Second}, a strictly positive irreducible hedging error arises only when the covariance is
stochastic and its additional source of randomness affects the claim without being spanned by
trading in the forward curve. This unspanned covariance risk generates the
\emph{stochastic-volatility floor} $\mathbb{E}[N_T^2]$ of the quadratic hedging problem. Along a
nested family of closed approximation spaces, the mean-square hedging error then admits an exact
decomposition into
\begin{enumerate}[label=(\roman*),leftmargin=*]
    \item a finite-bucket implementation error;
    \item a finite-rank spectral truncation error; and
    \item the irreducible residual risk $\mathbb{E}[N_T^2]$,
\end{enumerate}
separating the cost of the traded contracts from that of covariance-rank reduction and from the
intrinsic incompleteness generated by unspanned risk (Proposition~\ref{thm:three-way}). The floor is
made fully explicit for an affine covariance model driven by a Cox--Ingersoll--Ross process in
Section~\ref{subsubsec:closed-form-floor}.

\subsection{Contributions and Related Literature}

This paper develops a quadratic hedging theory for derivatives written on forward curves in
infinite-dimensional HJMM models with operator-valued stochastic covariance. We work throughout
under a fixed pricing measure under which the traded forward or futures contracts are
martingales. Thus, we consider the \emph{martingale case} of quadratic hedging, in which the
variance-optimal hedging problem reduces to an orthogonal projection in $L^2$, and the
variance-optimal, risk-minimising, and Galtchouk--Kunita--Watanabe strategies coincide under the
standing integrability assumptions. Our objective is not to extend the general semimartingale theory
of variance-optimal hedging or variance-optimal martingale measures developed, for example,
in~\cite{Schweizer_2001,CernyKallsen2007}. Nor is the existence of a GKW decomposition for
Hilbert-space-valued martingales itself new; infinite-dimensional martingale representation and projection results go back at least to~\cite{Ouvrard1975ReprsentationDM}. The main issue addressed here is instead the stability and implementability of the GKW projection when the forward-curve covariance is compact, stochastic, operator-valued, and potentially of infinite rank. In this setting, the relevant covariance operators generally have non-closed range, their pseudo-inverses may be unbounded, and the formal optimal integrand need not correspond directly to a portfolio of finitely many traded contracts. More specifically, the paper makes the following contributions:
\begin{enumerate}[label=(\roman*),leftmargin=*]
    \item We introduce a covariance-norm quotient space for forward-curve gains (Definition~\ref{def:Lambda2-star}), thereby removing covariance-null directions before invoking any pseudo-inverse; this is a covariance-weighted analogue of the generalised integrands of~\cite{BjorkDiMasiKabanovRunggaldier1997,DeDonnoGuasoniPratelli2005}.
    
    \item We prove that trading strategies based on finitely many point maturities and delivery windows are dense in this quotient space (Proposition~\ref{prop:density} and Lemma~\ref{lem:predictable-finite-maturity-density}).
    
    \item We establish convergence of spectral approximations to the GKW integrand under predictable
commuting projections (Theorem~\ref{thm:finite-rank-stability}) and illustrate it in examples.
    
    \item We identify the GKW residual with an irreducible stochastic-volatility hedging floor when the claim is exposed to covariance noise that cannot be spanned by trading in the forward curve, and separate this floor from the finite-rank degeneracy of the covariance operator (Proposition~\ref{prop:dichotomy}).
\end{enumerate}

The analysis extends the classical infinite-dimensional HJM and SPDE framework of~\cite{Fil01,CT06} to stochastic-covariance environments in which the volatility takes values in the Hilbert--Schmidt operators. The theory covers both claims depending on a single maturity, such as caplet-type payoffs, and claims depending on delivery windows, such as options on electricity base-load futures. We illustrate the assumptions using two model classes: affine models with operator-valued stochastic covariance and multiplicative HJMM models with fixed covariance eigenfunctions and state-dependent eigenvalues.

Our approach builds on several strands of the literature. The state-space formulation, Musiela
parametrisation, and curve-valued SPDE representation follow the Hilbert-space framework of
Filipovi\'c~\cite{Fil01} and the infinite-dimensional HJM theory of Carmona and
Tehranchi~\cite{CT06}. In commodity markets, Benth and Kr\"uhner~\cite{BenthKruehner2014CMS}
represent forward-curve dynamics through covariance operators given by integral kernels on
time-to-maturity spaces, providing a natural foundation for the operator-valued, potentially
infinite-rank stochastic covariance models studied here and matching the high-dimensional,
maturity-dependent covariance structures documented in electricity forward
markets~\cite{Koekebakker2005ForwardCD,BorakWeron2008,Benth2008}.

Variance-optimal and quadratic hedging in general semimartingale models is well established; see,
among others, Schweizer~\cite{Schweizer_2001}, Pham~\cite{Pham2000}, and \v{C}ern\'y and
Kallsen~\cite{CernyKallsen2007}. Explicit solutions for affine models were studied by Kallsen and
Pauwels~\cite{Kallsen_Pauwels_2010} and by the authors in a multivariate stochastic covariance setting in~\cite{CK26}. Relative to this literature, we deliberately restrict attention to the martingale case under the pricing measure. This allows us to isolate the infinite-dimensional geometric and approximation problems created by compact covariance operators. The novelty relative to the abstract GKW projection lies in the covariance-norm quotient construction, its treatment of non-closed covariance ranges and unbounded pseudo-inverses, and the accompanying finite-rank and finite-bucket approximation theory.

The idea of admitting a strategy that is not a finite tradable portfolio but an element of a
completion is itself rooted in the theory of bond markets with a continuum of maturities. Generalised
portfolios across a maturity continuum were introduced by Bj\"ork, Di Masi, Kabanov and
Runggaldier~\cite{BjorkDiMasiKabanovRunggaldier1997}; a corresponding stochastic-integration theory
for bond markets, in which the admissible integrands form a closure that need not consist of pointwise
portfolios, was developed by De Donno and Pratelli and by De Donno, Guasoni and
Pratelli~\cite{DeDonnoPratelli2005,DeDonnoGuasoniPratelli2005}, while attainability and completeness
of such Hilbert-space bond markets were analysed by Taflin and by Ekeland and
Taflin~\cite{Taflin2005,EkelandTaflin2005}. Relative to this line of work, our completion is weighted
by the covariance norm~\eqref{eq:lambda-norm}, is built for operator-valued, potentially
infinite-rank \emph{stochastic} covariance, and is used to solve the \emph{quadratic-hedging}
projection rather than questions of existence, super-replication, or exact completeness. This is what
produces the bucket/rank/residual decomposition of Proposition~\ref{thm:three-way} and the explicit
stochastic-volatility floor, neither of which is visible at the level of the abstract generalised
integrand.

In terms of infinite-dimensional modelling, the paper is also related to the literature on infinite-dimensional stochastic-volatility
models for forward curves. Relevant developments include affine stochastic-volatility models on
Hilbert spaces~\cite{BS18,BenthRudigerSuess2018, CoxKarbachKhedher2022, FK24, HKK25}, locally
state-dependent HJMM models~\cite{DL25}, robustness results for Hilbert-space
stochastic-volatility processes~\cite{Benth2024Robustness, Kar26}, and heat-modulated affine
stochastic-volatility models for forward-curve dynamics~\cite{Karbach2024HeatAffine}. To our
knowledge, the present paper is the first to connect these operator-valued stochastic-covariance
modelling developments with variance-optimal hedging.

\subsection{Layout of the Article}

Section~\ref{sec:stoch-volat-modul-1} introduces the forward-curve state space, states the standing
assumptions, and defines both generalised and realistic admissible trading strategies together with their self-financing gains. Section~\ref{sec:Variance-Optimal-Hedging} develops the GKW decomposition and characterises the variance-optimal integrand in the covariance-norm quotient space. Section~\ref{sec:appr-real-portf} establishes convergence of the spectral and finite-bucket approximations and derives the three-way decomposition of the hedging error. Section~\ref{sec:bns-specialization} presents affine stochastic-covariance models and multiplicative fixed-eigenbasis HJMM models covered by the general theory. Section~\ref{sec:numerics} describes the finite-factor numerical implementation.

\section{The Stochastic Volatility Modulated HJMM Financial Market}
\label{sec:stoch-volat-modul-1}

This section fixes the forward-curve state space, the HJMM dynamics, and the self-financing gain
space used throughout the paper. The material in Section~\ref{sec:forward-curve-dynamics} and the
mild well-posedness of Proposition~\ref{prop:mild-solution} are largely standard and are included to
keep the paper self-contained. Section~\ref{sec:trading-strategies} then contains new results on
trading strategies and the norms appropriate for evaluating them.

\subsection{The State Space}
\label{sec:forward-curve-dynamics}

Fix a maturity range $[0,\Theta_{\max}]$ with $\Theta_{\max}\in(0,\infty]$. 
Empirically, for each trading time $t$, the map $x\mapsto f_t(x)$ is sufficiently smooth, admits a finite long rate, 
and \emph{flattens} as $x\uparrow\Theta_{\max}$; see, e.g., \cite{Fil01,Con05,Dou14}. 
To encode these features, let $w\colon[0,\Theta_{\max}]\to(0,\infty)$ be continuously differentiable, strictly positive, 
non--decreasing, with $w^{-1}\in L^1_{\mathrm{loc}}([0,\Theta_{\max}))$. 
When $\Theta_{\max}=\infty$, assume also the long-rate condition
$w^{-1}\in L^1([0,\infty))$ and
$$
\sup_{x\geq 0} w(x)^{-1}\Big(1+\int_0^x w(y)^{-1}\,\mathrm{d}y\Big)^2<\infty,
$$
cf.\ \cite[Ch.~4]{Fil01}. 
Define
$$
H_{w,\Theta_{\max}}
:=
\Big\{ f\in \mathrm{AC}_{\mathrm{loc}}\big([0,\Theta_{\max})\big)\colon
\|f\|_{w,\Theta_{\max}}^2
:= |f(0)|^2+\int_0^{\Theta_{\max}} w(x)\,|f'(x)|^2\,\mathrm{d}x
<\infty \Big\},
$$
with inner product
$$
\langle f,g\rangle_{w,\Theta_{\max}}
:= f(0)g(0)+\int_0^{\Theta_{\max}} w(x)\,f'(x)g'(x)\,\mathrm{d}x.
$$
Then $(H_{w,\Theta_{\max}},\langle\cdot,\cdot\rangle_{w,\Theta_{\max}})$ is a separable Hilbert space. 
For each $x\in[0,\Theta_{\max}]$, the evaluation map $f\mapsto f(x)$ is continuous; hence there exists a unique 
$u_x\in H_{w,\Theta_{\max}}$ with $\langle f,u_x\rangle_{w,\Theta_{\max}}=f(x)$, for all $f\in  H_{w,\Theta_{\max}}$. 
If $\Theta_{\max}=\infty$, the assumptions imply the existence of $f(\infty):=\lim_{x\to\infty} f(x)$ 
and the flattening $|f(x)-f(\infty)|\to 0$ as $x\to\infty$; 
when $\Theta_{\max}<\infty$, $f(\Theta_{\max})$ is well defined and evaluation at $\Theta_{\max}$ is continuous.

We develop the theory on the unbounded horizon $\Theta_{\max}=\infty$, the standard Filipovi\'c
space, and write $H_w:=H_{w,\infty}$; the finite-horizon case is recorded as
Remark~\ref{rmk:finite-horizon} below. On $H_w$ the Musiela shift semigroup
$$
(S(t)f)(x):=f(x+t),\qquad x,t\ge 0,
$$
is strongly continuous under the standing weight assumptions \cite[Thm.~5.1.1]{Fil01}. Its
infinitesimal generator is $\mathcal{A}=\partial_x$ with natural domain
$$
\mathcal{D}(\mathcal{A})
=
\Big\{ f\in H_{w,\Theta_{\max}}\colon\ f'\in H_{w,\Theta_{\max}} \Big\}.
$$
The decomposition $f \mapsto (f(0), f')$ identifies $H_{w,\Theta_{\max}}$ isometrically with
$\mathbb{R} \oplus L^2\bigl([0,\Theta_{\max}], w(x)\,\mathrm{d}x\bigr)$. Compactness is not needed
for the abstract hedging construction. When compact embeddings are useful for finite-dimensional
approximation, they must be taken into a weaker topology. For example, if $\Theta_{\max}=\infty$,
exponential damping yields: for any
$\gamma>0$, the embedding
$$
H_{w,\infty}
\hookrightarrow
\mathbb{R} \oplus L^2\bigl([0,\infty),\,w(x)\mathrm{e}^{-\gamma x}\,\mathrm{d}x\bigr)
$$
is compact, see~\cite[Thm.~2.3]{Tap13} and~\cite[Section 2]{Karbach2024HeatAffine}.

\begin{remark}[Finite maturity horizon]\label{rmk:finite-horizon}
For a finite horizon $\Theta_{\max}<\infty$ the right shift maps out of
$[0,\Theta_{\max}]$, so a strongly continuous Musiela semigroup requires a boundary
condition. A standard choice is the absorbing Musiela shift on $H_{w,\Theta_{\max}}$ with the
extension $f\equiv f(\Theta_{\max})$ for $x\ge\Theta_{\max}$, which is a $C_0$-semigroup
with generator $\partial_x$ on $\{f\in H_{w,\Theta_{\max}}:f'\in H_{w,\Theta_{\max}},\ f'(\Theta_{\max})=0\}$;
see~\cite[Section 2.3]{farkas2015isem}. All hedging results below use only the existence of a
strongly continuous Musiela semigroup and the martingale part $M$, so they transfer verbatim to any
such finite-horizon realisation. The dual functionals $\delta_x$ and $\ell_{[a,b]}$ are unchanged,
so the density results for point maturities and delivery windows transfer as well. The illustrative
example of Section~\ref{subsec:synthetic-study} uses a finite horizon and depends only on the
covariance operator, not on the choice of boundary condition.
\end{remark}

\subsection{HJMM Dynamics and Mild Well-Posedness}

Let $(\Omega,\mathcal{F},\mathbb{F},\mathbb{P})$ be a filtered probability space carrying a 
cylindrical Brownian motion $W$ on a separable Hilbert space $G$. 
The forward curve (under Musiela parametrisation) takes values in $H_w$ and satisfies the abstract HJMM SPDE
\begin{align}\label{eq:HJMM-SPDE}
\mathrm{d}f_t
=
\bigl(\mathcal{A} f_t + g_t \bigr)\,\mathrm{d}t
+
\sigma_t\,\mathrm{d}W_t,
\qquad
f_0\in H_w,
\end{align}
where $\mathcal{A}=\partial_x$ is the generator of $\{S(t)\}_{t\ge 0}$ on $H_w$, 
$g=\{g_t\}_{t\ge 0}$ is an $H_w$-valued progressively measurable drift, 
and $\sigma=\{\sigma_t\}_{t\ge 0}$ is predictable with $\sigma_t\in \mathcal{L}_2(G,H_w)$ 
(Hilbert--Schmidt operators from $G$ to $H_w$). 
Since $W$ is cylindrical on $G$ and $\sigma_t$ is Hilbert--Schmidt, the noise enters via the
stochastic convolution. 

In the arithmetic (collateralised-futures) case, $g\equiv 0$ so discounted futures are martingales. 
In the geometric (forward-price) case, $g$ is fixed by the HJM no--arbitrage condition so that suitably discounted forwards 
are martingales for all maturities.

Fix $T>0$ and assume:
\begin{itemize}
\item[(A1)] $f_0\in H_w$ is $\mathcal{F}_0$-measurable with $\mathbb{E}\big[\|f_0\|_{H_w}^2\big]<\infty$;
\item[(A2)] $g$ is predictable with $\displaystyle \mathbb{E}\big[\int_0^T \|g_t\|_{H_w}^2\,\mathrm{d}t\big]<\infty$;
\item[(A3)] $\sigma$ is predictable with
$\displaystyle \mathbb{E}\big[\int_0^T \|\sigma_t\|_{\mathcal{L}_2(G,H_w)}^2\,\mathrm{d}t\big]<\infty$.
\end{itemize}

The formulation covers both exogenous stochastic volatility and multiplicative HJMM noise. Indeed,
one may first construct a mild solution of
$$
\mathrm{d}f_t=(\mathcal A f_t+g_t)\,\mathrm{d}t+\Gamma(t,f_t,Y_t)\,\mathrm{d}W_t
$$
for a predictable state process $Y$ and a coefficient
$\Gamma:[0,T]\times H_w\times E\to\mathcal L_2(G,H_w)$, and then set
$\sigma_t:=\Gamma(t,f_t,Y_t)$. Standard Lipschitz and linear-growth assumptions on $\Gamma$
ensure (A3) and mild well-posedness; for instance, it suffices that
$$
\|\Gamma(t,f,y)-\Gamma(t,\tilde f,y)\|_{\mathcal L_2(G,H_w)}
\le L\|f-\tilde f\|_{H_w},
\qquad
\|\Gamma(t,f,y)\|_{\mathcal L_2(G,H_w)}
\le C(1+\|f\|_{H_w}+\|y\|_E),
$$
together with the corresponding square-integrability of $Y$. Thus the hedging theory below depends
on the realised martingale part $M=\int\sigma\,\mathrm{d}W$, not on whether the volatility is
exogenous or state-dependent. The following proposition is a special case of \cite[Thm.~5.4]{DaPratoZabczyk2014}.

\begin{proposition}[Mild well-posedness]\label{prop:mild-solution}
Under \emph{(A1)--(A3)}, there exists a unique $H_w$-valued, $\mathbb{F}$-adapted process
with continuous paths solving~\eqref{eq:HJMM-SPDE} in the mild sense:
\begin{align}\label{eq:mild-solution}
f_t
=
S(t)f_0
+\int_0^t S(t-s)\,g_s\,\mathrm{d}s
+\int_0^t S(t-s)\,\sigma_s\,\mathrm{d}W_s,
\qquad t\in[0,T].
\end{align}
Moreover, the Musiela shift is a $C_0$-semigroup, so
$M_T^S:=\sup_{0\le t\le T}\|S(t)\|_{\mathcal{L}(H_w)}<\infty$ and there exists a constant $C_T<\infty$ depending only on $T$ and $M_T^S$ with
\begin{align}\label{eq:apriori}
\sup_{t\in[0,T]}\mathbb{E}\big[\|f_t\|_{H_w}^2\big]
\le
C_T\left(
\mathbb{E}\big[\|f_0\|_{H_w}^2\big]
+\mathbb{E}\big[\int_0^T \|g_s\|_{H_w}^2\,\mathrm{d}s\big]
+\mathbb{E}\big[\int_0^T \|\sigma_s\|_{\mathcal{L}_2(G,H_w)}^2\,\mathrm{d}s\big]
\right).
\end{align}
In particular, $M_t = \int_0^t \sigma_s\,\mathrm{d}W_s$ is a square-integrable $H_w$-valued
martingale with continuous paths.
\end{proposition}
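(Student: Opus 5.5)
Since the equation is linear with additive, predictable coefficients, I would not invoke a fixed-point argument at all. I would \emph{define} $f$ by the right-hand side of~\eqref{eq:mild-solution} and verify directly that it is well defined, adapted, continuous and unique, and that it satisfies the estimate. Uniqueness is immediate: any two mild solutions coincide with this explicit formula, hence with each other up to indistinguishability once continuity is established. The standing weight assumptions give that $S$ is a $C_0$-semigroup on $H_w$ \cite[Thm.~5.1.1]{Fil01}. The uniform boundedness principle then yields $M_T^S=\sup_{t\le T}\|S(t)\|_{\mathcal L(H_w)}<\infty$. For the Musiela shift one in fact has $\|S(t)\|\le C$ uniformly, by the long-rate condition, but only finiteness on $[0,T]$ is needed.

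I would handle the three terms one at a time.

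The first term, $S(t)f_0$, is $\mathcal F_0$-measurable and continuous in $t$ by strong continuity. Its norm is bounded by $M_T^S\|f_0\|$.

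The drift convolution $\int_0^t S(t-s)g_s\,ds$ is a Bochner integral that is pathwise well defined, since $g\in L^2(0,T;H_w)$ a.s. by (A2). It is adapted, and continuous in $t$ by a standard dominated-convergence argument using strong continuity. Cauchy--Schwarz gives the bound $(M_T^S)^2T\,\mathbb E\int_0^T\|g_s\|^2ds$.

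The stochastic convolution $Z_t=\int_0^tS(t-s)\sigma_s\,dW_s$ is well defined because $\|S(t-s)\sigma_s\|_{\mathcal L_2}\le M_T^S\|\sigma_s\|_{\mathcal L_2}$, which is square-integrable by (A3). The It\^o isometry then gives $\mathbb E\|Z_t\|^2\le (M_T^S)^2\mathbb E\int_0^T\|\sigma_s\|_{\mathcal L_2}^2ds$. Summing the three bounds with $(a+b+c)^2\le3(a^2+b^2+c^2)$ yields~\eqref{eq:apriori} with $C_T=3(M_T^S)^2\max(1,T)$.

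The only genuinely delicate step is path continuity of $Z$, because the integrand depends on $t$ and so $Z$ is not itself a martingale. I would first try the route that exploits the special structure here. The Musiela shift extends to a unitary group? It does not, so instead I would use the general result of Da Prato--Zabczyk: for $C_0$-semigroups that are \emph{contractive}, or more generally quasi-contractive ($\|S(t)\|\le e^{\omega t}$), the stochastic convolution has a continuous modification by the Kotelenez maximal inequality. This requires knowing that the shift on $H_w$ is quasi-contractive. That holds because $w$ is non-decreasing: $\|S(t)f\|^2=|f(t)|^2+\int w(x)|f'(x+t)|^2dx$, and the second term is at most $\int w|f'|^2$, while $|f(t)|^2$ is controlled via the reproducing kernel. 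If quasi-contractivity is awkward to verify, the fallback is the factorization method \cite[Thm.~5.4 and Ch.~5.3]{DaPratoZabczyk2014}. That method needs $\mathbb E\int\|\sigma\|^p$ with $p>2$, so I would localize with stopping times to obtain a.s. continuity. Alternatively, I would simply cite \cite[Thm.~5.4]{DaPratoZabczyk2014} as the statement does, since it covers exactly additive noise with predictable Hilbert--Schmidt integrand.

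Finally, $M_t=\int_0^t\sigma_s\,dW_s$ involves no semigroup, so it is a standard Hilbert-space stochastic integral with $\mathbb E\|M_t\|^2=\mathbb E\int_0^t\|\sigma_s\|_{\mathcal L_2}^2ds<\infty$. It is therefore a continuous square-integrable martingale \cite[Thm.~4.12]{DaPratoZabczyk2014}, which gives the last assertion.
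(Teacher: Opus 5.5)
Your route is sound and more explicit than the paper, whose entire justification is the citation of \cite[Thm.~5.4]{DaPratoZabczyk2014}. Defining $f$ by the mild formula makes uniqueness immediate, and your term-wise bounds give \eqref{eq:apriori} with $C_T=3(M_T^S)^2\max(1,T)$. Your approach also makes visible that path continuity of $\int_0^tS(t-s)\sigma_s\,\mathrm dW_s$ under (A3) alone is not a generic $C_0$-semigroup fact. It needs the quasi-contractivity of the shift, and two points in that step need repair.

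\emph{First, your sketch does not actually prove quasi-contractivity.} You bound the shifted integral by all of $\int_0^\infty w|f'|^2$, and $|f(t)|^2$ by $\|u_t\|_{H_w}^2\|f\|_{H_w}^2$. That only gives $\|S(t)\|^2\le 2+\int_0^tw^{-1}$, which does not tend to $1$ as $t\downarrow0$, so it is not of the form $\mathrm e^{2\omega t}$. You must pair $f(t)$ with the piece $\int_0^t w|f'|^2$ that the shift discards. Use $w(x)\le w(x+t)$, and apply Cauchy--Schwarz to $f(t)=f(0)+\int_0^t w^{-1/2}\,w^{1/2}f'$. Then
\begin{align*}
\|S(t)f\|_{H_w}^2
&\le\Bigl(1+\int_0^t\frac{\mathrm dy}{w(y)}\Bigr)\Bigl(|f(0)|^2+\int_0^t w(y)|f'(y)|^2\,\mathrm dy\Bigr)+\int_t^\infty w(y)|f'(y)|^2\,\mathrm dy\\
&\le\Bigl(1+\frac{t}{w(0)}\Bigr)\|f\|_{H_w}^2 .
\end{align*}
Hence $\|S(t)\|\le\mathrm e^{\omega t}$ with $\omega=1/(2w(0))$; the first line also gives your uniform bound $\|S(t)\|^2\le1+\|w^{-1}\|_{L^1}$. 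Now write
$$
Z_t=\mathrm e^{\omega t}\int_0^t\mathrm e^{-\omega(t-s)}S(t-s)\,\mathrm e^{-\omega s}\sigma_s\,\mathrm dW_s .
$$
Kotelenez's maximal inequality for the contraction semigroup $\mathrm e^{-\omega t}S(t)$ then yields a continuous modification, using only the $L^2$ integrability in (A3).

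\emph{Second, the factorisation fallback does not work here.} Stopping times can make $\int_0^\tau\|\sigma_s\|_{\mathcal L_2}^2\,\mathrm ds$ bounded. They cannot upgrade time-integrability from $L^2(0,T)$ to $L^p(0,T)$ with $p>2$, and (A3) permits $\int_0^T\|\sigma_s\|_{\mathcal L_2}^p\,\mathrm ds=\infty$, even for deterministic $\sigma$. Likewise, ``simply citing'' \cite[Thm.~5.4]{DaPratoZabczyk2014}, as the paper does, identifies the solution with the mild formula. It is not the source of path continuity for a stochastic convolution with a random predictable, merely square-integrable integrand. Under (A3), continuity genuinely rests on the quasi-contractivity of the Musiela shift, and your corrected Kotelenez step is what supplies it.
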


Let $(e_i)_{i\ge 1}$ be an orthonormal basis of $G$ with independent standard Brownian motions $(\beta^i)_{i\ge 1}$ 
so that formally $W_t=\sum_{i\ge 1} e_i\,\beta_t^i$. Then
$$
\sigma_t\,\mathrm{d}W_t
=
\sum_{i\ge 1} \sigma_t e_i\,\mathrm{d}\beta_t^i,
\qquad
\sigma_t^{(i)}:=\sigma_t e_i\in H_w.
$$
and we see that finite rank of $\sigma_t$ yields an $n$-factor HJM representation. Other operator-theoretic preliminaries (the bounded, compact, Hilbert--Schmidt and trace-class ideals
$\mathcal{L}(G,H)$, $\mathcal{K}(G,H)$, $\mathcal{L}_2(G,H)$, $\mathcal{L}_1(G,H)$, together with
the relevant composition bounds) are collected in Appendix~\ref{app:operator-prelim}.

\subsection{Dynamic Trading in Forward Markets}
\label{sec:trading-strategies}

In the idealised forward market the investable universe is continuously indexed by maturity: at
time $t\in[0,T]$, one may trade the family $f_t(x)=F(t,t+x)$ for $x\ge0$. We model
forward-curve positions as continuous linear functionals on the state space $H$.

Let $H^\ast$ denote the dual space of $H$, and $\langle\cdot,\cdot\rangle_{H^\ast,H}$ the duality pairing. 
Since point evaluations are continuous on $H$, we have $\delta_x \in H^\ast$ for all $x \ge 0$. 
Choosing $\phi_t \in H^\ast$ as the forward-curve position ensures that the portfolio wealth depends continuously on the forward prices $f_t \in H$.

\begin{definition}\label{def:generalized-trading-strategy-wealth}
A \emph{trading strategy} is a predictable process $\Phi_t = (\psi_t, \phi_t) \in \mathbb{R} \times H^\ast$, 
where $\psi_t$ denotes the bank-account position and $\phi_t$ the forward-curve position. 
The associated \emph{wealth process} is defined by
$$
U_t(\Phi)
=
\psi_t
+
\langle \phi_t, f_t\rangle_{H^\ast,H},
\qquad 0 \leq t \leq T,
$$
where $f_t \in H$ denotes the Musiela forward curve.
\end{definition}

\begin{example}[Single–maturity forward]\label{ex:single-forward}
Holding one forward contract with delivery at date $T$ (corresponding to time-to-maturity $T-t$) together with the appropriate cash compensator yields
$$
U_t 
= 
\delta_{\,T-t}(f_t) - F(0,T),
\qquad 0 \leq t \leq T,
$$
which is consistent with Definition~\ref{def:generalized-trading-strategy-wealth}, 
since $\delta_{\,T-t} \in H^\ast$.
\end{example}

\paragraph{Point forwards and delivery-window contracts.}
The single-maturity functional $\delta_x$ models an idealised forward with a single delivery date
$t+x$, which is the natural traded instrument in interest-rate and futures markets. In commodity
and especially electricity markets, however, the liquidly traded contracts deliver over a
\emph{period}: a base-load contract over the time-to-delivery window $[a,b]$ pays the average price
$\frac{1}{b-a}\int_a^b F_t(t+u)\,\mathrm du$ (see~\cite[Ch.~4]{BBK08} and~\cite{Koekebakker2005ForwardCD}
for the structure and dynamics of such delivery-period swaps), so its forward-curve position is the
\emph{delivery-window averaging functional}
\begin{align}\label{eq:window-functional}
\ell_{[a,b]}:=\frac{1}{b-a}\int_a^b\delta_u\,\mathrm du\in H^\ast,
\qquad
\ell_{[a,b]}(f)=\frac{1}{b-a}\int_a^b f(u)\,\mathrm du,
\quad 0\le a<b\le\Theta_{\max}.
\end{align}
Boundedness $\ell_{[a,b]}\in H^\ast$ is immediate: since evaluation is continuous, the representer
$u_{[a,b]}\in H$, defined by the Bochner integral $u_{[a,b]}:=\frac1{b-a}\int_a^b u_x\,\mathrm dx$,
satisfies $\langle f,u_{[a,b]}\rangle_H=\ell_{[a,b]}(f)$, so that
$$
\|\ell_{[a,b]}\|_{H^\ast}=\|u_{[a,b]}\|_H\le\sup_{x\in[a,b]}\|u_x\|_H<\infty .
$$
Point forwards are recovered as the degenerate limit $\ell_{[a,b]}\to\delta_a$ as $b\downarrow a$,
since $u_{[a,b]}\to u_a$ in $H$ by continuity of $x\mapsto u_x$.

In practice, investors can hold only finitely many such contracts at each time $t$, which motivates
the notion of a \emph{realistic trading strategy}. We allow both instrument types: point forwards
(interest-rate/futures markets) and delivery-window averages (energy markets).

\begin{definition}\label{def:realistic-trading-strategy}
We call a strategy $\Phi_t=(\psi_t,\phi_t)$ \emph{realistic} if
$$
\phi_t = \sum_{i=1}^{n(t)} q_i(t)\,\ell_i,
\qquad
\ell_i\in\{\delta_{\,x_i(t)}\}\cup\{\ell_{[a_i,b_i]}\},
$$
a finite predictable combination of point-forward functionals and/or delivery-window averaging
functionals~\eqref{eq:window-functional}, with $n(t)\in\mathbb{N}$ and predictable weights
$q_i(t)\in\mathbb{R}$. For the density arguments below it is enough to allow deterministic maturity
grids (resp.\ fixed delivery windows) and predictable weights.
\end{definition}

In the idealised continuum market, $\delta_x$ denotes the forward or futures contract with current
time-to-maturity $x$. A simple strategy that holds $\delta_x$ over a time interval is therefore read
as a frictionless rolling position in the corresponding maturity bucket. Fixed calendar-date
buy-and-hold forwards, with functional $\delta_{\tau-t}$, are included as elementary examples, but
they do not exhaust the idealised gain space used for the density result. A fixed exchange ladder of
delivery periods is treated separately through the bucket gap in
Remark~\ref{rmk:delivery-buckets}.

\begin{center}
\begin{tabular}{ll}
\hline
Term & Meaning\\
\hline
Generalised strategy & predictable integrand in the covariance-norm completion\\
Realistic strategy & finite predictable combination of traded point or window functionals\\
Finite-maturity strategy & realistic strategy using finitely many point forwards\\
Finite-bucket strategy & realistic strategy using a fixed delivery ladder\\
Strictly viable strategy & pointwise $H^\ast$-valued representative\\
Implemented strategy & finite-rank hedge projected onto available market instruments\\
\hline
\end{tabular}
\end{center}

Variance–optimal hedges derived later live in the covariance-norm completion and need not be
realistic pointwise. Finite-maturity rolling portfolios are nevertheless dense in the relevant
hedging norm, and the same density holds for refinable delivery-window systems. A fixed exchange
ladder is finite and need not be dense; its residual bucket gap is treated separately in
Remark~\ref{rmk:delivery-buckets}.

\begin{proposition}[Static density of maturity evaluations and window averages]\label{prop:density}
The linear span of $\{\delta_x\}_{x\geq 0}$ is dense in $H^\ast$, and so is the linear span of the
delivery-window averaging functionals $\{\ell_{[a,b]}:0\le a<b\le\Theta_{\max}\}$.
\end{proposition}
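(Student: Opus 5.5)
We argue by annihilators and Riesz representation. Since $H=H_w$ is a Hilbert space, the Riesz map $f\mapsto\langle\cdot,f\rangle_H$ identifies $H^\ast$ isometrically with $H$. Under this map $\delta_x$ corresponds to the representer $u_x$ and $\ell_{[a,b]}$ to $u_{[a,b]}$. A linear subspace of a Hilbert space is dense if and only if its orthogonal complement is $\{0\}$. It therefore suffices, for each family, to show that any $g\in H$ orthogonal to all representers vanishes.

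For point evaluations, suppose $\langle g,u_x\rangle_H=0$ for all $x\ge0$. By the reproducing property this reads $g(x)=0$ for every $x\ge0$. Hence $g\equiv0$ as a function, so $g(0)=0$ and $g'=0$ a.e. Thus $\|g\|_{w}^2=|g(0)|^2+\int w|g'|^2=0$, and $g=0$ in $H$. One point needs care: elements of $H_w$ are genuine absolutely continuous functions rather than a.e.-classes, so pointwise vanishing really is vanishing of the element.

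For the window functionals, suppose $\ell_{[a,b]}(g)=\frac1{b-a}\int_a^b g(u)\,\mathrm du=0$ for all $0\le a<b<\infty$. Two routes give $g\equiv0$. Since $g$ is continuous, Lebesgue differentiation, or just the continuity of $g$ at $a$, gives $g(a)=\lim_{b\downarrow a}\ell_{[a,b]}(g)=0$ for every $a$, and we conclude as before. Alternatively, we can use the fact noted before Definition~\ref{def:realistic-trading-strategy} that $u_{[a,b]}\to u_a$ in $H$ as $b\downarrow a$. Then every $\delta_a$ lies in the closure of the span of the window functionals, and density follows from the first part. I would present the second route, since it reuses the first part and does not need to re-examine $g$.

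There is no serious obstacle. The only step deserving a line of justification is the continuity of $x\mapsto u_x$ in $H$, which underlies $u_{[a,b]}\to u_a$. I would verify it directly from $\|u_x-u_y\|_H^2=u_x(x)-2u_x(y)+u_y(y)$ together with the explicit kernel of $H_w$, namely $\langle u_x,u_y\rangle=1+\int_0^{x\wedge y}w^{-1}$, which is jointly continuous. Alternatively, the first route avoids this entirely, and I would include it as a remark.
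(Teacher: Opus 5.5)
Your proposal is correct and follows the paper's proof. The paper passes through the Riesz isometry and cites the standard RKHS fact that kernel sections span a dense subspace; the proof of that fact is exactly your annihilator argument $\langle g,u_x\rangle_H=g(x)=0$ for all $x$ $\Rightarrow g=0$. It then treats the window functionals via $\ell_{[a,a+\varepsilon]}\to\delta_a$, which is your preferred second route. Your explicit check $\|u_x-u_y\|_H^2=\int_x^y w^{-1}$ for $x<y$, computed from the kernel $1+\int_0^{x\wedge y}w^{-1}$, fills in the continuity of $x\mapsto u_x$ that the paper only asserts.
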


\begin{proof}
Let $J\colon H\to H^\ast$ be the Riesz isometry and $u_x:=J^{-1}\delta_x\in H$ the representer of
evaluation at $x$; by the reproducing property, $\langle f,u_x\rangle_H=f(x)$ for all $f\in
H$. Since $J$ is an isometry, density of $\mathrm{span}\{\delta_x\}$ in $H^\ast$ is equivalent to
density of $\mathrm{span}\{u_x\}$ in $H$, which holds in any reproducing kernel Hilbert space as
kernel sections span a dense subspace. Hence $\overline{\mathrm{span}\{\delta_x\}}=H^\ast$.
For the window functionals, $\ell_{[a,a+\varepsilon]}\to\delta_a$ in $H^\ast$ as
$\varepsilon\downarrow0$ (shown above), so the closed span of
$\{\ell_{[a,b]}\}$ contains every $\delta_a$, and therefore equals $H^\ast$ as well.
\end{proof}

\begin{lemma}[Predictable finite-maturity density]\label{lem:predictable-finite-maturity-density}
The predictable simple processes with values in finite linear spans of the evaluation
functionals $\{\delta_x:x\ge0\}$ are dense in
$\Lambda_T^{2}(H,\mathbb R;M)$. The same holds with the delivery-window averaging functionals
$\{\ell_{[a,b]}\}$ of~\eqref{eq:window-functional} in place of the evaluations.
\end{lemma}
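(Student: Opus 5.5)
The plan is to reduce the statement to two density facts: a predictable one (simple processes are dense) and a static one (Proposition~\ref{prop:density}). The two are linked by a single domination estimate, under which the covariance norm is controlled by the $H^\ast$-norm weighted by the trace of $\Sigma_s$. I read $\Lambda_T^{2}(H,\mathbb R;M)$ as the quotient completion of predictable $H^\ast$-valued processes $\phi$ under
\begin{align*}
\|\phi\|_{\Lambda}^2=\mathbb E\int_0^T \bigl\|\sigma_s^\ast J^{-1}\phi_s\bigr\|_G^2\,\mathrm ds ,
\end{align*}
that is, $\mathbb E\int_0^T\langle \phi_s,\Sigma_s\phi_s\rangle\,\mathrm ds$ modulo null directions. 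Since $\|\sigma_s^\ast J^{-1}h\|_G\le \|\sigma_s\|_{\mathcal L_2(G,H)}\|h\|_{H^\ast}$, we get the key bound
\begin{align*}
\|\phi\|_{\Lambda}^2\le \int_{\Omega\times[0,T]}\|\phi_s\|_{H^\ast}^2\,\mathrm d\mu,
\qquad \mathrm d\mu:=\|\sigma_s\|_{\mathcal L_2(G,H)}^2\,\mathrm d\mathbb P\,\mathrm ds .
\end{align*}
By (A3), $\mu$ is a \emph{finite} measure on the predictable $\sigma$-field $\mathcal P_T$.

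\textbf{Step 1 (simple predictable processes).} By construction of the completion, predictable $\phi$ with $\|\phi\|_\Lambda<\infty$ are dense in $\Lambda^2_T$. I first truncate, setting $\phi^{(n)}:=\phi\,\mathbf 1_{\{\|\phi\|_{H^\ast}\le n\}}$. Then $\|\phi-\phi^{(n)}\|_\Lambda\to0$ by dominated convergence, since the integrand $\langle\phi_s,\Sigma_s\phi_s\rangle$ is integrable and is killed pointwise. A bounded $\phi^{(n)}$ lies in $L^2(\Omega\times[0,T],\mathcal P_T,\mu;H^\ast)$ because $\mu$ is finite. There, simple processes $\sum_k \mathbf 1_{A_k\times(s_k,s_{k+1}]}h_k$ with $A_k\in\mathcal F_{s_k}$ and $h_k\in H^\ast$ are dense. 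This follows from separability of $H^\ast$ and the fact that predictable rectangles generate $\mathcal P_T$ and form a $\pi$-system (monotone class / approximation of measurable sets by finite unions of rectangles under the finite measure $\mu$). The key bound then transfers this $L^2(\mu)$ density into $\Lambda$-density.

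\textbf{Step 2 (replacing $h_k$ by maturities or windows).} For a simple process, I use Proposition~\ref{prop:density} to choose $g_k\in\mathrm{span}\{\delta_x\}$ with $\|h_k-g_k\|_{H^\ast}<\varepsilon$; alternatively $g_k\in\mathrm{span}\{\ell_{[a,b]}\}$ for the window version. The key bound gives
\begin{align*}
\Bigl\|\sum_k \mathbf 1_{A_k\times(s_k,s_{k+1}]}(h_k-g_k)\Bigr\|_\Lambda^2\le \varepsilon^2\,\mathbb E\int_0^T\|\sigma_s\|_{\mathcal L_2}^2\,\mathrm ds ,
\end{align*}
which is arbitrarily small. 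The approximant has deterministic maturity grids (respectively fixed windows) and predictable weights $\mathbf 1_{A_k}$, as allowed in Definition~\ref{def:realistic-trading-strategy}. Combining the two steps with the triangle inequality proves both assertions.

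\textbf{Main obstacle.} I expect the hard part to be Step~1. The $\Lambda$-norm is only a seminorm on $H^\ast$-valued processes, and it is weighted by a stochastic, possibly degenerate covariance. Hence one cannot directly invoke the standard density of simple processes in $L^2(\mathbb P\otimes\mathrm dt)$. The device of dominating by the finite measure $\mu$ after truncation is what I would try first. The care needed is in checking that generic elements of the completion, which may have no pointwise $H^\ast$ representative, are indeed limits of bounded predictable $H^\ast$-valued processes. This holds by definition of the completion, but should be stated explicitly.
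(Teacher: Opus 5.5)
Your proof is correct, and its core (Step 2) is the paper's argument: take a simple predictable integrand, replace each $H^\ast$-coefficient by an element of $\mathrm{span}\{\delta_x\}$ or $\mathrm{span}\{\ell_{[a,b]}\}$ via Proposition~\ref{prop:density}, and control the error through $\|\phi_t\|_{H^\ast}^2\,\mathrm{Tr}\,Q_M(t)$, which is integrable by (A3). Step 1 is not needed under Definition~\ref{def:Lambda2-star}, where $\Lambda_T^2(H,\mathbb R;M)$ is by definition the completion of simple predictable processes; it is nonetheless valid, and it shows that every predictable $H^\ast$-valued process of finite covariance norm lies in that completion, a fact the paper uses implicitly (e.g.\ for $\delta_{T-t}$ in Example~\ref{ex:single-forward-admissibility} and in defining $\hat\Lambda$).
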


\begin{proof}
By definition, $\Lambda_T^{2}(H,\mathbb R;M)$ is the covariance-norm completion of simple
predictable $H^\ast$-valued processes. It therefore suffices to approximate an arbitrary simple
process
$$
  \phi_t(\omega)=\sum_{j=1}^J \xi_j(\omega)\ell_j\,\mathbf 1_{(s_j,t_j]}(t),
$$
where $\xi_j$ are bounded $\mathcal F_{s_j}$-measurable random variables and
$\ell_j\in H^\ast$. By Proposition~\ref{prop:density}, choose
$\ell_j^{(m)}\in\mathrm{span}\{\delta_x:x\ge0\}$ (or, for the window statement,
$\ell_j^{(m)}\in\mathrm{span}\{\ell_{[a,b]}\}$) with
$\|\ell_j^{(m)}-\ell_j\|_{H^\ast}\to0$ for each $j$, and set
$$
  \phi_t^{(m)}(\omega)=\sum_{j=1}^J \xi_j(\omega)\ell_j^{(m)}\,\mathbf 1_{(s_j,t_j]}(t).
$$
Then $\phi^{(m)}$ is predictable and finite-maturity valued. Moreover,
$$
\|\phi^{(m)}-\phi\|_{\Lambda_T^2(M)}^2
\le
\mathbb E\int_0^T
\|\phi_t^{(m)}-\phi_t\|_{H^\ast}^2\,
\|Q_M^{1/2}(t)\|_{\mathcal L_2(H)}^2\,\mathrm dt .
$$
Here
$$
\|Q_M^{1/2}(t)\|_{\mathcal L_2(H)}^2
=\mathrm{Tr}\,Q_M(t)=\sum_k\lambda_k(t),
$$
which is not the Hilbert--Schmidt square
$\|Q_M(t)\|_{\mathcal L_2(H)}^2=\sum_k\lambda_k(t)^2$.
The integrand is dominated by a constant multiple of
$\mathrm{Tr}\,Q_M(t)=\|\sigma_t\|_{\mathcal L_2(G,H)}^2$, which is integrable by (A3), and it
converges pointwise to zero. Dominated convergence yields the claim.
\end{proof}

\subsubsection{The Self-Financing Condition and the Gains Process}

Self-financing gains are defined through stochastic integrals against the martingale component of
the forward curve.

Let $H$ denote the forward-curve space from Section~\ref{sec:forward-curve-dynamics}. 
Recall the HJMM dynamics in the abstract SPDE form~\eqref{eq:HJMM-SPDE}, and define the $H$-valued, square-integrable martingale component of the forward curve by
$$
M_t \coloneqq \int_0^t \sigma_s\,\mathrm{d}W_s,
\qquad t \in [0,T].
$$
Then, for all $h, k \in H$, the quadratic covariation satisfies
$$
\big[\langle M, h\rangle_H,\, \langle M, k\rangle_H\big]_t
=
\int_0^t \langle \Sigma_s h,\, k\rangle_H\,\mathrm{d}s,
\qquad
\Sigma_s \coloneqq \sigma_s \sigma_s^{\ast} \in \mathcal{L}(H),
$$
so that $\Sigma_s$ is the predictable, self-adjoint, non-negative covariance operator of $M$ at time $s$.
Equivalently, the (scalar) Doléans measure associated with $M$ is
$\lambda_M(\mathrm{d}s, \mathrm{d}\omega) = \|\sigma_s\|_{\mathcal{L}_2(G,H)}^{2}\,\mathrm{d}s\,\mathbb{P}(\mathrm{d}\omega) = \mathrm{Tr}(\Sigma_s)\,\mathrm{d}s\,\mathbb{P}(\mathrm{d}\omega)$,
while the operator-valued angle-bracket measure is
$\alpha_M(\mathrm{d}s, \mathrm{d}\omega) = \Sigma_s\,\mathrm{d}s\,\mathbb{P}(\mathrm{d}\omega)$ with values in the trace-class ideal.

In this setting, the self-financing condition links the evolution of the wealth process 
$U_t(\Phi)$ from Definition~\ref{def:generalized-trading-strategy-wealth} to the stochastic integral with respect to $M$. 
Trading gains and losses thus arise solely from the dynamics of the forward curve, 
without any external capital injections or withdrawals.

We use the standard Hilbert-space stochastic integral with respect to cylindrical Brownian motion;
see, for example,~\cite{DaPratoZabczyk2014,PeszatZabczyk2007,Fil01}. A bounded forward-curve portfolio $\phi_s\in H^\ast=\mathcal L(H,\mathbb R)$ acts on the martingale
$M$ through the composition $\phi_s\sigma_s$. Equivalently, since
$Q_M(s):=\Sigma_s=\sigma_s\sigma_s^\ast$, its quadratic risk is measured by
$\phi_sQ_M^{1/2}(s)$. The relevant norm is deterministic and comes from the Itô isometry.

\begin{definition}[Admissible integrands and covariance norm]
\label{def:Lambda2-star}
Let $\mathcal E(H^\ast)$ be the vector space of $H^\ast$-valued simple predictable processes on
$[0,T]$. For $\phi,\psi\in\mathcal E(H^\ast)$ set
\begin{align}
\langle\phi,\psi\rangle_{\Lambda_T^2(M)}
&:=
\mathbb E\int_0^T
\langle \phi_t Q_M^{1/2}(t),\psi_t Q_M^{1/2}(t)\rangle_{\mathcal L_2(H,\mathbb R)}
\,\mathrm dt, \label{eq:lambda-inner-product}\\
\|\phi\|_{\Lambda_T^2(M)}^2
&:=
\mathbb E\int_0^T
\|\phi_t Q_M^{1/2}(t)\|_{\mathcal L_2(H,\mathbb R)}^2\,\mathrm dt
=
\mathbb E\int_0^T\|\phi_t\sigma_t\|_{\mathcal L_2(G,\mathbb R)}^2\,\mathrm dt.
\label{eq:lambda-norm}
\end{align}
Let $\mathcal N_M:=\{\phi\in\mathcal E(H^\ast):\|\phi\|_{\Lambda_T^2(M)}=0\}$. The admissible
integrand space is the Hilbert-space completion
$$
\Lambda_T^{2}(H,\mathbb R;M)
:=
\overline{\mathcal E(H^\ast)/\mathcal N_M}^{\,\|\cdot\|_{\Lambda_T^2(M)}}.
$$
Two predictable representatives that differ only on covariance-null directions are identified.
\end{definition}

If a predictable, possibly unbounded, linear functional $\phi_t$ is defined on
$\operatorname{Ran}Q_M^{1/2}(t)$ and can be approximated in the norm
\eqref{eq:lambda-norm} by elements of $\mathcal E(H^\ast)$, we use the same symbol for the induced
element of $\Lambda_T^2(H,\mathbb R;M)$. Such a representative need not be a pointwise viable
portfolio; only its covariance-norm equivalence class is part of the Hilbert space.

\begin{proposition}[Stochastic integral isometry]\label{prop:ito-isometry-M}
For every $\phi\in\Lambda_T^{2}(H,\mathbb R;M)$ there exists a unique square-integrable martingale
$$
G_t(\phi)=\int_0^t\phi_s\,\mathrm dM_s,\qquad t\in[0,T],
$$
obtained by completion from simple predictable integrands. For
$\phi,\psi\in\Lambda_T^{2}(H,\mathbb R;M)$,
\begin{align}\label{eq:isometry-closure}
\mathbb E\bigl[G_T(\phi)G_T(\psi)\bigr]
=
\langle\phi,\psi\rangle_{\Lambda_T^2(M)}.
\end{align}
In particular, if $\phi_t(h)=\langle\xi_t,h\rangle_H$ for a predictable $H$-valued representative,
then
$$
\|\phi\|_{\Lambda_T^2(M)}^2
=
\mathbb E\int_0^T\|Q_M^{1/2}(t)\xi_t\|_H^2\,\mathrm dt.
$$
\end{proposition}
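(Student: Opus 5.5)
We construct the integral on simple integrands, check the isometry there, and extend by density, since $\Lambda_T^2(H,\mathbb R;M)$ is by definition the completion of $\mathcal E(H^\ast)/\mathcal N_M$.

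\emph{Simple integrands.} For $\phi_t=\sum_{j=1}^J \xi_j\ell_j\mathbf 1_{(s_j,t_j]}(t)$, with $\xi_j$ bounded and $\mathcal F_{s_j}$-measurable and $\ell_j\in H^\ast$, set
\[
G_t(\phi):=\sum_j \xi_j\,\ell_j\bigl(M_{t\wedge t_j}-M_{t\wedge s_j}\bigr).
\]
By Proposition~\ref{prop:mild-solution}, $M$ is a square-integrable continuous $H$-valued martingale, so each $\ell_j(M)$ is a continuous scalar martingale and $G(\phi)$ is a square-integrable continuous martingale. After refining to a common partition, the cross terms over disjoint intervals vanish by the martingale property. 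On a common interval the quadratic covariation formula $[\langle M,h\rangle,\langle M,k\rangle]_t=\int_0^t\langle\Sigma_sh,k\rangle\,\mathrm ds$, applied with Riesz representers $h,k$ of $\ell_j,\ell_i$, gives
\[
\mathbb E\bigl[G_T(\phi)G_T(\psi)\bigr]=\mathbb E\int_0^T\langle \Sigma_t\phi_t^\sharp,\psi_t^\sharp\rangle_H\,\mathrm dt .
\]
Here $\phi^\sharp$ denotes the $H$-valued representative. Since $\Sigma_t=Q_M(t)=Q_M^{1/2}(t)Q_M^{1/2}(t)$, the integrand equals $\langle Q_M^{1/2}\phi_t^\sharp,Q_M^{1/2}\psi_t^\sharp\rangle_H=\langle\phi_tQ_M^{1/2},\psi_tQ_M^{1/2}\rangle_{\mathcal L_2(H,\mathbb R)}$, because rank-one operators from $H$ to $\mathbb R$ have Hilbert--Schmidt norm equal to the $H$-norm of their representer. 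This is \eqref{eq:isometry-closure} on $\mathcal E(H^\ast)$. The bracket identity itself only uses the fact that $\ell(M)=\int \ell\sigma\,\mathrm dW$ is a real Itô integral against cylindrical Brownian motion, with quadratic variation $\int\|\ell\sigma_s\|_{\mathcal L_2(G,\mathbb R)}^2\mathrm ds$, which follows from (A3). This also explains the second equality in \eqref{eq:lambda-norm}.

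\emph{Extension.} The isometry shows that $G_T(\phi)=0$ a.s. whenever $\phi\in\mathcal N_M$. By Doob's inequality the whole process $G(\phi)$ then vanishes, so $\phi\mapsto G(\phi)$ descends to a well-defined linear isometry from $\mathcal E(H^\ast)/\mathcal N_M$ into $\mathcal M^2_T$, the space of square-integrable martingales with norm $\mathbb E[X_T^2]^{1/2}$. This space is complete, and Doob's $L^2$ inequality upgrades convergence at $T$ to uniform-in-time convergence in $L^2$, which preserves continuity of paths. The isometry therefore extends uniquely to the completion $\Lambda_T^2(H,\mathbb R;M)$. Uniqueness follows because any two continuous extensions agree on a dense set. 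The polarised identity \eqref{eq:isometry-closure} passes to the limit by continuity of inner products.

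\emph{Main obstacle: the final formula.} If $\phi_t=\langle\xi_t,\cdot\rangle_H$ with $\xi$ predictable and $H$-valued, the norm formula for simple $\xi$ is the computation above. The subtle point is identifying the element of the completion when $\xi$ is general, for instance with $\mathbb E\int_0^T\|Q_M^{1/2}\xi_t\|^2\,\mathrm dt<\infty$ but $\xi$ itself not square integrable. I would first truncate, setting $\xi^{(n)}:=\xi\mathbf 1_{\{\|\xi\|\le n\}}$, which is bounded and predictable. Bounded predictable $H$-valued processes are $L^2(\mathrm dt\otimes\mathbb P)$-limits of simple ones, and the integrability $\mathbb E\int\mathrm{Tr}\,\Sigma_t\,\mathrm dt<\infty$ from (A3) turns this into $\Lambda^2_T$-convergence, by the same dominated-convergence estimate used in Lemma~\ref{lem:predictable-finite-maturity-density}. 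Then $\xi^{(n)}\to\xi$ in the seminorm $\mathbb E\int\|Q_M^{1/2}(\cdot)\|^2$ by dominated convergence, since $\|Q_M^{1/2}(\xi-\xi^{(n)})\|\le\|Q_M^{1/2}\xi\|$. Hence the $\xi^{(n)}$ are Cauchy, define $\phi$ in the completion, and the norm identity passes to the limit.
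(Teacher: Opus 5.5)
Your proof is correct and follows the same route as the paper: the Itô isometry on simple integrands, descent to $\mathcal E(H^\ast)/\mathcal N_M$, and unique isometric extension to the completion, with the final display resting on the pointwise identity $\|\phi_tQ_M^{1/2}(t)\|_{\mathcal L_2(H,\mathbb R)}=\|Q_M^{1/2}(t)\xi_t\|_H$. Your truncation argument for a general predictable $\xi$ adds a justification that the paper leaves to its convention on representatives. In that step, choose the simple approximants of $\xi^{(n)}$ uniformly bounded and $\mathrm dt\otimes\mathbb P$-a.e.\ convergent, because $L^2(\mathrm dt\otimes\mathbb P)$-convergence alone does not control $\mathbb E\int_0^T\mathrm{Tr}\,Q_M(t)\,\|\cdot\|_H^2\,\mathrm dt$ when $\mathrm{Tr}\,Q_M$ is merely integrable.
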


\begin{proof}
For simple predictable $H^\ast$-valued integrands the stochastic integral is the usual Hilbert-space
Itô integral applied to the Hilbert--Schmidt operator $\phi_t\sigma_t:G\to\mathbb R$. The scalar
Itô isometry gives
$$
\mathbb E\bigl[G_T(\phi)G_T(\psi)\bigr]
=
\mathbb E\int_0^T
\langle \phi_t\sigma_t,\psi_t\sigma_t\rangle_{\mathcal L_2(G,\mathbb R)}\,\mathrm dt
=
\langle\phi,\psi\rangle_{\Lambda_T^2(M)}.
$$
The quotient by $\mathcal N_M$ identifies precisely the simple integrands with zero integral
norm. Hence the integral map descends to an isometry on
$\mathcal E(H^\ast)/\mathcal N_M$ and extends uniquely to its Hilbert-space completion. If
$\phi_t(h)=\langle\xi_t,h\rangle_H$, then
$$
\|\phi_tQ_M^{1/2}(t)\|_{\mathcal L_2(H,\mathbb R)}^2
=
\|Q_M^{1/2}(t)\xi_t\|_H^2,
$$
which gives the final display.
\end{proof}

\subsubsection{Self-financing gains by closure}

A portfolio is \emph{self–financing} if no external cash is injected after the
initial date: purchases of new forwards are financed solely by selling others.
We work under the simplifying assumption of a zero short rate, so cash holdings
earn no interest. In an infinite-dimensional forward market it is most convenient
to define self-financing gains directly as the $L^2$-closure of stochastic
integrals of \emph{simple} finite-maturity integrands against
$M=\int\sigma\,\mathrm{d}W$. Fixed calendar-date buy-and-hold forwards form a regular
sub-class for which this definition agrees with the usual calendar-time wealth equation.

\paragraph{Simple strategies and their gains.}
A \emph{simple} finite-maturity strategy is a finite predictable combination of point or
window functionals. For instance, with $0\le t_1<\dots<t_K\le T$, maturity coordinates
$x_1,\dots,x_n\ge0$ and bounded $\mathcal F_{t_k}$-measurable weights $q_{k,i}$,
$$
\phi_s(\omega)=\sum_{k=1}^{K}\sum_{i=1}^{n}q_{k,i}(\omega)\,\delta_{x_i}\,\mathbf 1_{(t_k,t_{k+1}]}(s)
$$
is a rolling time-to-maturity strategy in the idealised continuum market, and its gain is
defined by
\begin{align}\label{eq:SF-musiela}
G_t(\phi)=\int_0^t \phi_s\,\sigma_s\,\mathrm{d}W_s=\int_0^t \phi_s\,\mathrm{d}M_s.
\end{align}
For a fixed calendar maturity $T_i$, the corresponding buy-and-hold strategy uses
$\delta_{T_i-s}$ instead. In the arithmetic martingale specification ($g\equiv0$), its
calendar-time gain is the elementary finite sum
$$
G_t(\phi)=\sum_{k=1}^{K}\sum_{i=1}^{n}q_{k,i}\,\bigl(F(t\wedge t_{k+1},T_i)-F(t\wedge t_k,T_i)\bigr).
$$
Using the Musiela parametrisation $f_s(x):=F(s,s+x)$ together with the identity
$\mathrm{d}F(s,T_i)=\delta_{T_i-s}\,\sigma_s\,\mathrm{d}W_s$ (which follows from the buy-and-hold
self-financing example~\ref{ex:single-forward-admissibility} below) this admits the stochastic-integral representation
\begin{align*}
G_t(\phi)=\int_0^t \phi_s\,\sigma_s\,\mathrm{d}W_s=\int_0^t \phi_s\,\mathrm{d}M_s.
\end{align*}

\paragraph{Self-financing condition by closure.}
We \emph{define} the space of self-financing gains $\mathcal G(M)$ as the
$L^2(\mathbb{P})$-closure of the linear span of all simple finite-maturity gains
$G_T(\phi)$. By the Itô isometry,
$$
\mathbb E[G_T(\phi)^2]=\|\phi\|_{\Lambda_T^{2}(M)}^{2}
=\mathbb E\int_0^T \|\phi_s\,\sigma_s\|_{\mathcal L_2(G,\mathbb R)}^{2}\,\mathrm{d}s,
$$
the map $\phi\mapsto G_T(\phi)$ extends to an isometry of $\Lambda_T^{2}(H,\mathbb R;M)$ onto
$\mathcal G(M)$. For an $H^\ast$-valued representative regular enough that $\phi_t(g_t)$ is defined,
the wealth obeys the formal equation
\begin{align}\label{eq:SF-final}
\mathrm{d}U_t=\phi_t\,g_t\,\mathrm{d}t+\phi_t\,\sigma_t\,\mathrm{d}W_t ;
\end{align}
the admissible gain space used below is, however, only the $L^2$-closure of the martingale gains
$\int\phi\,\mathrm dM$, which is all the martingale-case optimisation ($g\equiv0$) requires. The bank-account position is then reconstructed from the initial endowment and
the gains process; the optimisation below is carried out directly on stochastic gains, not on a
separately chosen decomposition $U_t=\psi_t+\phi_t(f_t)$. Under any pricing measure $\mathbb{Q}$ for which the arithmetic forwards
are martingales ($g\equiv 0$), the self-financing condition reduces to
\begin{align}\label{eq:SF-pricing}
\mathrm{d}U_t=\phi_t\,\sigma_t\,\mathrm{d}W_t,\qquad 0\le t\le T.
\end{align}

\begin{remark}[Calendar-time formal derivation]
Heuristically, for a sufficiently regular calendar-time portfolio $\varphi(t,\cdot)$ and
$f_t\in\mathcal{D}(\mathcal{A})$, the Riesz-representable wealth
$$
U_t=\langle\varphi_t,f_t\rangle_{H_w}
$$
satisfies
$$
\mathrm{d}U_t
=\langle\varphi_t,\mathrm{d}f_t\rangle_{H_w}
-\langle\varphi_t,\partial_x f_t\rangle_{H_w}\,\mathrm{d}t
$$
in Musiela coordinates $\varphi_t(x):=\varphi(t,t+x)$, $f_t(x):=F(t,t+x)$. Substituting
$$
\mathrm{d}f_t=(\partial_x f_t+g_t)\,\mathrm{d}t+\sigma_t\,\mathrm{d}W_t
$$
formally yields~\eqref{eq:SF-final}.
A rigorous justification of this formal differentiation requires both
$f_t\in\mathcal{D}(\mathcal{A})$ and $\varphi_t\in\mathcal{D}(\mathcal{A}^\ast)$ for almost every
$t$. Here, we instead define gains by the closure construction above, which avoids differentiating mild
solutions.
\end{remark}

\subsubsection*{Admissible self–financing strategies}

\begin{definition}\label{def:admissible-trading-strategy}
A predictable strategy $\Phi_t=(\psi_t,\phi_t)$ on $\mathcal{T}=[0,T]$ is \emph{admissible} if
\begin{align}\label{eq:admissible-condition}
  \phi \;\in\; \Lambda_T^{2}\bigl(H,\mathbb{R}; M\bigr),
  \qquad M_t:=\int_0^t \sigma_s\,\mathrm{d}W_s.
\end{align}
It is \emph{self–financing} under $\mathbb{Q}$ if its wealth satisfies
\begin{align}\label{eq:self-financing}
  \mathrm{d}U_t \;=\; \phi_t\,\sigma_t\,\mathrm{d}W_t,
  \qquad 0\le t\le T.
\end{align}
We write
$
\mathcal{S}_T^{2}(H,\mathbb{R};M)
$
for the collection of all admissible, self–financing strategies, and abbreviate
$\mathcal{S}_T^{2}$ when the integrator $M$ is clear from context. Here $\phi$ is an element of the
covariance-norm completion $\Lambda_T^{2}(H,\mathbb R;M)$; the pair notation $(\psi_t,\phi_t)$ denotes
a pointwise portfolio only when $\phi$ admits a predictable $H^\ast$-valued representative, and
otherwise only the gain $G_T(\phi)$ and the wealth increment~\eqref{eq:self-financing} are defined.
\end{definition}

\begin{example}[Single–maturity forward position]\label{ex:single-forward-admissibility}
Consider a buy–and–hold position in the forward with delivery at calendar date $T$. Its wealth is
$
U_t=\delta_{\,T-t}(f_t)+c,\ \ t\in[0,T],
$
with bank–account offset $c=-F_0(T)$. The forward–curve component
$\phi_t:=\delta_{\,T-t}\in H^\ast$ is deterministic and therefore predictable.

\smallskip\noindent
\emph{Admissibility.}
Recall $M=\int\sigma\,\mathrm{d}W$. For each $t$,
$Q_M(t)=\sigma_t\sigma_t^\ast\in\mathcal{L}(H)$ is the covariance operator in the sense that
$
\big[\langle M,h\rangle_H,\langle M,k\rangle_H\big]_t
=\int_0^t \langle Q_M(s)h,k\rangle_H\,\mathrm{d}s.
$
Since $\delta_{\,T-t}\in\mathcal{L}(H,\mathbb{R})$ is a bounded functional and
$Q_M^{1/2}(t)\in \mathcal{L}_2(H,H)$, their composition
$\delta_{\,T-t}\,Q_M^{1/2}(t)\in \mathcal{L}_2(H,\mathbb{R})$, with
$$
\|\delta_{\,T-t}\,Q_M^{1/2}(t)\|_{\mathcal{L}_2(H,\mathbb{R})}^2
\;\le\; \|\delta_{\,T-t}\|_{\mathcal{L}(H,\mathbb{R})}^2\,
\|Q_M^{1/2}(t)\|_{\mathcal{L}_2(H,H)}^2
\;=\; \|\delta_{\,T-t}\|^2\,\mathrm{Tr}\bigl(Q_M(t)\bigr).
$$
Assumption (A3) ensures $\mathbb{E}\big[\int_0^T \mathrm{Tr}(Q_M(t))\,\mathrm{d}t\big]
=\mathbb{E}\big[\int_0^T \|\sigma_t\|_{\mathcal{L}_2(G,H)}^2\,\mathrm{d}t\big]<\infty$, hence
$\phi\in\Lambda_T^{2}(H,\mathbb{R};M)$.

\smallskip\noindent
\emph{Self–financing.}
For a buy-and-hold forward position the calendar-time gain is, by definition,
$$
F(t,T)-F(0,T)=\int_0^t \delta_{T-s}\,\mathrm dM_s
$$
under the arithmetic martingale specification. This is one of the elementary gains used in the
closure construction above. Hence the position is self-financing in the sense of
\eqref{eq:SF-musiela}--\eqref{eq:SF-pricing}. No differentiation of a generic mild solution is
needed.
\end{example}

\section{Variance-Optimal Hedging of European Options on Forwards}
\label{sec:Variance-Optimal-Hedging}

This section studies variance-optimal hedging for European claims in the stochastic-volatility
HJMM model. It formulates the infinite-dimensional problem and characterises its solution through
the Hilbert-space Galtchouk--Kunita--Watanabe (GKW) decomposition.

In general, the resulting solution is an element of the covariance-norm completion and need not have
a pointwise bounded $H^\ast$-valued representative. This motivates finite-rank and finite-maturity
approximations: spectral projections of the full GKW integrand converge in the hedging norm, and
finite-factor terminal wealths converge whenever the associated closed gain spaces approximate the
full gain space. The
associated quadratic hedging error then admits a decomposition that separates bucket
implementation, rank truncation, and irreducible residual risk.
These results are developed in detail in Section~\ref{sec:appr-real-portf}.

\subsection{The Infinite-Dimensional Variance-Optimal Hedging Problem}
\label{sec:general-solution}

The variance–optimal hedging problem is the following infinite-dimensional projection problem.

\paragraph{Working measure.}
Throughout this section we fix an equivalent pricing (martingale) measure for the arithmetic
forward curve, that is, a measure under which the drift in~\eqref{eq:HJMM-SPDE} satisfies
$g\equiv0$ and the martingale component $M=\int\sigma\,\mathrm dW$ is a genuine square-integrable
$H$-valued martingale (cf.\ \eqref{eq:SF-pricing} and
Definition~\ref{def:admissible-trading-strategy}). To keep the notation light we denote this
pricing measure by $\mathbb P$ and write $\mathbb E$ for $\mathbb E_{\mathbb P}$; the role of the
statistical (real-world) measure is addressed in Remark~\ref{rmk:martingale-case}.
Consider the filtered probability space
$(\Omega,\mathcal{F},\{\mathcal{F}_t\}_{t\ge0},\mathbb{P})$.
Let $H := H_w$ denote the Filipovi\'c space introduced in Section~\ref{sec:forward-curve-dynamics}.
The forward curve $f_t \in H$ follows the HJMM dynamics
$$
\mathrm{d}f_t = \mathcal{A}f_t\,\mathrm{d}t + \sigma_t\,\mathrm{d}W_t,
\qquad 
\mathcal{A} = \partial_x,
$$
where $W$ is a cylindrical Brownian motion on a separable Hilbert space $G$, 
and $\sigma_t \in \mathcal{L}_2(G,H)$ is predictable. 
Define the $H$–valued, square–integrable martingale component of $f$ by
$$
M_t := \int_0^t \sigma_s\,\mathrm{d}W_s,
\qquad 
\Sigma_s := \sigma_s\sigma_s^\ast \in \mathcal{L}(H),
\qquad t \in [0,T].
$$
For all $h,k \in H$, the quadratic covariation satisfies
$$
\bigl[\langle M,h\rangle_H,\,\langle M,k\rangle_H\bigr]_t
=
\int_0^t \langle \Sigma_s h,\,k\rangle_H\,\mathrm{d}s.
$$

Let $K\in L^2(\mathbb{P})$ be an $\mathcal{F}_T$–measurable payoff representing a European claim written on the forward curve. 
Admissible trading strategies with respect to $M$ are the integrands 
$\phi \in \Lambda_T^{2}(H,\mathbb{R};M)$ from Section~\ref{sec:trading-strategies}, 
whose cumulative gain is given by
$$
G_T(\phi) = \int_0^T \phi_s\,\mathrm{d}M_s.
$$
The \emph{variance–optimal hedging problem} consists in finding $(u,\phi)$ minimising the mean–squared hedging error:
\begin{align}\label{eq:variance-optimal-problem}
\varepsilon^2
:=
\inf\Bigl\{
\mathbb{E}\Bigl[\bigl(u + \textstyle\int_0^T \phi_s\,\mathrm{d}M_s - K\bigr)^2\Bigr]
:\ 
u \in \mathbb{R},\
\phi \in \Lambda_T^{2}(H,\mathbb{R};M)
\Bigr\}.
\end{align}

\begin{remark}\label{rmk:martingale-case}
Problem~\eqref{eq:variance-optimal-problem} is posed under the pricing measure $\mathbb P$, where
the traded integrator $M$ is already a martingale. This is the \emph{martingale case} of
quadratic hedging. In this case the three classical quadratic-hedging objectives coincide: the
mean–variance optimal strategy of \cite{Schweizer_2001,CernyKallsen2007}, the locally
risk-minimising strategy, and the Galtchouk--Kunita--Watanabe projection all
return the same integrand, and the variance-optimal martingale measure is $\mathbb P$ itself. The
genuinely hard part of the general semimartingale theory, the determination of the
variance-optimal martingale measure and the associated opportunity process when the discounted
asset has a $\mathbb P$-drift, therefore does not arise here and the problem reduces to the
$L^2(\mathbb P)$ orthogonal projection developed below. We solve this projection problem in the
infinite-dimensional, operator-valued stochastic-covariance setting, which is where the analytical
difficulty lies.

Two consequences should be stated explicitly: First, the minimised error
$\varepsilon^2=\mathbb E_{\mathbb P}[N_T^2]$ is computed under the pricing measure. A risk manager
is ultimately interested in the dispersion of the hedging error under the statistical measure
$\mathbb P_{\mathrm{stat}}$; the two objectives agree when $\mathbb P_{\mathrm{stat}}=\mathbb P$,
and more generally they are comparable up to the Radon--Nikodym density
$\mathrm d\mathbb P_{\mathrm{stat}}/\mathrm d\mathbb P$ whenever this density is bounded, since then
$L^2(\mathbb P)$ and $L^2(\mathbb P_{\mathrm{stat}})$ have equivalent norms. We work under
$\mathbb P$ throughout and read our results as \emph{risk minimisation under the pricing measure};
the statistical-measure problem with a non-martingale integrator is outside our scope. Second,
because the objective is an $L^2(\mathbb P)$ projection, none of the structural results below
depends on the specific origin of $\sigma$ and only the
realised covariance density $\Sigma_t=\sigma_t\sigma_t^\ast$ enters.
\end{remark}

\subsection{GKW representation in Hilbert space}\label{sec:gkw-h}
The variance-optimal hedging problem~\eqref{eq:variance-optimal-problem} is an orthogonal
projection problem in $L^2(\Omega,\mathcal F_T,\mathbb P)$. The quotient construction in
Definition~\ref{def:Lambda2-star} removes covariance-null directions before the projection is
taken; this is essential in infinite rank, where $Q_M(t)$ is typically compact and has an
unbounded pseudo-inverse. Define the sets of attainable gains and claims:
$$
\mathcal{G}(M)
:=
\Bigl\{
\textstyle\int_0^T \phi_s\,\mathrm{d}M_s
:\ 
\phi \in \Lambda_T^{2}(H,\mathbb{R};M)
\Bigr\},
\qquad
\mathcal{C}
:=
\mathbb{R} + \mathcal{G}(M)
\subset L^2(\mathbb{P}).
$$

We use the Hilbert-space martingale representation and GKW framework of
Ouvrard~\cite{Ouvrard1975ReprsentationDM}, applied to the closed stochastic-integral subspace
generated by $M$.

\begin{align}\label{eq:isometry-hedging-norm}
\mathbb{E}\Bigl[\Bigl(\int_0^T \phi_t\,\mathrm{d}M_t\Bigr)^2\Bigr]
=
\|\phi\|_{\Lambda_T^2(M)}^2
=
\mathbb{E}\int_0^T\|\phi_t Q_M^{1/2}(t)\|_{\mathcal{L}_2(H,\mathbb{R})}^2\,\mathrm{d}t.
\end{align}

\begin{proposition}[Closedness and existence of the projection]\label{prop:closure-portfolio}
The spaces $\mathcal{G}(M)$ and $\mathcal{C}=\mathbb{R}+\mathcal{G}(M)$ are closed subspaces of
$L^2(\Omega,\mathcal{F}_T,\mathbb{P})$. Consequently, there exists a unique pair
$(\tilde{u}, \tilde{\phi}) \in \mathbb{R} \times \Lambda_T^{2}(H,\mathbb{R};M)$
attaining the minimum in \eqref{eq:variance-optimal-problem}.
\end{proposition}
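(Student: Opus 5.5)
The argument rests on Proposition~\ref{prop:ito-isometry-M}. The map $I\colon\Lambda_T^{2}(H,\mathbb R;M)\to L^2(\Omega,\mathcal F_T,\mathbb P)$, $\phi\mapsto G_T(\phi)$, is a linear isometry defined on a Hilbert space. By Definition~\ref{def:Lambda2-star}, that space is the completion of $\mathcal E(H^\ast)/\mathcal N_M$. The range of a linear isometry from a complete space is complete, hence closed. Concretely, take $G_T(\phi^{(n)})\to X$ in $L^2$. Then $(\phi^{(n)})$ is Cauchy in $\|\cdot\|_{\Lambda_T^2(M)}$ by \eqref{eq:isometry-closure}, so it converges to some $\phi$. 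Continuity of $I$ then gives $X=G_T(\phi)$. This proves that $\mathcal G(M)$ is closed.

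For $\mathcal C=\mathbb R+\mathcal G(M)$, I first show that constants are orthogonal to $\mathcal G(M)$. Every $G_T(\phi)$ is the terminal value of a square-integrable martingale started at $0$, so $\mathbb E[G_T(\phi)]=0$. For simple integrands this is immediate, and it passes to the closure by $L^2$-continuity of the expectation. Thus $\mathcal C$ is the orthogonal sum of the one-dimensional space $\mathbb R$ and the closed space $\mathcal G(M)$. Such a sum is closed: if $u_n+G_n\to X$, then $u_n=\mathbb E[u_n+G_n]\to\mathbb E[X]$, and $G_n\to X-\mathbb E[X]\in\mathcal G(M)$. Alternatively, one can invoke the general fact that a closed subspace plus a finite-dimensional subspace is closed.

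For existence, the projection theorem in the Hilbert space $L^2(\Omega,\mathcal F_T,\mathbb P)$ yields a unique nearest point $\Pi_{\mathcal C}K\in\mathcal C$. By orthogonality, $\Pi_{\mathcal C}K=\mathbb E[K]+\Pi_{\mathcal G(M)}(K-\mathbb E[K])$, so $\tilde u=\mathbb E[K]$. The integrand $\tilde\phi$ is the preimage under $I$ of $\Pi_{\mathcal G(M)}(K-\mathbb E[K])$. This attains the infimum in \eqref{eq:variance-optimal-problem}.

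Uniqueness of the pair follows from two facts. The decomposition $u+G$ of an element of $\mathcal C$ is unique, because $\mathbb R\cap\mathcal G(M)=\{0\}$: a constant with mean zero vanishes. Injectivity of the isometry $I$ on the quotient completion then determines $\tilde\phi$ uniquely as an element of $\Lambda_T^{2}(H,\mathbb R;M)$. Uniqueness holds only in this sense, namely up to covariance-null directions; that is exactly why the quotient was introduced.

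The only delicate point is making sure the isometry is onto the set $\mathcal G(M)$ as defined. In particular, one must check that $\mathcal G(M)$ is the image of the full completion rather than merely of simple integrands. This is by definition, so the main care lies in the Cauchy-sequence step and in the vanishing-mean argument. No analytic obstacle beyond this is expected.
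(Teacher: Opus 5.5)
Your proof is correct and follows the paper's own argument. $\mathcal G(M)$ is closed as the isometric image of the complete space $\Lambda_T^2(H,\mathbb R;M)$, the mean-zero gains make $\mathcal C=\mathbb R\oplus\mathcal G(M)$ an orthogonal and hence closed sum, and the projection theorem gives $\tilde u=\mathbb E[K]$. Your uniqueness argument, via $\mathbb R\cap\mathcal G(M)=\{0\}$ and injectivity of the isometry on the quotient, makes explicit what the paper leaves implicit.
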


\begin{proof}
By Proposition~\ref{prop:ito-isometry-M}, $\phi\mapsto G_T(\phi)$ is an isometry from the complete
space $\Lambda_T^{2}(H,\mathbb{R};M)$ into $L^2(\mathbb P)$. Hence its image
$\mathcal G(M)$ is closed. Every gain has zero mean, so
$\mathbb R\perp\mathcal G(M)$ and $\mathcal C=\mathbb R\oplus\mathcal G(M)$ is closed. The minimiser
is the orthogonal projection of $K$ onto $\mathcal C$; its scalar part is
$\tilde u=\mathbb E[K]$, and its centered part has the unique representation
$G_T(\tilde\phi)$ for some $\tilde\phi\in\Lambda_T^2(H,\mathbb R;M)$.
\end{proof}

For $Z_t:=\mathbb E[K\mid\mathcal F_t]$, the Galtchouk--Kunita--Watanabe decomposition takes
place in the covariance-norm quotient. No pointwise pseudo-inverse is used unless a separate
source condition is imposed.

\paragraph{Filtration regimes.}
The residual martingale in the theorem is defined relative to the chosen filtration. If
$\mathcal F$ is generated by the traded Brownian curve noise $W$, then under the support
condition in Corollary~\ref{cor:approx-completeness} the residual vanishes. If the filtration is
enlarged by independent volatility or covariance noise, for instance
$\mathcal F=\mathcal F^W\vee\mathcal F^B$, the residual generally contains the part of the
claim martingale driven by that non-traded noise. If the covariance driver is jump driven, as in
the BNS/OU specification of Section~\ref{subsec:affine-covariance-example}, the residual may also
contain discontinuous martingale parts orthogonal to the continuous curve martingale $M$. Thus
Theorem~\ref{thm:variance-optimal-hedging} proves orthogonality to $M$, not representation of
all $L^2$-martingales by $M$.

\begin{theorem}[Infinite-dimensional GKW decomposition and variance-optimal hedge]
\label{thm:variance-optimal-hedging}
There exist a unique $\tilde\phi\in\Lambda_T^{2}(H,\mathbb R;M)$ and a square-integrable real
martingale $N$ such that
\begin{align}\label{eq:GKWT-decomposition}
Z_t
=
Z_0+\int_0^t\tilde\phi_s\,\mathrm dM_s+N_t,
\qquad 0\le t\le T,
\end{align}
and $N$ is strongly orthogonal to $M$, i.e.
$$
[N,\langle M,h\rangle_H]\equiv0,\qquad h\in H.
$$
The variance-optimal hedge is $(\tilde u,\tilde\phi)$ with $\tilde u=\mathbb E[K]$, and
\begin{align}\label{eq:price-decomposition-final}
K
=
\tilde u+\int_0^T\tilde\phi_s\,\mathrm dM_s+N_T.
\end{align}
The minimal hedging error is
$$
\varepsilon^2=\mathbb E[N_T^2].
$$
\end{theorem}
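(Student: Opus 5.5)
The proof is an orthogonal-projection argument in $L^2(\Omega,\mathcal F_T,\mathbb P)$ built on Proposition~\ref{prop:closure-portfolio}. Since $\mathcal G(M)$ is closed and every gain has zero mean, I will project the centred claim $K-\mathbb E[K]$ onto $\mathcal G(M)$. This gives a unique element $G_T(\tilde\phi)$, and the isometry of Proposition~\ref{prop:ito-isometry-M} determines $\tilde\phi\in\Lambda_T^2(H,\mathbb R;M)$ uniquely as an equivalence class. I then set $N_T:=K-\mathbb E[K]-G_T(\tilde\phi)$, which is orthogonal in $L^2$ to $\mathcal G(M)$ and to constants, and define $N_t:=\mathbb E[N_T\mid\mathcal F_t]$. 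Taking conditional expectations of $K=\mathbb E[K]+G_T(\tilde\phi)+N_T$ and using that $G(\tilde\phi)$ is a martingale gives \eqref{eq:GKWT-decomposition}. Here $Z_0=\mathbb E[K]$, assuming $\mathcal F_0$ is trivial; otherwise I would project onto $L^2(\mathcal F_0)+\mathcal G(M)$ instead and note that the scalar part is still $\mathbb E[K]$ for the stated problem. The optimality of $(\tilde u,\tilde\phi)$ and the identity $\varepsilon^2=\mathbb E[N_T^2]$ follow from Pythagoras, since $\mathbb E[(u+G_T(\phi)-K)^2]=(u-\mathbb E K)^2+\|G_T(\phi-\tilde\phi)\|^2+\mathbb E[N_T^2]$.

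The main step is upgrading the orthogonality $N_T\perp\mathcal G(M)$ to strong orthogonality, namely $[N,\langle M,h\rangle_H]\equiv0$. Fix $h\in H$ and set $X^h:=\langle M,h\rangle_H$. This equals $G(\phi^h)$ for the constant integrand $\phi^h=\langle h,\cdot\rangle_H\in H^\ast$, and it is admissible by the bound of Example~\ref{ex:single-forward-admissibility}. For a stopping time $\tau\le T$, the stopped integral $X^h_{\cdot\wedge\tau}=G(\phi^h\mathbf 1_{[0,\tau]})$ is again a gain, because $\phi^h\mathbf 1_{[0,\tau]}$ is predictable and lies in $\Lambda_T^2$ by approximation of $\mathbf 1_{[0,\tau]}$ with simple stochastic intervals. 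Hence
\begin{align*}
\mathbb E[N_\tau X^h_\tau]=\mathbb E\bigl[N_T X^h_{T\wedge\tau}\bigr]=0,
\end{align*}
where the first equality uses optional sampling and the martingale property of $N$. Since this holds for all stopping times, $NX^h$ is a martingale, so the predictable covariation $\langle N,X^h\rangle$ vanishes. Because $X^h$ is continuous, $[N,X^h]=\langle N,X^h\rangle$ (only the continuous part of $N$ contributes), which gives $[N,X^h]\equiv0$ and covers the jump-driven filtrations mentioned before the theorem.

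I expect the main obstacle to be the justification that $\phi^h\mathbf 1_{[0,\tau]}$, and more generally $\mathbf 1_A\phi^h\mathbf 1_{(s,t]}$ with $A\in\mathcal F_s$, belongs to the completed space with the expected integral. I would handle this with the dominated-convergence estimate from Lemma~\ref{lem:predictable-finite-maturity-density}: the norm is bounded by $\|h\|^2\,\mathbb E\int_0^T\mathrm{Tr}\,Q_M(t)\,\mathrm dt<\infty$ under (A3). Approximating $\tau$ from above by dyadic stopping times then gives convergence in $\Lambda_T^2$, and therefore in $L^2$ for the integrals.

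Uniqueness of the decomposition also needs an argument. Suppose two decompositions exist. Their difference gives $G_t(\tilde\phi-\tilde\phi')=N'_t-N_t$, which is a gain strongly orthogonal to $M$. Strong orthogonality implies $L^2$-orthogonality to all simple gains, hence to all of $\mathcal G(M)$, including the difference itself. The difference is therefore zero, so $\tilde\phi=\tilde\phi'$ in the quotient and $N=N'$.
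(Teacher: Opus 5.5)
Your proposal is correct and follows the paper's route: project $K$ onto the closed space $\mathbb R+\mathcal G(M)$, take $N$ as the residual martingale, and upgrade $N_T\perp\mathcal G(M)$ to $[N,\langle M,h\rangle_H]\equiv0$ by testing $N_T$ against stochastic integrals of $\langle M,h\rangle_H$, using continuity of $M$ to identify the bracket with the predictable covariation. The differences are minor. You test against the stopped processes $\langle M_{\cdot\wedge\tau},h\rangle_H$ with the optional-sampling martingale criterion, where the paper uses bounded predictable $\theta$ and integration by parts. You also supply an explicit uniqueness argument for the decomposition and note that $Z_0=\mathbb E[K]$ requires $\mathcal F_0$ to be trivial; the paper leaves both points implicit.
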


\begin{proof}
Let $\Pi_{\mathcal C}K=\tilde u+G_T(\tilde\phi)$ be the orthogonal projection given by
Proposition~\ref{prop:closure-portfolio}. Since $\mathcal G(M)$ is centered,
$\tilde u=\mathbb E[K]$. Define
$$
N_t:=Z_t-Z_0-\int_0^t\tilde\phi_s\,\mathrm dM_s.
$$
Then $N$ is a square-integrable martingale and
$N_T=K-\Pi_{\mathcal C}K$ is orthogonal in $L^2$ to $\mathcal G(M)$. Fix $h\in H$ and let
$\theta$ be any bounded predictable real process. The terminal random variable
$$
\int_0^T \theta_s\,\langle h,\mathrm dM_s\rangle_H
$$
belongs to $\mathcal G(M)$, hence its $L^2$ inner product with $N_T$ is zero. Writing
$Y:=\langle M,h\rangle_H$ and using that $N$ and $\int\theta\,\mathrm dY$ are martingales,
integration by parts gives
$$
0=\mathbb E\Bigl[N_T\int_0^T\theta_s\,\mathrm dY_s\Bigr]
=\mathbb E\Bigl[\int_0^T\theta_s\,\mathrm d[N,Y]_s\Bigr]
\qquad\text{for all bounded predictable }\theta,
$$
so the predictable covariation $\langle N,Y\rangle$ vanishes. Since
$M=\int_0^{\cdot}\sigma_s\,\mathrm dW_s$ has continuous paths, $Y=\langle M,h\rangle_H$ is
continuous, and therefore $[N,Y]=\langle N,Y\rangle=0$ even if $N$ has jumps; that is,
$[N,\langle M,h\rangle_H]\equiv0$ for every $h\in H$ (the
Kunita--Watanabe/Ouvrard characterisation of strong orthogonality,
\cite{Ouvrard1975ReprsentationDM}). The optimality and the error formula follow
from the Pythagorean theorem in $L^2$.
\end{proof}

\begin{lemma}\label{lem:gkw-representative}
The abstract integrand $\tilde\phi$ in Theorem~\ref{thm:variance-optimal-hedging} is defined in
the covariance-norm quotient $\Lambda_T^{2}(H,\mathbb R;M)$ independently of any pointwise
representative.
Assume that the covariation between $Z$ and $M$ is represented by a predictable $H$-valued process
$q_{Z,M}$ in the sense that, for every $h\in H$,
\begin{align}\label{eq:q-representability}
\frac{\mathrm d}{\mathrm dt}[\langle M,h\rangle_H,Z]_t
=
\langle q_{Z,M}(t),h\rangle_H
\quad\text{for }(\mathrm dt\otimes\mathbb P)\text{-a.e. }(t,\omega).
\end{align}
Assume moreover that the covariation density satisfies the source condition
\begin{align}\label{eq:source-condition}
q_{Z,M}(t)\in\mathrm{Ran}\,Q_M(t)
\quad\text{and}\quad
\mathbb E\int_0^T\|Q_M^{1/2}(t)Q_M^\dagger(t)q_{Z,M}(t)\|_H^2\,\mathrm dt<\infty .
\end{align}
Assume also that the Moore--Penrose representative
$$
(t,\omega)\longmapsto Q_M^\dagger(t,\omega)q_{Z,M}(t,\omega)
$$
admits a predictable $H$-valued version. 
Then the GKW integrand has the $H$-valued representative
\begin{align}\label{eq:pseudoinverse-solution}
\tilde\xi_t=Q_M^\dagger(t)q_{Z,M}(t),\qquad
\tilde\phi_t(h)=\langle\tilde\xi_t,h\rangle_H,
\end{align}
identified modulo $\ker Q_M^{1/2}(t)$. Equivalently,
\begin{align}\label{eq:normal-equations}
Q_M(t)\tilde\xi_t=q_{Z,M}(t)
\quad\text{in }H.
\end{align}
\end{lemma}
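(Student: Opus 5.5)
The plan is to show that the candidate $\tilde\xi_t:=Q_M^\dagger(t)q_{Z,M}(t)$ induces an element $\xi$ of $\Lambda_T^2(H,\mathbb R;M)$. We then check that $Z_T-Z_0-\int_0^T\xi\,\mathrm dM$ is $L^2$-orthogonal to every gain in $\mathcal G(M)$. Uniqueness of the orthogonal projection (Proposition~\ref{prop:closure-portfolio}) then forces $\xi=\tilde\phi$ in the quotient. The first claim of the lemma, that $\tilde\phi$ is defined independently of representatives, is immediate from Theorem~\ref{thm:variance-optimal-hedging}: $\tilde\phi$ is obtained as the preimage of $\Pi_{\mathcal C}K-\mathbb E[K]$ under the isometry of Proposition~\ref{prop:ito-isometry-M}, which lives on equivalence classes.

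\emph{Step 1 (admissibility).} By assumption $\tilde\xi$ is a predictable $H$-valued process and $\mathbb E\int_0^T\|Q_M^{1/2}(t)\tilde\xi_t\|_H^2\,\mathrm dt<\infty$. The functional $h\mapsto\langle\tilde\xi_t,h\rangle_H$ is bounded for each $(t,\omega)$, though not necessarily uniformly. I would truncate, setting $\xi^{(n)}_t:=\tilde\xi_t\mathbf 1_{\{\|\tilde\xi_t\|_H\le n\}}$. Each truncation is bounded and predictable, and since $\mathrm{Tr}\,Q_M$ is integrable by (A3) it can be approximated by simple $H^\ast$-valued processes in the norm \eqref{eq:lambda-norm}, using the standard density of simple processes. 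By the final display of Proposition~\ref{prop:ito-isometry-M} and dominated convergence (dominating function $\|Q_M^{1/2}\tilde\xi\|^2$), the sequence $\xi^{(n)}$ is Cauchy in $\Lambda_T^2$ and converges to a class $\xi$. Changing $\tilde\xi_t$ by an element of $\ker Q_M^{1/2}(t)$ does not change this norm, which accounts for the ``modulo $\ker Q_M^{1/2}(t)$'' clause.

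\emph{Step 2 (orthogonality).} It suffices to test against the integrands $\theta_s\langle h,\cdot\rangle_H$ with $\theta$ bounded predictable and $h\in H$. By Lemma~\ref{lem:predictable-finite-maturity-density}, or directly by the definition of the completion, linear combinations of these are dense in $\Lambda_T^2$. For $Y=\langle M,h\rangle_H$, Itô's isometry for covariations gives
$$\mathbb E\Bigl[Z_T\int_0^T\theta\,\mathrm dY\Bigr]=\mathbb E\int_0^T\theta_t\langle q_{Z,M}(t),h\rangle_H\,\mathrm dt$$
by \eqref{eq:q-representability}. Likewise,
$$\mathbb E\Bigl[\int_0^T\xi\,\mathrm dM\int_0^T\theta\,\mathrm dY\Bigr]=\mathbb E\int_0^T\theta_t\langle Q_M(t)\tilde\xi_t,h\rangle_H\,\mathrm dt,$$
which follows from \eqref{eq:isometry-closure} applied via the truncations and passed to the limit. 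The source condition $q_{Z,M}\in\operatorname{Ran}Q_M$ gives $Q_MQ_M^\dagger q_{Z,M}=q_{Z,M}$, so the two expressions agree and the difference is orthogonal to $\mathcal G(M)$. Since $\mathbb E[Z_0\,G_T]=0$, we conclude $\xi=\tilde\phi$. The normal equations \eqref{eq:normal-equations} are then exactly the identity $Q_M\tilde\xi=q_{Z,M}$ used in this step. Conversely, any representative satisfying \eqref{eq:normal-equations} differs from $\tilde\xi$ by an element of $\ker Q_M=\ker Q_M^{1/2}$.

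\emph{Main obstacle.} I expect the delicate point to be the limit passage in Step 2, i.e.\ justifying the covariation formula for the possibly unbounded $\tilde\xi$. The estimate $|\langle Q_M\xi^{(n)},h\rangle|\le\|Q_M^{1/2}\xi^{(n)}\|\,\|Q_M^{1/2}h\|$, together with Cauchy--Schwarz against $\theta$, gives $L^1(\mathrm dt\otimes\mathbb P)$ domination by $\|Q_M^{1/2}\tilde\xi\|\,\|Q_M^{1/2}h\|$. This dominating function is integrable because the first factor is in $L^2$ by the source condition and the second by (A3), so dominated convergence should close the argument.
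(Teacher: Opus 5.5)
Your proposal is correct and follows the same route as the paper. You build the class induced by $\tilde\xi_t=Q_M^\dagger(t)q_{Z,M}(t)$, use $q_{Z,M}(t)\in\mathrm{Ran}\,Q_M(t)$ to get $Q_M(t)\tilde\xi_t=q_{Z,M}(t)$, match covariations with the scalar projections $\langle M,h\rangle_H$ against bounded predictable test processes, and conclude by uniqueness in the covariance-norm quotient. Your truncation and dominated-convergence arguments, together with the remark that $\mathbb E[Z_0G_T]=0$, make explicit steps that the paper's proof asserts in one line each.
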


\begin{proof}
Set $\xi_t:=Q_M^\dagger(t)q_{Z,M}(t)$, choosing the predictable version specified in the statement.
The integrability in
\eqref{eq:source-condition} says precisely that
$h\mapsto\langle\xi_t,h\rangle_H$ defines an element of the covariance-norm completion. Since the
range/source condition gives $q_{Z,M}(t)\in\mathrm{Ran}\,Q_M(t)$ and
$Q_MQ_M^\dagger=P_{\overline{\mathrm{Ran}\,Q_M}}$ (Appendix~\ref{app:operator-prelim}), $\xi_t$
solves the normal equation \eqref{eq:normal-equations},
$$
Q_M(t)\xi_t=Q_M(t)Q_M^\dagger(t)q_{Z,M}(t)
=P_{\overline{\mathrm{Ran}\,Q_M(t)}}q_{Z,M}(t)=q_{Z,M}(t).
$$
Hence, for each bounded predictable $H$-valued process $\eta$, the covariation identity
\eqref{eq:q-representability} gives
$$
\mathbb E\int_0^T\langle q_{Z,M}(t),\eta_t\rangle_H\,\mathrm dt
=
\mathbb E\int_0^T\langle Q_M(t)\xi_t,\eta_t\rangle_H\,\mathrm dt,
$$
i.e.\ the stochastic integral with representative $\xi$ has the same predictable covariations with
all scalar projections of $M$ as the GKW part of $Z$. Uniqueness in the covariance-norm quotient,
from the Itô isometry, identifies this representative with $\tilde\phi$.
\end{proof}

\begin{remark}
  \begin{enumerate}
\item[i)] Condition~\eqref{eq:q-representability} holds whenever $h\mapsto\frac{\mathrm d}{\mathrm dt}[\langle M,h\rangle_H,Z]_t$ is a bounded
functional on $H$ for a.e.\ $(t,\omega)$; a convenient sufficient condition is that $Z_t=v(t,f_t,\Sigma_t)$
for a value function $v$ that is Fr\'echet differentiable in the curve variable with
square-integrable gradient, in which case $q_{Z,M}(t)=\Sigma_t D_f v(t,f_t,\Sigma_t)$ (see
Section~\ref{subsec:affine-covariance-example}).
\item[ii)] The source condition~\eqref{eq:source-condition} has the following spectral form.
In a spectral representation $Q_M(t)e_k(t)=\lambda_k(t)e_k(t)$, the pointwise range membership
$q_{Z,M}(t)\in\mathrm{Ran}\,Q_M(t)$ is
$$
q_{Z,M}(t)\in\overline{\mathrm{Ran}\,Q_M(t)}
\quad\text{and}\quad
\sum_{\lambda_k(t)>0}
\frac{|\langle q_{Z,M}(t),e_k(t)\rangle_H|^2}{\lambda_k(t)^2}<\infty
\quad\text{for a.e.\ }(t,\omega),
$$
where support membership $q_{Z,M}(t)\in\overline{\mathrm{Ran}\,Q_M(t)}$ alone is weaker and does not
suffice; the covariance-norm admissibility
$\mathbb E\int_0^T\|Q_M^{1/2}(t)Q_M^\dagger(t)q_{Z,M}(t)\|_H^2\,\mathrm dt<\infty$ is the separate
integrability
$$
\mathbb E\int_0^T\sum_{\lambda_k(t)>0}
\frac{|\langle q_{Z,M}(t),e_k(t)\rangle_H|^2}{\lambda_k(t)}\,\mathrm dt<\infty .
$$
      \end{enumerate}
  \end{remark}

\begin{remark}
The existence of $q_{Z,M}$ in \eqref{eq:q-representability} is an additional representability
assumption. The abstract hedge in Theorem~\ref{thm:variance-optimal-hedging} exists without it.
When $Q_M(t)$ has infinite rank, $Q_M^\dagger(t)$ is usually unbounded; therefore
\eqref{eq:source-condition} is a genuine range and smoothness condition on the claim and the covariance
structure, not a consequence of the GKW theorem.
\end{remark}

\begin{lemma}\label{lem:closed-range-finite-rank}
If $T\in\mathcal{K}(H)$ is compact, then $\mathrm{Ran}(T)$ is closed if and only if $\mathrm{rank}(T)<\infty$.
\end{lemma}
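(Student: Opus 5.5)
We prove the two implications separately. The direction "finite rank implies closed range" is immediate, and the substance lies in the converse. Both directions use only the definition of compactness and the open mapping theorem. No spectral theory is needed, so the argument works for any compact operator on a Hilbert (or even Banach) space, not only self-adjoint ones such as $Q_M(t)$.

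\emph{Finite rank $\Rightarrow$ closed range.} If $\mathrm{rank}(T)<\infty$, then $\mathrm{Ran}(T)$ is a finite-dimensional subspace of $H$. Every finite-dimensional subspace of a normed space is complete, hence closed.

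\emph{Closed range $\Rightarrow$ finite rank.} Suppose $R:=\mathrm{Ran}(T)$ is closed, so $R$ is a Hilbert space in its own right.
\begin{enumerate}[label=(\roman*),leftmargin=*]
  \item Consider the restriction $T_0:=T|_{(\ker T)^\perp}\colon(\ker T)^\perp\to R$. It is a bounded bijection between Banach spaces.
  \item By the open mapping (bounded inverse) theorem, $T_0^{-1}\colon R\to(\ker T)^\perp$ is bounded. Hence there is $c>0$ with $\|Th\|\ge c\|h\|$ for all $h\in(\ker T)^\perp$.
  \item Equivalently, $T$ maps the open unit ball of $H$ onto a set containing a ball $B_R(0,\varepsilon)$ of $R$, with $\varepsilon>0$.
  \item By compactness of $T$, the image of the unit ball is relatively compact. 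Therefore the closed ball $\overline{B_R(0,\varepsilon/2)}$ is compact.
  \item By Riesz's lemma, a normed space whose closed unit ball is compact is finite-dimensional. In a Hilbert space one can see this directly: an infinite orthonormal sequence $(e_k)$ in $R$ satisfies $\|e_k-e_j\|=\sqrt2$ for $k\ne j$, so $(\tfrac{\varepsilon}{2}e_k)$ has no convergent subsequence.
  \item Hence $\dim R<\infty$, i.e.\ $\mathrm{rank}(T)<\infty$.
\end{enumerate}

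The only step that needs care is the application of the open mapping theorem in (ii)–(iii). It requires the codomain to be complete, and this is exactly where closedness of $\mathrm{Ran}(T)$ enters; without it the inverse bound fails, as for infinite-rank $Q_M(t)$. The remaining steps are standard.

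For the paper's purposes we record the consequence that is used below. If $Q_M(t)$ is compact of infinite rank, then $\mathrm{Ran}\,Q_M(t)$ is not closed, and $Q_M^\dagger(t)$ is unbounded. Indeed, if $Q_M^\dagger(t)$ were bounded, then $Q_M^\dagger(t)$ restricted to the range would invert $Q_M(t)$ on $\overline{\mathrm{Ran}\,Q_M(t)}$ with a bounded inverse, which would force the range to be closed by the same estimate.
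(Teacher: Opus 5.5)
Your proof is correct and follows essentially the same route as the paper: restrict $T$ to $(\ker T)^\perp$, apply the open mapping theorem on the closed range to get a bounded inverse, deduce that a ball of $\mathrm{Ran}(T)$ is relatively compact, and conclude finite dimension by Riesz's lemma. One minor point: your parenthetical ``or even Banach'' extension needs the quotient $X/\ker T$ in place of $(\ker T)^\perp$, or the open mapping theorem applied directly to the surjection $T\colon X\to\mathrm{Ran}(T)$.
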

\begin{proof}
Finite rank implies closed range. Conversely, suppose that $\mathrm{Ran}(T)$ is closed. If this
range were infinite-dimensional, then
$$
T:(\ker T)^\perp\to\mathrm{Ran}(T)
$$
is a bounded bijection. By the Open Mapping Theorem, its
inverse is bounded. Hence the unit ball of $\mathrm{Ran}(T)$ is the image under $T$ of a bounded
set in $(\ker T)^\perp$, and is therefore relatively compact because $T$ is compact. This is
impossible in an infinite-dimensional normed space. Thus $\mathrm{Ran}(T)$ must be finite
dimensional.
\end{proof}

Let $Q\in\mathcal{L}(H)$ be self–adjoint, non–negative, compact, with spectral resolution
$$
Q=\sum_{k\ge1}\lambda_k\,\langle\cdot,e^{(k)}\rangle_H\,e^{(k)},\qquad \lambda_k\ge0.
$$
Define the (possibly unbounded) pseudo–inverse on
$
\mathcal{D}(Q^\dagger)=\Bigl\{h\in H:\ \sum_{k:\lambda_k>0}\lambda_k^{-2}\langle h,e^{(k)}\rangle_H^2<\infty\Bigr\}
$
by
$$
Q^\dagger h=\sum_{k:\lambda_k>0}\lambda_k^{-1}\,\langle h,e^{(k)}\rangle_H\,e^{(k)}.
$$
Then $QQ^\dagger=P_{\overline{\mathrm{Ran}(Q)}}$ on $\mathcal D(Q^\dagger)$, while $Q^\dagger Q=P_{\ker(Q)^\perp}$ on all of $H$. Moreover, $Q^\dagger$ is bounded if and only if $\mathrm{Ran}(Q)$ is closed, i.e., by Lemma~\ref{lem:closed-range-finite-rank}, iff $Q$ has finite rank.

\subsection{Approximation by Realistic Portfolios and Finite–Rank Models}\label{sec:appr-real-portf}

\begin{proposition}[Viable $\varepsilon$--hedges]\label{prop:epsilon-hedge}
Let $(\tilde{u},\tilde{\phi})$ be a minimiser of \eqref{eq:variance-optimal-problem} with
residual martingale $N$ as in \eqref{eq:price-decomposition-final}.
For every $\varepsilon>0$ there exists an idealised realistic finite--maturity strategy
$\phi^{(\varepsilon)}$ in the sense of Definition~\ref{def:realistic-trading-strategy} such that
$$
\|\phi^{(\varepsilon)}-\tilde{\phi}\|_{\Lambda_T^{2}(M)}<\varepsilon
\quad\text{and}\quad
\mathbb{E}\Bigl[\bigl(G_T(\phi^{(\varepsilon)})-G_T(\tilde{\phi})\bigr)^2\Bigr]<\varepsilon^2.
$$
Consequently, the hedging error using $(\tilde{u},\phi^{(\varepsilon)})$ satisfies the exact decomposition
\begin{align}\label{eq:epsilon-hedge-bound}
\mathbb{E}\Bigl[\bigl(\tilde{u}+G_T(\phi^{(\varepsilon)})-K\bigr)^2\Bigr]
=
\mathbb{E}\Bigl[\bigl(G_T(\phi^{(\varepsilon)})-G_T(\tilde{\phi})\bigr)^2\Bigr]
+ \mathbb{E}[N_T^2]
< \varepsilon^2 + \mathbb{E}[N_T^2],
\end{align}
and can be made arbitrarily close to the optimum $\mathbb{E}[N_T^2]$.
\end{proposition}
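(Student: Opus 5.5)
The plan is to combine the predictable density result with the Itô isometry and the orthogonality built into the GKW decomposition.

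\emph{Step 1: approximation.} By Lemma~\ref{lem:predictable-finite-maturity-density}, simple predictable processes taking values in finite linear spans of $\{\delta_x:x\ge0\}$ are dense in $\Lambda_T^2(H,\mathbb R;M)$. Given $\varepsilon>0$, I therefore pick such a process $\phi^{(\varepsilon)}$ with $\|\phi^{(\varepsilon)}-\tilde\phi\|_{\Lambda_T^2(M)}<\varepsilon$. Such a process has the form $\sum_k\sum_i q_{k,i}\delta_{x_i}\mathbf 1_{(t_k,t_{k+1}]}$, with a deterministic grid and bounded predictable weights. It is thus a realistic strategy in the sense of Definition~\ref{def:realistic-trading-strategy}, read as a rolling position in the idealised market.

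\emph{Step 2: gains.} By linearity of $G_T$ and Proposition~\ref{prop:ito-isometry-M}, $\mathbb E[(G_T(\phi^{(\varepsilon)})-G_T(\tilde\phi))^2]=\|\phi^{(\varepsilon)}-\tilde\phi\|_{\Lambda_T^2(M)}^2<\varepsilon^2$.

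\emph{Step 3: the exact decomposition.} By \eqref{eq:price-decomposition-final},
\[
\tilde u+G_T(\phi^{(\varepsilon)})-K=\bigl(G_T(\phi^{(\varepsilon)})-G_T(\tilde\phi)\bigr)-N_T .
\]
Here $G_T(\phi^{(\varepsilon)}-\tilde\phi)\in\mathcal G(M)$. From the proof of Theorem~\ref{thm:variance-optimal-hedging}, $N_T=K-\Pi_{\mathcal C}K$ is $L^2$-orthogonal to $\mathcal C\supset\mathcal G(M)$. Expanding the square, the cross term vanishes, and Pythagoras gives the identity in \eqref{eq:epsilon-hedge-bound}. The strict bound $<\varepsilon^2+\mathbb E[N_T^2]$ then follows from Step 2. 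Letting $\varepsilon\downarrow0$ shows the error approaches the optimum.

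No step is a genuine obstacle. The only point needing care is Step 1: making sure the approximant is a bona fide element of the quotient, so that its class is what Lemma~\ref{lem:predictable-finite-maturity-density} provides. Because simple $H^\ast$-valued processes have pointwise representatives, the chosen $\phi^{(\varepsilon)}$ is a strictly viable realistic strategy. Density is what is used here, not any pointwise representability of $\tilde\phi$.
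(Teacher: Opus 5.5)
Your proposal is correct and follows the paper's proof essentially step for step. You choose $\phi^{(\varepsilon)}$ via the density of Lemma~\ref{lem:predictable-finite-maturity-density}, obtain the gains bound from the Itô isometry, and kill the cross term using the orthogonality $N_T=K-\Pi_{\mathcal C}K\perp\mathcal G(M)$ from the GKW projection.
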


\begin{proof}
Density of idealised realistic finite--maturity portfolios in $\Lambda_T^{2}(H,\mathbb{R};M)$ follows from
Lemma~\ref{lem:predictable-finite-maturity-density}; this yields the existence of
$\phi^{(\varepsilon)}$ with the stated $\Lambda_T^{2}(M)$-bound. The Itô isometry on $\mathcal{G}(M)$
yields the $L^2$-bound on the gains. For the error decomposition,
$\tilde{u}+G_T(\phi^{(\varepsilon)})-K=\bigl(G_T(\phi^{(\varepsilon)})-G_T(\tilde\phi)\bigr)-N_T$ by
\eqref{eq:price-decomposition-final}; since $N\perp\mathcal{G}(M)$, the cross term in the squared norm
vanishes, giving the equality in~\eqref{eq:epsilon-hedge-bound}.
\end{proof}

Spectral projections of the full infinite-rank GKW integrand converge in the hedging norm. This
does not recompute a claim or a hedge in a different finite-factor HJMM model; that distinction is
spelled out in Remark~\ref{rmk:projected-vs-restricted}.

\begin{theorem}[Spectral projection of the GKW hedge]\label{thm:finite-rank-stability}
Assume that there exist predictable finite-rank orthogonal projections $P_n(t)$ on $H$ such that
$P_n(t)Q_M(t)=Q_M(t)P_n(t)$ and $P_n(t)\uparrow P_{\overline{\mathrm{Ran}\,Q_M(t)}}$ strongly for
$(\mathrm dt\otimes\mathbb P)$-a.e.\ $(t,\omega)$. 
Let $\tilde\phi\in\Lambda_T^{2}(H,\mathbb{R};M)$ be the GKW integrand from
Theorem~\ref{thm:variance-optimal-hedging}, and define the
\emph{spectrally projected hedge} first on elementary representatives by
$$
\tilde\phi^{[n]}_t(h):=\tilde\phi_t\bigl(P_n(t)h\bigr),\qquad t\in[0,T],\ h\in H,
$$
and then by continuous extension in $\Lambda_T^2(M)$. This is the
$\Lambda_T^{2}$-orthogonal projection of $\tilde\phi$ onto the closed subspace of integrands
supported on $P_n(t)H$.
Then
\begin{align}\label{eq:projected-convergence}
  \|\tilde\phi^{[n]}-\tilde\phi\|_{\Lambda_T^{2}(H,\mathbb{R};M)}\;\longrightarrow\; 0,
  \qquad
  \mathbb{E}\Bigl[\bigl(G_T(\tilde\phi^{[n]})-G_T(\tilde\phi)\bigr)^2\Bigr]\;\longrightarrow\; 0
  \qquad \text{as } n\to\infty.
\end{align}
\end{theorem}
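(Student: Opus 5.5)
The plan is to define a family of bounded operators $R_n$ on $\Lambda_T^2(M)$ by $R_n\phi:=\phi\circ P_n(\cdot)$, first on simple integrands. I would show that each $R_n$ is an orthogonal projection, that the $R_n$ increase in $n$, and that they converge strongly to the identity. Both claims in \eqref{eq:projected-convergence} then follow: the first from strong convergence, the second from the Itô isometry of Proposition~\ref{prop:ito-isometry-M}.

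\emph{Step 1 (well-definedness and contractivity).} Take a simple $\phi\in\mathcal E(H^\ast)$. Strictly, $\phi\circ P_n(t)$ is only predictable rather than simple. I would handle this by working with the predictable-representative class and using that it lies in the completion, which is standard. Since $P_n$ commutes with $Q_M$, it also commutes with $Q_M^{1/2}$ by the functional calculus. Hence
$$
\|\phi_tP_n(t)Q_M^{1/2}(t)\|_{\mathcal L_2(H,\mathbb R)}=\|\phi_tQ_M^{1/2}(t)P_n(t)\|_{\mathcal L_2(H,\mathbb R)}\le\|\phi_tQ_M^{1/2}(t)\|_{\mathcal L_2(H,\mathbb R)} .
$$
So $R_n$ is a contraction. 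It maps $\mathcal N_M$ into itself, descends to the quotient, and extends by continuity to all of $\Lambda_T^2(M)$; this gives the definition of $\tilde\phi^{[n]}=R_n\tilde\phi$.

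\emph{Step 2 (orthogonal projection).} Idempotence $R_n^2=R_n$ is immediate from $P_n^2=P_n$. For self-adjointness I would use the inner product \eqref{eq:lambda-inner-product}. Since the adjoint of $\phi_t P_n Q_M^{1/2}$ is $Q_M^{1/2}P_n\phi_t^\ast$, we get
$$
\langle \phi_tP_nQ_M^{1/2},\psi_tQ_M^{1/2}\rangle_{\mathcal L_2}=\mathrm{Tr}\bigl(\phi_tQ_M^{1/2}P_nQ_M^{1/2}\psi_t^\ast\bigr),
$$
which is symmetric in $\phi$ and $\psi$ after moving $P_n$ through $Q_M^{1/2}$. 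This identity passes to the completion by continuity. Therefore $R_n$ is the orthogonal projection onto its range, which is the closed subspace of integrands supported on $P_nH$. Monotonicity $R_nR_m=R_{n\wedge m}$ follows from $P_n\uparrow$.

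\emph{Step 3 (strong convergence — the main point).} By contractivity, $\sup_n\|R_n\|\le1$, so by an $\varepsilon/3$ argument it suffices to prove $R_n\phi\to\phi$ for simple $\phi$, which form a dense set. For such $\phi$, let $\Pi(t):=P_{\overline{\mathrm{Ran}\,Q_M(t)}}$. Commutation gives
$$
\|(R_n\phi-\phi)_tQ_M^{1/2}(t)\|_{\mathcal L_2}^2=\|Q_M^{1/2}(t)(P_n(t)-\Pi(t))\phi_t^\ast\|_H^2 ,
$$
where I have used $Q_M^{1/2}\Pi=Q_M^{1/2}$, because $\ker Q_M^{1/2}=\ker Q_M$. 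This expression tends to zero pointwise $(\mathrm dt\otimes\mathbb P)$-a.e. by the assumed strong convergence $P_n(t)\uparrow\Pi(t)$. It is also dominated by $4\|\phi_t\|_{H^\ast}^2\,\mathrm{Tr}\,Q_M(t)$, which is integrable by (A3) and the boundedness of simple integrands, exactly as in Lemma~\ref{lem:predictable-finite-maturity-density}. Dominated convergence then gives $R_n\phi\to\phi$ in $\Lambda_T^2(M)$. Applying the density argument yields $\|\tilde\phi^{[n]}-\tilde\phi\|_{\Lambda_T^2(M)}\to0$, and the isometry \eqref{eq:isometry-closure} turns this into the gain convergence.

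I expect the main obstacle to be the measurability bookkeeping in Step 1. One must ensure that $\phi_t\circ P_n(t)$ is a predictable element of the completion even though it is not simple. I would settle this by approximating $P_n(t)$ through its predictable finite-rank structure, writing $P_n(t)=\sum_{k\le r}\langle\cdot,e_k(t)\rangle e_k(t)$ with predictable $e_k$ on sets where the rank is constant. Every other step is routine once commutation with $Q_M^{1/2}$ is in hand.
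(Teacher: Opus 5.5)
Your proposal is correct and follows essentially the same route as the paper. Commutation of $P_n(t)$ with $Q_M^{1/2}(t)$ gives the uniform bound $\|R_n\|\le 1$, strong convergence $P_n(t)\to P_{\overline{\mathrm{Ran}\,Q_M(t)}}$ with dominated convergence gives $\phi P_n\to\phi$ for elementary $\phi$, and density plus the Itô isometry finish the argument; your explicit check that $R_n$ is self-adjoint and idempotent, and your attention to the predictability of $\phi_t\circ P_n(t)$, fill in points that the paper only asserts.
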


\begin{proof}
    Since $Q_M(t)$ is trace class, $Q_M^{1/2}(t)$
is Hilbert--Schmidt, and strong convergence of $P_n(t)$ on $\overline{\mathrm{Ran}\,Q_M(t)}$
\emph{automatically} yields the Hilbert--Schmidt tail bound
$$
\|(I-P_n(t))Q_M^{1/2}(t)\|_{\mathcal L_2(H)}\longrightarrow0
\quad\text{for }(\mathrm dt\otimes\mathbb P)\text{-a.e. }(t,\omega),
$$
by dominated convergence over the singular values; we use this consequence below.

Let $\phi$ be an elementary bounded representative. Since $P_n(t)$ commutes with
$Q_M(t)$, it also commutes with $Q_M^{1/2}(t)$ and, by assumption,
$$
\|(\phi-\phi P_n)Q_M^{1/2}\|_{\mathcal L_2(H,\mathbb R)}^2
=
\|\phi(I-P_n)Q_M^{1/2}\|_{\mathcal L_2(H,\mathbb R)}^2
\longrightarrow0
$$
for $(\mathrm dt\otimes\mathbb P)$-a.e. $(t,\omega)$. The integrand is bounded by
$\|\phi Q_M^{1/2}\|_{\mathcal L_2(H,\mathbb R)}^2$, which is integrable by
\eqref{eq:lambda-norm}; dominated convergence gives
$\|\phi-\phi P_n\|_{\Lambda_T^2(M)}\to0$ for elementary $\phi$. By density of elementary
integrands in $\Lambda_T^2(M)$ and uniform boundedness of the projections, the same holds for
$\tilde\phi$. The convergence of terminal gains follows from the isometry
\eqref{eq:isometry-hedging-norm}.
\end{proof}

\begin{remark}[Measurable spectral projections]
If $Q_M(t)$ admits a predictable eigenbasis, then the projections onto the first $n$ eigenvectors
satisfy the assumptions of Theorem~\ref{thm:finite-rank-stability}. The theorem is stated directly
in terms of predictable projections to avoid relying on a measurable ordering of eigenvectors when
eigenvalues have multiplicities or crossings. In the fixed-eigenbasis model below the projections
are deterministic, so this issue disappears.
\end{remark}

\begin{remark}[Projected hedge and the hedge in a reduced rank model]\label{rmk:projected-vs-restricted}
The integrand $\tilde\phi^{[n]}$ is the $\Lambda_T^{2}$-orthogonal projection of the
\emph{infinite-rank} hedge onto rank-$n$ supported integrands; it is \emph{not}, in general,
the GKW hedge of $K$ computed with respect to a different martingale
$M^{(n)}=\int P_n\sigma\,\mathrm{d}W$ in an approximating finite-factor HJM model.
Whenever the claim martingale $Z_t=\mathbb{E}[K\mid\mathcal{F}_t]$ is held fixed and the gain
space is restricted to spectral rank $n$, the projection $\tilde\phi^{[n]}$ coincides with the
restricted-gain-space optimiser; if the claim itself is recomputed under a different
forward-curve model, additional control on the change of optimiser is required.
\end{remark}

\paragraph{Interpretation for forward–curve options.}
For European payoffs of the form $K=h\bigl(F_T(\tau_1),\dots,F_T(\tau_m)\bigr)$ or curve functionals $K=h(f_T)$ with Lipschitz $h$, the process $q_{Z,M}(t)$ aggregates the predictable covariations between $Z$ and the maturity–wise martingale increments of $M$. 
When the representability and source conditions of Lemma~\ref{lem:gkw-representative} hold, formula
\eqref{eq:pseudoinverse-solution} gives the minimal-norm exposure $\tilde{\xi}_t$ to the continuum of
maturities that best explains, in $L^2$, the movement of $Z$ by movements of $M$. The abstract GKW
integrand exists even when this pointwise representation is unavailable.

\paragraph{Two sanity checks.}
Two extremes fix the interpretation.

\begin{example}[Orthogonal claim]
Assume $Z$ is a square–integrable martingale \emph{orthogonal} to $M$, e.g., $[Z,\langle M,h\rangle_H]\equiv 0$ for all $h\in H$ (independence is sufficient). 
Then $Z-Z_0$ is orthogonal to the gain space $\mathcal G(M)$. By
Theorem~\ref{thm:variance-optimal-hedging},
$$
\tilde{\phi}=0
\quad\text{in }\Lambda_T^2(H,\mathbb R;M),
$$
so the variance–optimal action is \emph{not to trade}. 
Financially: if the claim's pricing martingale $Z$ carries no covariation with the traded risk $M$, no hedge can reduce variance.
\end{example}

\begin{example}[Linear claim in $M$]
Suppose $Z$ is a continuous linear transform of $M$:
$$
Z_t \;=\; B(M_t)\;=\;\langle \xi, M_t\rangle_H,\qquad \xi\in H,\ B\in\mathcal{L}(H,\mathbb{R}).
$$
Then $q_{Z,M}(t)=Q_M(t)\,\xi$ and thus
$$
\tilde{\xi}_t
\;=\;
Q_M^{\dagger}(t)\,q_{Z,M}(t)
\;=\;
Q_M^{\dagger}(t)\,Q_M(t)\,\xi
\;=\;
P_{\overline{\mathrm{Ran}\,Q_M(t)}}\,\xi.
$$
Identifying integrands modulo $\ker Q_M^{1/2}(t)$, the variance–optimal functional can be chosen as
$$
\tilde{\phi}_t(h)=\langle \xi,h\rangle_H,
$$
so the optimal strategy holds the claim's linear exposure to the traded martingale part. 
\end{example}

\paragraph{Attainment, decomposition, and strict viability.}
By Proposition~\ref{prop:closure-portfolio} and Theorem~\ref{thm:variance-optimal-hedging} the
infimum in \eqref{eq:variance-optimal-problem} is attained at $(\mathbb E[K],\tilde\phi)$ with
optimal error $\mathbb E[N_T^2]$. The forward-curve positions of
Section~\ref{sec:trading-strategies} are $H^\ast$-valued, but the Moore--Penrose representative
\eqref{eq:pseudoinverse-solution} of $\tilde\phi_t$ need not be: $Q_M^{\dagger}(t)$ may be
unbounded, so $\tilde\phi$ is always a well-defined element of $\Lambda_T^{2}(H,\mathbb R;M)$ yet
may fail to be a strictly viable portfolio on the curve. This motivates a closer look at $Q_M$.

\paragraph{Structure of the covariance density.}
In our HJMM market,
$$
Q_M(t)\;=\;\Sigma_t\;=\;\sigma_t\sigma_t^{\ast}\in\mathcal{L}(H),
$$
the predictable, self–adjoint, non–negative covariance density of $M$. 
No normalisation (e.g., by a trace) is needed: for $h,k\in H$,
$$
\frac{\mathrm{d}}{\mathrm{d}t}\bigl[\langle M,h\rangle_H,\langle M,k\rangle_H\bigr]_t
\;=\;\langle \Sigma_t h,\,k\rangle_H.
$$
Since $\sigma_t\in\mathcal{L}_2(G,H)$ is Hilbert–Schmidt, $\Sigma_t$ is trace–class and hence compact.

The operator-level distinction is simple: finite-rank covariance densities have bounded
Moore--Penrose inverses, while infinite-rank compact covariance densities force claim-dependent
representability conditions.

\begin{proposition}\label{prop:dichotomy}
Let $Q_M(t)=\Sigma_t=\sigma_t\sigma_t^\ast$ be the (compact, trace-class) covariance density.
Exactly one of the following holds at each $(t,\omega)$.
\begin{enumerate}
\item[(i)] \emph{Finite rank.} $\mathrm{Ran}\,Q_M(t)$ is closed, equivalently $Q_M(t)$ has finite
rank, equivalently the pseudo-inverse $Q_M^{\dagger}(t)$ is bounded. Then the Moore--Penrose
representative \eqref{eq:pseudoinverse-solution} of $\tilde\phi_t$, when defined, is a bounded
$H^\ast$-valued (strictly viable) portfolio.
\item[(ii)] \emph{Infinite rank.} $\mathrm{Ran}\,Q_M(t)$ is dense in its closed support
$\overline{\mathrm{Ran}\,Q_M(t)}=(\ker Q_M(t))^\perp$ but is not closed in that support, and
$Q_M^{\dagger}(t)$ is unbounded. Density in all of $H$ holds only under the extra assumption that
$Q_M(t)$ is injective. 
\end{enumerate}
In both cases the optimiser $\tilde\phi$ exists in the completion
$\Lambda_T^{2}(H,\mathbb R;M)$.
\end{proposition}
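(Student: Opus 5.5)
The plan is to fix $(t,\omega)$ and argue purely at the operator level for the compact, self-adjoint, non-negative operator $Q:=Q_M(t)$. Since $\sigma_t\in\mathcal L_2(G,H)$, the operator $Q=\sigma_t\sigma_t^\ast$ is trace class, so it is compact, and the spectral resolution $Q=\sum_k\lambda_k\langle\cdot,e^{(k)}\rangle_H e^{(k)}$ preceding the proposition applies. The dichotomy is exhaustive and exclusive simply because $\mathrm{rank}\,Q$ is either finite or infinite. What remains is to verify the listed equivalences and consequences in each case.

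For case (i), I would apply Lemma~\ref{lem:closed-range-finite-rank}: closed range is equivalent to finite rank for compact $Q$. For the pseudo-inverse I would use the spectral formula $Q^\dagger h=\sum_{\lambda_k>0}\lambda_k^{-1}\langle h,e^{(k)}\rangle e^{(k)}$, so that $\|Q^\dagger\|=\sup_{\lambda_k>0}\lambda_k^{-1}$. This supremum is finite when there are finitely many positive eigenvalues. When there are infinitely many, the eigenvalues satisfy $\lambda_k\to0$ by compactness, and the supremum is infinite. This gives the equivalence with boundedness and matches the statement recorded after the lemma. Then, when the Moore--Penrose representative $\tilde\xi_t=Q^\dagger q_{Z,M}(t)$ of Lemma~\ref{lem:gkw-representative} is defined, it is an element of $H$. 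Its Riesz image $\tilde\phi_t=\langle\tilde\xi_t,\cdot\rangle_H$ is therefore a bounded functional in $H^\ast$, which means it is strictly viable. Predictability is inherited from the hypothesis of that lemma.

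For case (ii), self-adjointness gives $\overline{\mathrm{Ran}\,Q}=(\ker Q^\ast)^\perp=(\ker Q)^\perp$. By definition the range is dense in this closure. If the range were closed in $(\ker Q)^\perp$, it would be closed in $H$, because $(\ker Q)^\perp$ is a closed subspace. Lemma~\ref{lem:closed-range-finite-rank} would then force finite rank, a contradiction. Unboundedness of $Q^\dagger$ follows from $\lambda_k\to0$ along infinitely many positive eigenvalues: the vectors $e^{(k)}$ have norm one and satisfy $\|Q^\dagger e^{(k)}\|=\lambda_k^{-1}\to\infty$. Density in all of $H$ is equivalent to $(\ker Q)^\perp=H$, that is, to injectivity. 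This is why density in the whole space needs the extra assumption.

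The final assertion does not depend on the case. Existence of $\tilde\phi$ in $\Lambda_T^2(H,\mathbb R;M)$ is exactly Proposition~\ref{prop:closure-portfolio} together with Theorem~\ref{thm:variance-optimal-hedging}, which use only the completion and the isometry and involve no pseudo-inverse. The proof is elementary throughout. The only step requiring care is case (ii), where "not closed in its support" has to be reduced to "not closed in $H$"; the closedness of $(\ker Q)^\perp$ handles this.
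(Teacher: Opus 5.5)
Your proposal is correct and follows essentially the same route as the paper. Both use Lemma~\ref{lem:closed-range-finite-rank} for closed range versus finite rank, $\overline{\mathrm{Ran}\,Q}=(\ker Q)^\perp$ by self-adjointness, and Proposition~\ref{prop:closure-portfolio} with Theorem~\ref{thm:variance-optimal-hedging} for existence of $\tilde\phi$ in the completion. The only differences are cosmetic. You derive the (un)boundedness of $Q^\dagger$ directly from the spectral formula ($\|Q^\dagger e^{(k)}\|_H=\lambda_k^{-1}\to\infty$) instead of citing the appendix, and you spell out that closedness relative to the closed subspace $(\ker Q)^\perp$ is the same as closedness in $H$.
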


\begin{proof}
By Lemma~\ref{lem:closed-range-finite-rank} a compact operator has closed range iff it has finite
rank, and $Q_M^\dagger(t)$ is bounded iff $\mathrm{Ran}\,Q_M(t)$ is closed (Appendix~\ref{app:operator-prelim}).
For a self-adjoint non-negative operator,
$\overline{\mathrm{Ran}\,Q_M(t)}=(\ker Q_M(t))^\perp$. Thus in infinite rank the range is dense
in this support but not closed there; otherwise it would be a closed infinite-dimensional range of
a compact operator. The remaining statements follow from Lemma~\ref{lem:gkw-representative},
Theorem~\ref{thm:variance-optimal-hedging}, Proposition~\ref{prop:epsilon-hedge}, and the two
examples cited in the proposition.
\end{proof}

See Example~\ref{ex:infinite-rank-nonviable} for an explicit instance in which
the variance-optimal hedge exists in the completion but has no strictly viable representative.

\begin{remark}
In infinite rank, some admissible claim martingales have formal covariation coefficients that fail
the range/source condition
$\sum_{\lambda_k(t)>0}q_k(t)^2/\lambda_k(t)^2<\infty$, and
Lemma~\ref{lem:gkw-representative} then does not provide a pointwise $H$-valued representative.
Conversely, smoother or covariance-aligned claims can have bounded representatives even in infinite
rank; the linear claim above is one example. The closed-form floor model in
Section~\ref{subsubsec:closed-form-floor} illustrates a different point: even finite-rank
covariance does not preclude a positive GKW residual when the filtration contains non-traded
covariance noise on which the claim loads. In all cases, Proposition~\ref{prop:epsilon-hedge} gives
$\varepsilon$-viable approximations within the covariance support, so boundedness alone creates no
positive viability floor.
  \end{remark}

\subsection{Finite–rank models}
Assume the volatility has an invariant finite–dimensional range:
\begin{align}\label{eq:finite-invariant-range}
\mathrm{span}\Bigl(\bigcup_{t\in[0,T]}\mathrm{Ran}(\sigma_t)\Bigr)
=\mathrm{span}\{h^{(1)},\dots,h^{(n)}\}\subset H
\quad\text{a.s.}
\end{align}
Choose $(h^{(1)},\dots,h^{(n)})$ orthonormal. Then there exist predictable
$G$-valued processes $\eta^{(1)},\dots,\eta^{(n)}$ such that
\begin{align}\label{eq:finite-rank-factorisation}
\sigma_t
=\sum_{k=1}^n h^{(k)}\otimes \eta_t^{(k)},
\qquad
M_t
=\sum_{k=1}^n h^{(k)}\,\beta_t^{(k)},
\quad
\beta_t^{(k)}:=\int_0^t \langle \eta_s^{(k)},\mathrm{d}W_s\rangle_G.
\end{align}
The factor martingales $\beta^{(k)}$ need not be independent Brownian motions. Their predictable
covariation matrix is
$$
\mathrm d[\beta^{(i)},\beta^{(j)}]_t
=
a_{ij}(t)\,\mathrm dt,
\qquad
a_{ij}(t):=\langle\eta_t^{(i)},\eta_t^{(j)}\rangle_G.
$$
Consequently,
$$
Q_M(t)=\Sigma_t=\sigma_t\sigma_t^\ast
\;=\;
\sum_{i,j=1}^n a_{ij}(t)\, h^{(i)}\otimes h^{(j)}
$$
has rank at most $n$. If $a(t)$ is diagonal, then the chosen range basis is also a covariance
eigenbasis; otherwise diagonalisation is a separate, time-dependent finite-dimensional operation.
The market is therefore $n$-factor for quadratic hedging, but completeness additionally requires
that the filtration contain no orthogonal noise beyond the factor martingales and that the factor
covariance have the appropriate rank. If those conditions fail, the residual term $N$ in
\eqref{eq:price-decomposition-final} remains non-zero.

If each $Q_M(t)$ has finite rank but no invariant finite-dimensional subspace satisfies
\eqref{eq:finite-invariant-range}, then $Q_M^\dagger(t)$ is bounded pointwise, but the active
directions may vary over time. The model is locally finite-rank but need not admit a fixed
finite-dimensional HJM realisation over $[0,T]$.

\subsection{Infinite--rank models}
Assume that $Q_M(t)=\sigma_t\sigma_t^\ast$ has infinite rank on a set of positive
$\mathrm{d}t\otimes\mathbb{P}$-measure. Then the Moore--Penrose pseudo-inverse
$Q_M^{\dagger}(t)$ is unbounded, and the variance-optimal integrand $\tilde\phi$ of
Theorem~\ref{thm:variance-optimal-hedging} exists as an element of $\Lambda_T^{2}(H,\mathbb{R};M)$
but need not be representable by a bounded ($H^\ast$-valued) integrand pointwise. Denote the
strictly viable subset
$$
\hat\Lambda:=\bigl\{\phi\in\Lambda_T^{2}(H,\mathbb{R};M): \phi \text{ admits a predictable } H^\ast\text{-valued representative}\bigr\}.
$$
By construction $\hat\Lambda\subset\Lambda_T^{2}(H,\mathbb{R};M)$ and, by density of simple
predictable finite-maturity processes in the covariance-norm completion
(Lemma~\ref{lem:predictable-finite-maturity-density}), $\hat\Lambda$ is \emph{dense} in
$\Lambda_T^{2}(H,\mathbb{R};M)$.
In particular $\tilde\phi$ admits arbitrarily good approximations in $\hat\Lambda$; the strict
viability restriction therefore does not, on its own, introduce a positive viability floor.

\begin{example}[An infinite-rank hedge with no strictly viable representative]
\label{ex:infinite-rank-nonviable}
Fix an orthonormal basis $(e_k)_{k\ge1}$ of $H$, take $G=H$, and a deterministic, time-independent
diagonal volatility $\sigma e_k=\sqrt{\lambda_k}\,e_k$ with $\lambda_k=k^{-2}$. Then
$\sum_k\lambda_k=\pi^2/6<\infty$, so \emph{(A3)} holds, $Q_M=\Sigma=\operatorname{diag}(\lambda_k)$
has infinite rank, and $Q_M^{\dagger}$ is unbounded. Write $M_t=\sum_{k\ge1}k^{-1}e_k\,\beta_t^{k}$
with independent standard Brownian motions $(\beta^{k})_{k\ge1}$, and let $\mathcal F$ be their
natural filtration. Since every $\lambda_k>0$, the operator $\sigma$ is injective with dense range,
so $\overline{\operatorname{Ran}\sigma^\ast}=G$ and the market is approximately complete with
$N\equiv0$ by Corollary~\ref{cor:approx-completeness}.

Consider the centred claim
$$
K:=\sum_{k\ge1}k^{-3/2}\,\beta_T^{k}\in L^2(\mathbb P),
\qquad \operatorname{Var}(K)=T\sum_{k\ge1}k^{-3}<\infty,
$$
which is the formal forward-curve exposure $K=\langle M_T,\eta\rangle$ to the generalised maturity
direction $\eta=\sum_k k^{-1/2}e_k\notin H$ (indeed $\|\eta\|_H^2=\sum_k k^{-1}=\infty$): the claim
loads on low-variance, high-frequency curve directions with weights decaying more slowly than the
corresponding volatilities. The variance-optimal integrand solves
$\tilde\phi_s(e_k)\sqrt{\lambda_k}=k^{-3/2}$, that is $\tilde\xi_k=k^{-1/2}$. Hence
$$
\|\tilde\xi\|_H^2=\sum_{k\ge1}k^{-1}=\infty,
\qquad\text{but}\qquad
\|\tilde\phi\|_{\Lambda_T^2(M)}^2=T\sum_{k\ge1}\lambda_k\,\tilde\xi_k^2=T\sum_{k\ge1}k^{-3}
=\operatorname{Var}(K)<\infty .
$$
Thus $\tilde\phi$ is a genuine element of $\Lambda_T^2(H,\mathbb R;M)$ with exact replication, yet it
admits no bounded $H^\ast$-valued (strictly viable) representative: this is precisely the
infinite-rank alternative of Proposition~\ref{prop:dichotomy}(ii). The spectral truncations
$\tilde\xi^{(n)}=\sum_{k\le n}k^{-1/2}e_k\in H$ are strictly viable rank-$n$ hedges with
$$
\|\tilde\phi-\tilde\phi^{[n]}\|_{\Lambda_T^2(M)}^2=T\sum_{k>n}k^{-3}\xrightarrow[n\to\infty]{}0 ,
$$
and each is approximable by finite point- or window-maturity strategies through
Lemma~\ref{lem:predictable-finite-maturity-density}. The hedge is therefore implementable to any
accuracy by viable portfolios (Proposition~\ref{prop:epsilon-hedge}) although no single finite basket
of forwards replicates the claim. This is the strict-viability failure of infinite rank in its purest
form, with $N\equiv0$ separating it cleanly from the stochastic-volatility floor of
Section~\ref{subsubsec:closed-form-floor}.
\end{example}

\paragraph{Two-component risk decomposition.}
For any implemented strategy $\phi\in\hat\Lambda$ with initial endowment $\tilde u=\mathbb{E}[K]$,
the residual quadratic risk splits exactly as
\begin{align}\label{eq:two-way-risk}
\mathbb{E}\Bigl[\bigl(\tilde u+G_T(\phi)-K\bigr)^2\Bigr]
=
\underbrace{\|\phi-\tilde\phi\|_{\Lambda_T^{2}}^2}_{\text{implementation/truncation gap}}
\;+\;
\underbrace{\mathbb{E}[N_T^2]}_{\text{unhedgeable residual}},
\end{align}
since $G_T(\phi)-G_T(\tilde\phi)\in\mathcal{G}(M)$ is orthogonal to $N_T$ in $L^2$ by GKW.
If the volatility is \emph{deterministic} and the filtration is generated by $W$, and if the
covariance support condition of \cite[Thm.~4.1]{Carmona2007} holds, then $N\equiv 0$ and the
residual is purely the implementation/truncation gap (Corollary~\ref{cor:approx-completeness}).

In stochastic-volatility settings, the part of this residual generated by covariance shocks that
are not spanned by forward trading is the \emph{stochastic-volatility floor} for claims loading on
those shocks; in the abstract theorem it is best read as the component of the claim martingale
orthogonal to the traded forward-curve gains.

\begin{corollary}[Approximate completeness under deterministic volatility]\label{cor:approx-completeness}
Suppose that:
\begin{enumerate}
\item[(a)] the volatility $\sigma$ is deterministic;
\item[(b)] the filtration $(\mathcal{F}_t)_{t\geq 0}$ coincides with the augmentation of the natural
filtration of the cylindrical Brownian motion $W$ by the $\mathbb{P}$-null sets of $\mathcal{F}$;
\end{enumerate}
Let $K\in L^2(\mathcal{F}_T,\mathbb{P})$ and write its Brownian martingale representation as
$K=\mathbb{E}[K]+\int_0^T\psi_s\,\mathrm{d}W_s$. If the fixed-claim support condition
\begin{align}\label{eq:fixed-claim-support}
\psi_t\in\overline{\mathrm{Ran}\,\sigma_t^\ast}
\quad\text{for a.e. }(t,\omega)
\end{align}
holds, then $K$ admits the representation
$K=\mathbb{E}[K]+\int_0^T\tilde\phi_s\,\mathrm{d}M_s$ with $N\equiv 0$ in
\eqref{eq:price-decomposition-final}. Consequently, for every $\varepsilon>0$ there is a realistic
$\phi^{(\varepsilon)}\in\hat\Lambda$ with
$\mathbb{E}[(\tilde u+G_T(\phi^{(\varepsilon)})-K)^2]<\varepsilon$.
In particular, if the claim-independent support condition
$$
\overline{\mathrm{Ran}\,\sigma_t^\ast}=G,
\quad\text{equivalently }\ker\sigma_t=\{0\},
\quad\text{for a.e. }t
$$
holds, then every $K\in L^2(\mathcal{F}_T,\mathbb{P})$ has this property and the market is
approximately complete.
This is the traded-Brownian-filtration regime from the paragraph preceding
Theorem~\ref{thm:variance-optimal-hedging}.
\end{corollary}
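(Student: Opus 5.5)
The plan is to show that the Brownian integrand $\psi$ of $K$ factors through $\sigma$ in the covariance norm, so that $\int\psi\,\mathrm dW$ is itself an element of the gain space $\mathcal G(M)$.

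\textbf{Step 1: martingale representation.} Under (b), the Itô representation theorem for cylindrical Brownian motion gives a predictable $\psi$ with values in $G^\ast\cong G$ and $\mathbb E\int_0^T\|\psi_s\|_G^2\,\mathrm ds=\operatorname{Var}(K)<\infty$. Deterministic volatility (a) is used only to make $\sigma_t$ non-random. This means that approximating operators built from $\sigma_t$ alone, such as spectral cut-offs of $\sigma_t\sigma_t^\ast$ or of $\sigma_t^\ast\sigma_t$, are automatically predictable.

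\textbf{Step 2: approximation inside $\overline{\mathrm{Ran}\,\sigma_t^\ast}$.} This is the main step. For bounded $\phi_t\in H^\ast$, the gain $\int\phi_t\,\mathrm dM_t$ equals $\int(\phi_t\sigma_t)\,\mathrm dW_t$, and $\phi_t\sigma_t$ corresponds to $\sigma_t^\ast J^{-1}\phi_t\in\mathrm{Ran}\,\sigma_t^\ast$, where $J$ is the Riesz map. I would build predictable $H$-valued $\xi^{(n)}_t$ with
\[
\mathbb E\int_0^T\|\sigma_t^\ast\xi^{(n)}_t-\psi_t\|_G^2\,\mathrm dt\to0.
\]
The candidate is $\xi^{(n)}_t:=(\sigma_t^\ast)^\dagger_n\psi_t$, the truncated pseudo-inverse that keeps only singular values of $\sigma_t$ above $1/n$ (cut off further where $\|\psi_t\|>n$ to keep $\xi^{(n)}$ bounded). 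Then $\sigma_t^\ast\xi^{(n)}_t=P^{(n)}_t\psi_t$, where $P^{(n)}_t$ is the spectral projection of $\sigma_t^\ast\sigma_t$ onto eigenvalues $>1/n^2$. Since $P^{(n)}_t\uparrow P_{\overline{\mathrm{Ran}\,\sigma_t^\ast}}$ strongly, the fixed-claim support condition \eqref{eq:fixed-claim-support} gives $P^{(n)}_t\psi_t\to\psi_t$ pointwise. Dominated convergence, with dominating function $\|\psi_t\|_G^2$, then yields the $L^2$ convergence. The main obstacle is predictability of $\xi^{(n)}$ and handling possible multiplicities. Because $\sigma$ is deterministic, Borel functional calculus on the measurable map $t\mapsto\sigma_t^\ast\sigma_t$ gives measurable spectral projections without ordering eigenvectors, and composing them with the predictable $\psi$ keeps $\xi^{(n)}$ predictable.

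\textbf{Step 3: passage to the limit.} By the Itô isometry of Proposition~\ref{prop:ito-isometry-M}, the sequence $\phi^{(n)}:=\langle\xi^{(n)},\cdot\rangle_H$ is Cauchy in $\Lambda_T^2(H,\mathbb R;M)$, since its norm equals $\mathbb E\int\|\sigma_t^\ast\xi_t\|_G^2\,\mathrm dt$. It therefore converges to some $\tilde\phi$ with $G_T(\tilde\phi)=\int\psi\,\mathrm dW=K-\mathbb E[K]$. This places $K$ in $\mathcal C$, which is closed by Proposition~\ref{prop:closure-portfolio}, so its projection onto $\mathcal C$ is $K$ itself and $N_T=0$. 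Since $N$ is a martingale, $N\equiv0$, and uniqueness in Theorem~\ref{thm:variance-optimal-hedging} identifies $\tilde\phi$ as the GKW integrand. The $\varepsilon$-statement then follows from Proposition~\ref{prop:epsilon-hedge}, as the error reduces to $\|\phi^{(\varepsilon)}-\tilde\phi\|^2_{\Lambda^2_T}$. Finally, if $\ker\sigma_t=\{0\}$, then $\overline{\mathrm{Ran}\,\sigma_t^\ast}=(\ker\sigma_t)^\perp=G$, so \eqref{eq:fixed-claim-support} holds for every $K$, which gives approximate completeness.
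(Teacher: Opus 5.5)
Your proof is correct and follows the same route as the paper. Under (b) you use the Brownian martingale representation, then show that $\int\psi\,\mathrm dW$ lies in the closed gain space $\mathcal G(M)$ because $\psi_t\in\overline{\mathrm{Ran}\,\sigma_t^\ast}$, conclude $N\equiv0$, and finish with Proposition~\ref{prop:epsilon-hedge}. The one difference is that the paper only asserts that $\mathcal G(M)$ is the $L^2$-image of $W$-integrands with values in $\overline{\mathrm{Ran}\,\sigma_s^\ast}$, whereas your truncated pseudo-inverse $\xi^{(n)}_t$, with $\sigma_t^\ast\xi^{(n)}_t=P^{(n)}_t\psi_t$, together with dominated convergence actually proves the inclusion that is needed.
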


\begin{proof}
Under (a)--(b), the Brownian martingale representation gives
$K=\mathbb{E}[K]+\int_0^T\psi_s\,\mathrm{d}W_s$ for some predictable $G$-valued $\psi$.
Writing the gain integrands as $\phi_s\sigma_s=\sigma_s^\ast\xi_s$, the closed gain space
$\mathcal{G}(M)$ is the $L^2$-image of the integrands $\eta$ with
$\eta_s\in\overline{\operatorname{Ran}\sigma_s^\ast}$ for a.e.\ $s$. Hence the Brownian-representation
integrand $\psi$ of $K$ is matched by an admissible gain, and the GKW residual vanishes, precisely
when \eqref{eq:fixed-claim-support} holds; this is the support condition of
\cite[Thm.~4.1]{Carmona2007} written for the present deterministic Brownian setting. The
claim-independent dense-range condition implies \eqref{eq:fixed-claim-support} for every Brownian
representation integrand. The realistic approximation follows from Proposition~\ref{prop:epsilon-hedge}.
\end{proof}

\begin{proposition}[Three-way mean-square hedging error decomposition]\label{thm:three-way}
Let $\mathcal{C}_{\mathrm{bucket}}\subset\mathcal{C}_n\subset\mathcal{C}$ be nested closed
subspaces of $L^2(\Omega,\mathcal{F}_T,\mathbb{P})$ representing, respectively, attainable
terminal wealths from a fixed realistic finite-maturity bucket, from rank-$n$ spectrally
truncated trading, and from the full generalised market $\mathcal{C}=\mathbb{R}+\mathcal{G}(M)$.
Let $\Pi_{\mathrm{bucket}}, \Pi_n, \Pi$ denote the corresponding $L^2$-orthogonal projections.
For the hedging-norm identifications below, assume in addition that $\mathcal C_n$ is the closed
rank-$n$ spectral gain space associated with the projections in
Theorem~\ref{thm:finite-rank-stability}, and that $\mathcal C_{\mathrm{bucket}}$ is realised by a
closed finite-maturity subspace of $\mathcal C_n$.
Then the optimal bucket-and-rank hedge $\Pi_{\mathrm{bucket}}K$ satisfies
\begin{align}\label{eq:three-way-risk}
\|K-\Pi_{\mathrm{bucket}}K\|_{L^2}^{2}
=
\underbrace{\|\Pi_n K-\Pi_{\mathrm{bucket}}K\|_{L^2}^{2}}_{\text{(i) bucket implementation gap}}
\;+\;
\underbrace{\|\Pi K-\Pi_n K\|_{L^2}^{2}}_{\text{(ii) rank-}n\text{ truncation gap}}
\;+\;
\underbrace{\|K-\Pi K\|_{L^2}^{2}}_{\text{(iii) irreducible residual}}.
\end{align}
The first term equals $\|\phi^{(n)}-\phi^{(n)}_{\mathrm{bucket}}\|_{\Lambda_T^{2}}^{2}$ for the
corresponding hedges; the second equals $\|\tilde\phi-\tilde\phi^{[n]}\|_{\Lambda_T^{2}}^{2}$ and
vanishes as $n\to\infty$ by Theorem~\ref{thm:finite-rank-stability}; the third equals
$\mathbb{E}[N_T^{2}]$ and is irreducible.
\end{proposition}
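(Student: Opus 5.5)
The decomposition~\eqref{eq:three-way-risk} is a Pythagorean identity for nested orthogonal projections in $L^2$, and the plan is to prove it in that form first and only afterwards identify each term in the hedging norm. Write
$$
K-\Pi_{\mathrm{bucket}}K=(\Pi_nK-\Pi_{\mathrm{bucket}}K)+(\Pi K-\Pi_nK)+(K-\Pi K).
$$
Because $\mathcal C_{\mathrm{bucket}}\subset\mathcal C_n\subset\mathcal C$ are closed, we have $\Pi_{\mathrm{bucket}}=\Pi_{\mathrm{bucket}}\Pi_n$ and $\Pi_n=\Pi_n\Pi$. Hence $\Pi_nK-\Pi_{\mathrm{bucket}}K=(I-\Pi_{\mathrm{bucket}})\Pi_nK$ lies in $\mathcal C_n\ominus\mathcal C_{\mathrm{bucket}}$. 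Likewise $\Pi K-\Pi_nK\in\mathcal C\ominus\mathcal C_n$ and $K-\Pi K\in\mathcal C^\perp$. These three subspaces are mutually orthogonal, so the cross terms vanish and~\eqref{eq:three-way-risk} follows.

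For the hedging-norm identifications, note that every subspace contains $\mathbb R$: a rank-$n$ or bucket wealth always carries an initial capital. Since gains are centred, each projection has scalar part $\mathbb E[K]$, so $\Pi K=\mathbb E[K]+G_T(\tilde\phi)$, $\Pi_nK=\mathbb E[K]+G_T(\phi^{(n)})$ and $\Pi_{\mathrm{bucket}}K=\mathbb E[K]+G_T(\phi^{(n)}_{\mathrm{bucket}})$, with integrands in the corresponding closed subspaces of $\Lambda_T^2(M)$. The first term is then $\mathbb E[(G_T(\phi^{(n)})-G_T(\phi^{(n)}_{\mathrm{bucket}}))^2]$, which equals $\|\phi^{(n)}-\phi^{(n)}_{\mathrm{bucket}}\|^2_{\Lambda_T^2}$ by the isometry~\eqref{eq:isometry-closure}. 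The third term equals $\mathbb E[N_T^2]$ by Theorem~\ref{thm:variance-optimal-hedging}, since $N_T=K-\Pi_{\mathcal C}K$.

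The main step is the second term, namely showing $\phi^{(n)}=\tilde\phi^{[n]}$, i.e.\ $\Pi_nK=\mathbb E[K]+G_T(\tilde\phi^{[n]})$. The isometry carries orthogonal projections in $\Lambda_T^2(M)$ to orthogonal projections in $\mathcal G(M)$. Theorem~\ref{thm:finite-rank-stability} asserts that $\tilde\phi\mapsto\tilde\phi^{[n]}$ is the $\Lambda_T^2$-orthogonal projection onto integrands supported on $P_n(t)H$, whose gain image is $\mathcal C_n\ominus\mathbb R$. I would verify this orthogonality directly. For elementary $\phi,\psi$ one has
$$
\langle\phi-\phi P_n,\psi P_n\rangle_{\Lambda_T^2}=\mathbb E\int_0^T\phi(I-P_n)Q_M P_n\psi^\ast\,\mathrm dt=0,
$$
which holds because $P_n$ commutes with $Q_M$, so that $(I-P_n)Q_MP_n=Q_M(I-P_n)P_n=0$. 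The identity then extends to the completion by continuity.

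Next, $K-\mathbb E[K]-G_T(\tilde\phi)=N_T\perp\mathcal G(M)\supset\mathcal C_n\ominus\mathbb R$. Therefore $\Pi_nK=\mathbb E[K]+\Pi_n G_T(\tilde\phi)=\mathbb E[K]+G_T(\tilde\phi^{[n]})$. This gives $\|\Pi K-\Pi_nK\|^2=\|\tilde\phi-\tilde\phi^{[n]}\|^2_{\Lambda_T^2}$, which tends to $0$ by~\eqref{eq:projected-convergence}. The delicate point is ensuring that $\mathcal C_n$ is exactly $\mathbb R$ plus the closure of gains of $P_n$-supported integrands; I take this as the assumed meaning of ``closed rank-$n$ spectral gain space'' in the statement.
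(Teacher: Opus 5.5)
Your proposal is correct and follows the paper's own route. That route is the Pythagorean identity along the nested flag $\mathcal C_{\mathrm{bucket}}\subset\mathcal C_n\subset\mathcal C$, then the It\^o isometry, together with the fact that on the rank-$n$ spectral gain space the GKW projection is the spectrally projected hedge $\tilde\phi^{[n]}$; the paper only asserts that last identification, whereas you verify it (via $(I-P_n)Q_MP_n=0$, $N_T\perp\mathcal C\supset\mathcal C_n$, and the implicit convention $\mathbb R\subset\mathcal C_{\mathrm{bucket}}$), which fills in detail rather than changing the approach.
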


\begin{proof}
This is the Pythagorean identity along a nested flag of closed subspaces; the point is that the
three projection increments have direct hedging interpretations.
By the Pythagorean theorem for nested closed Hilbert subspaces, the orthogonal
decomposition
$K - \Pi_{\mathrm{bucket}}K
= (K-\Pi K) + (\Pi K-\Pi_n K) + (\Pi_n K-\Pi_{\mathrm{bucket}}K)$
has pairwise orthogonal summands, and \eqref{eq:three-way-risk} follows by taking $L^2$-norms.
The identifications with hedging-norm quantities follow from the Itô isometry
\eqref{eq:isometry-hedging-norm} and from the additional spectral realisation of $\mathcal C_n$:
restricted to that rank-$n$ gain space, the GKW projection coincides with the spectrally projected
hedge of Theorem~\ref{thm:finite-rank-stability}.
\end{proof}

\begin{remark}[Non-optimal implemented strategies]
For an arbitrary implemented strategy $\phi_{\mathrm{bucket}}\in\hat\Lambda$ that is \emph{not}
itself the projection of $\tilde\phi$ onto $\mathcal{C}_{\mathrm{bucket}}$, only the inequality
$$
\mathbb{E}\bigl[(\tilde u+G_T(\phi_{\mathrm{bucket}})-K)^2\bigr]
\le
3\,\|\phi_{\mathrm{bucket}}-\tilde\phi^{[n]}\|_{\Lambda_T^{2}}^{2}
+3\,\|\tilde\phi^{[n]}-\tilde\phi\|_{\Lambda_T^{2}}^{2}
+3\,\mathbb{E}[N_T^{2}]
$$
follows from $(a+b+c)^2\le 3(a^2+b^2+c^2)$. The exact equality~\eqref{eq:three-way-risk}
holds only along nested orthogonal projections.
\end{remark}

\begin{remark}[Buckets as standardised delivery periods]\label{rmk:delivery-buckets}
In an energy market the bucket subspace $\mathcal C_{\mathrm{bucket}}$ is not generated by
arbitrary point maturities but by the \emph{standardised, traded delivery windows} (e.g.\ monthly,
quarterly and calendar-year base-load), i.e.\ by the averaging functionals
$\ell_{[a_i,b_i]}$ of~\eqref{eq:window-functional} for a fixed, possibly overlapping family
$\{[a_i,b_i]\}$. The bucket implementation gap (i) in~\eqref{eq:three-way-risk} is then the
$\Lambda_T^2$-distance from the rank-$n$ hedge $\tilde\phi^{[n]}$ to
$\mathrm{span}\{\ell_{[a_i,b_i]}\}$, that is, the cost of replacing a continuum of maturities by the
liquid delivery ladder. Unlike the idealised point-maturity grid, this family is finite and fixed
by the exchange, so the gap need not vanish; Proposition~\ref{prop:density} guarantees only that
\emph{refining} the delivery ladder drives it to zero.
\end{remark}

\paragraph{Constructive viable approximation.}
Assume the representability and source conditions of Lemma~\ref{lem:gkw-representative}, and let
$(P_n)_{n\ge1}$ be predictable finite-rank projections satisfying the assumptions of
Theorem~\ref{thm:finite-rank-stability}. Then the projected hedges
$\tilde\phi^{[n]}$ are the canonical finite-rank approximations in the covariance-norm quotient and
$\tilde\phi^{[n]}\to\tilde\phi$ in $\|\cdot\|_{\Lambda_T^{2}}$. If, in addition, the projections are
given by a predictable spectral selection
$Q_M(t)=\sum_{k\ge 1}\lambda_k(t)\,\langle\cdot,e^{(k)}_t\rangle_H\,e^{(k)}_t$ and the truncated
Moore--Penrose representative admits a predictable version, then the approximation has the explicit
$H^\ast$-valued representative
$$
\phi^{(n)}_t(h):=\Bigl\langle \sum_{k=1}^n \tilde{\xi}_t^{(k)}\,e^{(k)}_t,\,h\Bigr\rangle_H,
\qquad
\tilde{\xi}_t^{(k)}:=\frac{\langle q_{Z,M}(t),e^{(k)}_t\rangle_H}{\lambda_k(t)},
$$
on $\{\lambda_k(t)>0\}$, set to zero on the complementary set. Under this additional predictable
selection hypothesis, $\phi^{(n)}\in\hat\Lambda$ for each fixed $n$ and coincides with
$\tilde\phi^{[n]}$. Realistic finite-maturity portfolios in turn approximate each such
$\phi^{(n)}$ arbitrarily well by Lemma~\ref{lem:predictable-finite-maturity-density}.

\subsection{Finite-factor models as convergent restricted markets}
\label{sec:finite-factor-mosco}

Finite-factor convergence is safest when stated as stability of orthogonal projections onto
attainable wealth spaces. This avoids the discontinuity of the map
$(Q,q)\mapsto Q^\dagger q$ in infinite dimension.

\begin{proposition}[Stability under Mosco convergence of attainable wealth spaces]
\label{thm:mosco-vo-stability}

Let $\mathcal C_n$ and $\mathcal C$ be closed subspaces of
$L^2(\Omega,\mathcal F_T,\mathbb P)$. Suppose that $\mathcal C_n$ converges to $\mathcal C$ in the
Mosco sense, i.e. assume the following two conditions:
\begin{enumerate}
\item[(M1)] for every $X\in\mathcal C$, there are $X_n\in\mathcal C_n$ such that
$X_n\to X$ strongly in $L^2$;
\item[(M2)] whenever $n_j\to\infty$, $X_{n_j}\in\mathcal C_{n_j}$ and
$X_{n_j}\rightharpoonup X$ weakly in $L^2$, one has $X\in\mathcal C$.
\end{enumerate}
Then
$$
\Pi_{\mathcal C_n}Y\longrightarrow \Pi_{\mathcal C}Y
\qquad\text{in }L^2,\qquad Y\in L^2(\Omega,\mathcal F_T,\mathbb P).
$$
If $K^{(n)}\to K$ in $L^2$, then
$$
\Pi_{\mathcal C_n}K^{(n)}\longrightarrow \Pi_{\mathcal C}K
\qquad\text{in }L^2.
$$
\end{proposition}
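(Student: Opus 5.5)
The plan is the standard Hilbert-space argument for Mosco convergence of closed subspaces: show that the projections $P_n:=\Pi_{\mathcal C_n}$ converge strongly to $P:=\Pi_{\mathcal C}$. Then add a uniform-boundedness estimate to handle the moving claims $K^{(n)}$.

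Fix $Y\in L^2$ and set $X_n:=P_nY$. Since $\|P_n\|\le1$, the sequence $(X_n)$ is bounded. Take any subsequence; by weak compactness it has a further subsequence $X_{n_j}\rightharpoonup X$. By (M2), $X\in\mathcal C$.

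Next I would identify the limit as $X=PY$. For arbitrary $Z\in\mathcal C$, (M1) gives $Z_n\in\mathcal C_n$ with $Z_n\to Z$ strongly. Orthogonality gives $\langle Y-X_{n_j},Z_{n_j}\rangle=0$. Weak convergence of $Y-X_{n_j}$ paired with strong convergence of $Z_{n_j}$ lets us pass to the limit and obtain $\langle Y-X,Z\rangle=0$. Hence $Y-X\perp\mathcal C$ and $X\in\mathcal C$, so $X=PY$. Because every subsequence has a further subsequence with this same weak limit, $P_nY\rightharpoonup PY$ along the whole sequence.

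The main obstacle is upgrading weak convergence to strong convergence, and I would do it through norms. By (M1), choose $\hat X_n\in\mathcal C_n$ with $\hat X_n\to PY$ strongly. Best approximation gives
$$\|Y-P_nY\|\le\|Y-\hat X_n\|\to\|Y-PY\|,$$
so $\limsup_n\|Y-P_nY\|\le\|Y-PY\|$. Weak lower semicontinuity of the norm gives $\liminf_n\|Y-P_nY\|\ge\|Y-PY\|$. Thus $Y-P_nY\to Y-PY$ weakly and in norm, and in a Hilbert space this implies strong convergence. Therefore $P_nY\to PY$ in $L^2$.

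For the second claim, split
$$\|P_nK^{(n)}-PK\|\le\|P_n(K^{(n)}-K)\|+\|P_nK-PK\|\le\|K^{(n)}-K\|+\|P_nK-PK\|.$$
Both terms tend to zero, the first by assumption and the second by the first claim applied with $Y=K$.
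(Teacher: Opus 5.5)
Your proof is correct and follows essentially the same route as the paper: boundedness of the projections, (M2) to place weak cluster points in $\mathcal C$, (M1) combined with the best-approximation inequality to bound the $\limsup$ of $\|Y-\Pi_{\mathcal C_n}Y\|$, weak lower semicontinuity for the reverse bound, the Hilbert-space upgrade from weak-plus-norm convergence to strong convergence, and nonexpansiveness of $\Pi_{\mathcal C_n}$ for the moving claims $K^{(n)}$. The only variation is how the weak limit is identified: you use the orthogonality relation $\langle Y-X_{n_j},Z_{n_j}\rangle=0$, which relies on the $\mathcal C_n$ being linear, whereas the paper reads the limit off the distance inequality and uniqueness of the best approximation (an argument that would also cover closed convex sets), and you make explicit the subsequence-to-full-sequence step that the paper leaves implicit.
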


\begin{proof}
This is the standard projection characterisation of Mosco convergence in a Hilbert space. For
completeness, fix $Y$ and set $Y_n:=\Pi_{\mathcal C_n}Y$. The sequence $(Y_n)$ is bounded because
$\|Y-Y_n\|\le\|Y\|$. Let $Y_{n_j}$ be any weakly convergent subsequence with limit $\bar Y$.
By (M2), $\bar Y\in\mathcal C$. For any $X\in\mathcal C$, choose $X_n\in\mathcal C_n$ as in (M1).
The projection inequality $\|Y-Y_n\|\le\|Y-X_n\|$ gives
$$
\limsup_j\|Y-Y_{n_j}\|\le\|Y-X\|.
$$
Minimising over $X\in\mathcal C$ yields
$\limsup_j\|Y-Y_{n_j}\|\le\|Y-\Pi_{\mathcal C}Y\|$. Weak lower semicontinuity gives the reverse
inequality with $\bar Y$ in place of $\Pi_{\mathcal C}Y$, hence $\bar Y=\Pi_{\mathcal C}Y$ and the
norms converge. Uniform convexity of $L^2$ implies strong convergence. The statement with
$K^{(n)}$ follows from
$$
\|\Pi_{\mathcal C_n}K^{(n)}-\Pi_{\mathcal C}K\|_{L^2}
\le
\|K^{(n)}-K\|_{L^2}
+\|\Pi_{\mathcal C_n}K-\Pi_{\mathcal C}K\|_{L^2}.
$$
\end{proof}

\begin{corollary}[Nested finite-factor markets]\label{cor:nested-markets}
If
$$
\mathcal C_1\subset\mathcal C_2\subset\cdots\subset\mathcal C,
\qquad
\overline{\bigcup_{n\ge1}\mathcal C_n}=\mathcal C,
$$
then $\Pi_{\mathcal C_n}K\to\Pi_{\mathcal C}K$ in $L^2$ for every $K\in L^2$.
\end{corollary}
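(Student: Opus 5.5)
The plan is to deduce the corollary from Proposition~\ref{thm:mosco-vo-stability} by checking the two Mosco conditions (M1) and (M2) for the nested family; the projection convergence then follows verbatim. A direct argument is also available as a cross-check, and I would record it since it is short.

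\emph{Step 1: (M1).} Fix $X\in\mathcal C$ and $\varepsilon>0$. By the density assumption $\overline{\bigcup_n\mathcal C_n}=\mathcal C$, there is an index $m$ and some $Y\in\mathcal C_m$ with $\|X-Y\|_{L^2}<\varepsilon$. By nestedness, $Y\in\mathcal C_n$ for all $n\ge m$. Hence $d(X,\mathcal C_n)$ is nonincreasing in $n$ and tends to zero. Choosing $X_n:=\Pi_{\mathcal C_n}X\in\mathcal C_n$ gives $\|X-X_n\|=d(X,\mathcal C_n)\to0$.

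\emph{Step 2: (M2).} Let $X_{n_j}\in\mathcal C_{n_j}\subset\mathcal C$ with $X_{n_j}\rightharpoonup X$ weakly in $L^2$. Since $\mathcal C$ is a closed subspace, it is convex and norm-closed, hence weakly closed by Mazur's theorem. Therefore $X\in\mathcal C$. This step uses only the inclusion $\mathcal C_n\subset\mathcal C$, not nestedness.

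\emph{Step 3.} Apply Proposition~\ref{thm:mosco-vo-stability} with $Y=K$ to conclude $\Pi_{\mathcal C_n}K\to\Pi_{\mathcal C}K$ in $L^2$.

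\emph{Direct alternative.} By nestedness, $\Pi_{\mathcal C_n}=\Pi_{\mathcal C_n}\Pi_{\mathcal C}$. Setting $X=\Pi_{\mathcal C}K$, we get $\|\Pi_{\mathcal C}K-\Pi_{\mathcal C_n}K\|=d(X,\mathcal C_n)\to0$ by Step~1.

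There is no real obstacle here. The only point needing care is the weak closedness in (M2), which is handled by Mazur's theorem, and the monotonicity of the distances $d(X,\mathcal C_n)$, which comes from nestedness.
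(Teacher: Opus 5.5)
Your proof is correct and follows the paper's own route: (M1) from density of $\bigcup_n\mathcal C_n$ (you rightly make explicit that nestedness is what turns density into a sequence $X_n\in\mathcal C_n$ for every $n$), (M2) from weak closedness of the closed subspace $\mathcal C$, and then Proposition~\ref{thm:mosco-vo-stability}. Your direct alternative via $\Pi_{\mathcal C_n}=\Pi_{\mathcal C_n}\Pi_{\mathcal C}$ and $d(\Pi_{\mathcal C}K,\mathcal C_n)\downarrow 0$ is also valid and avoids the Mosco machinery altogether.
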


\begin{proof}
Condition (M1) is exactly density of $\cup_n\mathcal C_n$ in $\mathcal C$. Condition (M2) follows
from $\mathcal C_n\subset\mathcal C$ and closedness of $\mathcal C$, because every weak limit of
points in $\mathcal C$ still belongs to $\mathcal C$.
\end{proof}

\begin{theorem}[Finite-factor convergence under fixed spectral truncation]
\label{thm:fixed-basis-finite-factor-convergence}
Assume that the covariance density is diagonal in a deterministic orthonormal basis $(e_k)_{k\ge1}$
of $H$:
$$
Q_M(t)e_k=\lambda_k(t)e_k,\qquad k\ge1,
$$
where $\lambda_k$ are predictable, non-negative and
$$
\mathbb E\int_0^T\sum_{k\ge1}\lambda_k(t)\,\mathrm dt<\infty.
$$
Let $P_n$ be the projection onto $\mathrm{span}\{e_1,\ldots,e_n\}$, let
$M^{(n)}:=P_nM$, and let $\mathcal C_n$ be the closed wealth space generated by integrands
supported on $P_nH$. Then $\mathcal C_n\uparrow\mathcal C$ and
$$
\overline{\bigcup_{n\ge1}\mathcal C_n}=\mathcal C.
$$
Consequently, the variance-optimal finite-factor terminal wealths converge:
$$
\Pi_{\mathcal C_n}K\longrightarrow \Pi_{\mathcal C}K
\qquad\text{in }L^2.
$$
\end{theorem}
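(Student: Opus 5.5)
The plan is to reduce the statement to Corollary~\ref{cor:nested-markets}. This requires checking two things: that the spaces $\mathcal C_n$ are nested closed subspaces of $\mathcal C$, and that their union is dense in $\mathcal C$. Density will follow from Theorem~\ref{thm:finite-rank-stability} applied with the deterministic projections $P_n$.

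\emph{Nestedness and closedness.} Let $\Lambda_n\subset\Lambda_T^2(H,\mathbb R;M)$ denote the closure of integrands of the form $\phi_t=\phi_tP_n$, i.e.\ those supported on $P_nH$. Since $P_nH\subset P_{n+1}H$, an integrand supported on $P_nH$ is also supported on $P_{n+1}H$, so $\Lambda_n\subset\Lambda_{n+1}$. Set $\mathcal C_n=\mathbb R+G_T(\Lambda_n)$. Because $G_T$ is an isometry (Proposition~\ref{prop:ito-isometry-M}), $G_T(\Lambda_n)$ is closed, and so $\mathcal C_n$ is a closed subspace of $\mathcal C$, increasing in $n$. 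Gains against $M^{(n)}=P_nM$ are exactly gains $\int\phi P_n\,\mathrm dM$, so this description of $\mathcal C_n$ agrees with the one in the statement.

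\emph{Density via spectral projection.} The projections $P_n$ are deterministic, hence predictable, and have finite rank. Since $Q_M(t)$ is diagonal in $(e_k)$, each $P_n$ commutes with $Q_M(t)$. It remains to check the hypothesis $P_n\uparrow P_{\overline{\mathrm{Ran}\,Q_M(t)}}$ of Theorem~\ref{thm:finite-rank-stability}.

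This is the one point needing care, because $P_n\to I$ rather than to the support projection. The strong limit $P_n\uparrow I$ restricted to $\overline{\mathrm{Ran}\,Q_M(t)}$ is the identity there. In the proof of that theorem, only the tail bound $\|(I-P_n)Q_M^{1/2}(t)\|_{\mathcal L_2}\to0$ is used. Here this bound is explicit:
\[
\|(I-P_n)Q_M^{1/2}(t)\|_{\mathcal L_2}^2=\sum_{k>n}\lambda_k(t)\to0 \quad \text{a.e.},
\]
since $\sum_k\lambda_k(t)<\infty$ for a.e.\ $(t,\omega)$ by the integrability assumption. Alternatively, one can directly replace $P_n$ by $P_nP_{\overline{\mathrm{Ran}\,Q_M(t)}}$, the projection onto the span of $\{e_k:k\le n,\ \lambda_k(t)>0\}$. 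This is still predictable because each $\lambda_k$ is predictable. It also induces the same element of $\Lambda_T^2$, since $\phi P_n$ and $\phi P_nP_{\overline{\mathrm{Ran}}}$ differ only on $\ker Q_M^{1/2}(t)$.

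Either way, the argument goes as follows. For any $\phi\in\Lambda_T^2$, first take an elementary $\phi$, where dominated convergence with dominating function $\|\phi\|^2_{H^\ast}\mathrm{Tr}\,Q_M(t)$ gives $\|\phi-\phi P_n\|_{\Lambda_T^2}\to0$. Then extend to all $\phi$ by density together with $\|\psi P_n\|_{\Lambda_T^2}\le\|\psi\|_{\Lambda_T^2}$, which holds because $P_n$ commutes with $Q_M^{1/2}$ and is a contraction.

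It follows that every $X=u+G_T(\phi)\in\mathcal C$ is the $L^2$-limit of $u+G_T(\phi P_n)\in\mathcal C_n$. Hence $\overline{\bigcup_n\mathcal C_n}=\mathcal C$, and Corollary~\ref{cor:nested-markets} yields $\Pi_{\mathcal C_n}K\to\Pi_{\mathcal C}K$. The main obstacle is only the bookkeeping just described: matching the support-projection hypothesis of Theorem~\ref{thm:finite-rank-stability} and verifying the contraction bound needed to pass from elementary integrands to the completion.
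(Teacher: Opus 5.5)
Your proposal is correct and follows essentially the paper's argument: approximate elementary integrands by $\phi P_n$ using the explicit tail $\sum_{k>n}\lambda_k(t)$, extend to the completion by density, identify gains on $P_nH$ with gains against $M^{(n)}$, and conclude via Corollary~\ref{cor:nested-markets}. Two of your additions are well taken: the contraction bound $\|\psi P_n\|_{\Lambda_T^2}\le\|\psi\|_{\Lambda_T^2}$ makes explicit a step the paper leaves implicit, and you correctly observe that $P_n\uparrow I$ rather than $P_{\overline{\mathrm{Ran}\,Q_M(t)}}$, which the paper avoids by doing the computation directly instead of invoking Theorem~\ref{thm:finite-rank-stability}.
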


\begin{proof}
For an elementary integrand $\phi$, define $\phi^{[n]}:=\phi P_n$. Since $P_n$ commutes with
$Q_M^{1/2}$,
$$
\|\phi-\phi^{[n]}\|_{\Lambda_T^2(M)}^2
=
\mathbb E\int_0^T\sum_{k>n}\lambda_k(t)|\phi_t(e_k)|^2\,\mathrm dt
\to0.
$$
Density of elementary integrands extends the convergence to all of $\Lambda_T^2(M)$. Moreover,
for $\phi\in\Lambda_T^2(M)$ supported on $P_nH$,
$$
\int_0^T\phi_t\,\mathrm dM_t
=
\int_0^T\phi_t\,\mathrm dM_t^{(n)}.
$$
Thus the finite-factor gain spaces are nested and dense in the full gain space. Apply
Corollary~\ref{cor:nested-markets}.
\end{proof}

\begin{corollary}\label{cor:recomputed-payoffs}
Let $(P_n)_{n\ge1}$ be the fixed spectral projections of
Theorem~\ref{thm:fixed-basis-finite-factor-convergence}. Let $f^{(n)}$ be a finite-factor mild
solution with the same initial curve and drift as $f$, and assume that
\begin{align}\label{eq:state-convergence-assumption}
\mathbb E\|f_T^{(n)}-f_T\|_H^2\longrightarrow0.
\end{align}
If $K^{(n)}:=h(f_T^{(n)})$ and $K:=h(f_T)$ with $h:H\to\mathbb R$ Lipschitz, then
$K^{(n)}\to K$ in $L^2$ and
$$
\Pi_{\mathcal C_n}K^{(n)}\longrightarrow \Pi_{\mathcal C}K
\qquad\text{in }L^2.
$$
Condition~\eqref{eq:state-convergence-assumption} holds, for example, in the exogenous truncation
case $\sigma^{(n)}_t=P_n\sigma_t$ if
$$
\mathbb E\int_0^T\|(I-P_n)\sigma_s\|_{\mathcal L_2(G,H)}^2\,\mathrm ds\to0.
$$
It also holds for the fixed-basis multiplicative equation
$\sigma^{(n)}(t,f)=P_n\sigma(t,f)$ under the Lipschitz condition
\eqref{eq:mult-lipschitz} and the tail condition
$$
\mathbb E\int_0^T\sum_{k>n}\lambda_k(t,f_t)\,\mathrm dt\to0.
$$
\end{corollary}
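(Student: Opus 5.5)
The statement has three parts: the $L^2$ convergence $K^{(n)}\to K$, the convergence of the projections, and the two sufficient conditions for \eqref{eq:state-convergence-assumption}. The plan takes them in that order.

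\emph{Convergence of the claims.} If $L_h$ denotes the Lipschitz constant of $h$, then
$$
\mathbb E|K^{(n)}-K|^2\le L_h^2\,\mathbb E\|f_T^{(n)}-f_T\|_H^2\to0
$$
by \eqref{eq:state-convergence-assumption}. Square-integrability of $K$ follows from $|h(f)|\le |h(0)|+L_h\|f\|_H$ together with the a priori bound \eqref{eq:apriori}, and the same bound gives it for $K^{(n)}$.

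\emph{Convergence of the projections.} Theorem~\ref{thm:fixed-basis-finite-factor-convergence} gives $\mathcal C_n\uparrow\mathcal C$ with dense union. Corollary~\ref{cor:nested-markets} then yields the Mosco conditions (M1)--(M2). The second assertion of Proposition~\ref{thm:mosco-vo-stability} concludes directly through
$$
\|\Pi_{\mathcal C_n}K^{(n)}-\Pi_{\mathcal C}K\|\le\|K^{(n)}-K\|+\|\Pi_{\mathcal C_n}K-\Pi_{\mathcal C}K\|.
$$

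\emph{Exogenous truncation case.} Here the two mild solutions share the initial curve and the drift. By \eqref{eq:mild-solution},
$$
f_T^{(n)}-f_T=-\int_0^T S(T-s)(I-P_n)\sigma_s\,\mathrm dW_s .
$$
The Itô isometry and $\|S(T-s)\|\le M_T^S$ give
$$
\mathbb E\|f_T^{(n)}-f_T\|_H^2\le (M_T^S)^2\,\mathbb E\int_0^T\|(I-P_n)\sigma_s\|_{\mathcal L_2}^2\,\mathrm ds\to0 .
$$

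\emph{Multiplicative case.} This is the main obstacle, because the volatility now depends on the state. Write
$$
\sigma^{(n)}(s,f_s^{(n)})-\sigma(s,f_s)=P_n\bigl(\sigma(s,f_s^{(n)})-\sigma(s,f_s)\bigr)-(I-P_n)\sigma(s,f_s).
$$
I would then apply the Itô isometry to $e_n(t):=\mathbb E\|f_t^{(n)}-f_t\|_H^2$ together with $\|P_n\|\le1$ and the Lipschitz condition \eqref{eq:mult-lipschitz}, which I take to be Lipschitz in the $\mathcal L_2(G,H)$-norm in $f$. This should give
$$
e_n(t)\le 2(M_T^S)^2L^2\int_0^t e_n(s)\,\mathrm ds+2(M_T^S)^2\,\mathbb E\int_0^T\|(I-P_n)\sigma(s,f_s)\|_{\mathcal L_2}^2\,\mathrm ds .
$$
In the fixed eigenbasis, $\|(I-P_n)\sigma(s,f_s)\|_{\mathcal L_2(G,H)}^2=\operatorname{Tr}\bigl((I-P_n)Q_M(s)\bigr)=\sum_{k>n}\lambda_k(s,f_s)$. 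The last term is therefore exactly the assumed tail, which tends to zero.

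Gronwall's inequality then gives $\sup_{t\le T}e_n(t)\le C\cdot(\text{tail})\to0$. Applying Gronwall requires $e_n$ to be finite and measurable in $t$, which follows from \eqref{eq:apriori} for both solutions. The one point I would check carefully is that the projected coefficient $P_n\sigma(t,\cdot)$ inherits the Lipschitz and linear-growth bounds, so that $f^{(n)}$ is well-posed. This holds because $\|P_n\|\le1$.
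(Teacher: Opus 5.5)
Your proposal is correct and follows essentially the same route as the paper: the Lipschitz transfer of \eqref{eq:state-convergence-assumption} to the claims, and projection convergence via Theorem~\ref{thm:fixed-basis-finite-factor-convergence}, Corollary~\ref{cor:nested-markets} and Proposition~\ref{thm:mosco-vo-stability}. It also matches the paper on the stochastic-convolution (It\^o isometry plus semigroup bound) estimate in the exogenous case, and on the splitting $P_n(\sigma(s,f_s^{(n)})-\sigma(s,f_s))-(I-P_n)\sigma(s,f_s)$ with \eqref{eq:mult-lipschitz} and Gronwall in the multiplicative case. You additionally supply details the paper leaves implicit: square-integrability of $K^{(n)}$, the identity $\|(I-P_n)\sigma(s,f_s)\|_{\mathcal L_2}^2=\sum_{k>n}\lambda_k(s,f_s)$, and well-posedness of the projected equation. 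Note also that \eqref{eq:mult-lipschitz} is exactly the Hilbert--Schmidt Lipschitz bound for the diagonal coefficient, so your reading of it is not an extra assumption.
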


\begin{proof}
In the exogenous case, the stochastic convolution estimate gives
$$
\mathbb E\|f_T^{(n)}-f_T\|_H^2
\le
C_T\,\mathbb E\int_0^T\|(I-P_n)\sigma_s\|_{\mathcal L_2(G,H)}^2\,\mathrm ds
\to0.
$$
For the multiplicative case, write the mild equations for $f^{(n)}$ and $f$, use the semigroup
bound and Itô isometry, and decompose
$
P_n\sigma(s,f_s^{(n)})-\sigma(s,f_s)
=
P_n\bigl(\sigma(s,f_s^{(n)})-\sigma(s,f_s)\bigr)
-(I-P_n)\sigma(s,f_s).
$
The Lipschitz condition \eqref{eq:mult-lipschitz} gives
$$
\mathbb E\|f_t^{(n)}-f_t\|_H^2
\le
C_T\int_0^t\mathbb E\|f_s^{(n)}-f_s\|_H^2\,\mathrm ds
+C_T\,\mathbb E\int_0^t\sum_{k>n}\lambda_k(s,f_s)\,\mathrm ds .
$$
Gronwall's lemma and the tail condition imply
\eqref{eq:state-convergence-assumption}. Lipschitz continuity of $h$ yields
$K^{(n)}\to K$ in $L^2$. The projection convergence follows from
Proposition~\ref{thm:mosco-vo-stability}.
\end{proof}

\section{Two model classes}\label{sec:bns-specialization}

We now record two classes of HJMM models covered by the preceding theory. The first has exogenous
operator-valued stochastic covariance. The second has multiplicative, curve-dependent volatility
with a fixed covariance eigenbasis. The two examples separate the two roles of stochastic
volatility: random covariance creates an orthogonal residual whenever its driver is not traded, while
fixed-basis multiplicative noise gives a particularly transparent finite-factor approximation
theory.

\subsection{Affine stochastic covariance}
\label{subsec:affine-covariance-example}

Let $G=H$ and let $\Sigma$ be a predictable process with values in the cone
$\mathcal L_1^+(H)$ of positive trace-class operators. Assume
$$
\mathbb E\int_0^T\mathrm{Tr}(\Sigma_t)\,\mathrm dt<\infty.
$$
Set $\sigma_t:=\Sigma_t^{1/2}$. Then $\sigma_t\in\mathcal L_2(H)$ and
$$
\|\sigma_t\|_{\mathcal L_2(H)}^2=\mathrm{Tr}(\Sigma_t),
$$
so (A3) holds. The HJMM dynamics become
\begin{align}\label{eq:affine-covariance-hjmm}
\mathrm df_t=\mathcal A f_t\,\mathrm dt+\Sigma_t^{1/2}\,\mathrm dW_t.
\end{align}

A tractable stochastic-covariance specification is an operator-valued BNS/OU dynamics
\begin{align}\label{eq:operator-bns}
\mathrm d\Sigma_t
=
(\mathcal B\Sigma_t+\Sigma_t\mathcal B^\ast+b)\,\mathrm dt+\mathrm dL_t,
\qquad
\Sigma_0\in\mathcal L_1^+(H),
\end{align}
where $\mathcal B$ generates a positive semigroup on $\mathcal L_1(H)$, $b\in\mathcal L_1^+(H)$,
and $L$ is an $\mathcal L_1^+(H)$-valued subordinator with finite first moment. If the semigroup
generated by $\mathcal B(\cdot)+(\cdot)\mathcal B^\ast$ preserves $\mathcal L_1^+(H)$ and
$\mathbb E\,\mathrm{Tr}(L_T)<\infty$, then the mild form of \eqref{eq:operator-bns} is positive,
trace-class valued, and satisfies the integrability condition above on finite horizons. This is the
standard affine-cone setting; see, for example, the affine Hilbert-space constructions
in~\cite{CoxKarbachKhedher2022,FK24,Karbach2024HeatAffine}.

The variance-optimal hedge for a claim $K=h(f_T)$ is the GKW integrand of
Theorem~\ref{thm:variance-optimal-hedging} with
$$
Q_M(t)=\Sigma_t.
$$
If the claim martingale admits a Markovian value function
$$
v(t,f,\Sigma)=\mathbb E[h(f_T)\mid f_t=f,\Sigma_t=\Sigma],
$$
that is Fréchet differentiable in the forward-curve variable with sufficient integrability, then
Itô's formula gives the forward-curve covariation density
\begin{align}\label{eq:affine-gradient-q}
q_{Z,M}(t)=\Sigma_t\,D_fv(t,f_t,\Sigma_t),
\end{align}
and Lemma~\ref{lem:gkw-representative} yields the representative
$$
\tilde\xi_t=P_{\overline{\mathrm{Ran}\Sigma_t}}D_fv(t,f_t,\Sigma_t)
$$
whenever the source condition holds. For Fourier-transformable claims, affine transform formulae
can replace the value-gradient representation; the hedge is then obtained by differentiating the
affine transform with respect to the forward-curve state, in the spirit of the affine
variance-optimal hedging formulae of~\cite{Kallsen_Pauwels_2010}.

The stochastic-volatility floor has a transparent interpretation in this example under an explicit
noise-splitting assumption. Suppose the filtration is generated by the traded Brownian curve noise
and by the covariance driver $L$, and suppose the martingale part generated by $L$ is orthogonal to
the traded martingale $M$. Then the $L$-driven martingale component of
$v(t,f_t,\Sigma_t)$ belongs to the GKW residual $N$. Increasing the number of traded curve factors
reduces the rank truncation term, but it cannot remove covariance shocks that are not spanned by
forward trading.

More generally, let $Y$ be an $\mathbb R^m$-valued square-integrable martingale representing the
component of the non-traded covariance noise that is strongly orthogonal to $M$, continuous or
purely discontinuous. Suppose that a Markovian value process $Z_t=v(t,f_t,\Sigma_t)$ admits the
semimartingale split
$$
\mathrm dZ_t=\phi_t\,\mathrm dM_t+\gamma_t^\top\,\mathrm dY_t+\mathrm dR_t,
$$
where $R$ is strongly orthogonal to both $M$ and each component of $Y$, and
$\gamma$ is a predictable $\mathbb R^m$-valued process square-integrable with respect to the
matrix bracket measure $\mathrm d\langle Y\rangle_t$. For a Markovian model this follows, for
example, from differentiability of $v$ in the covariance state and the corresponding
square-integrability of the derivative.
Then the non-traded covariance contribution to the GKW residual satisfies
$$
\mathbb E[N_T^2]\ \ge\
\mathbb E\int_0^T \gamma_t^\top\,\mathrm d\langle Y\rangle_t\,\gamma_t,
$$
with equality for the covariance-noise part when $R\equiv0$. This term is strictly positive
exactly when $\gamma$ is nonzero on a set of positive predictable bracket measure; for
jump-driven $Y$, $\langle Y\rangle$ denotes the predictable quadratic-variation matrix,
equivalently the compensator of the jump quadratic variation.
Thus non-traded covariance noise creates a floor only for claims whose value process loads on that
noise.

The next subsection uses a diffusion-type affine covariance specialisation, not the
subordinator-driven BNS/OU dynamics in \eqref{eq:operator-bns}. The rank-one CIR choice is made only
to obtain a transparent closed-form floor.

\subsubsection{A closed-form stochastic-volatility floor}\label{subsubsec:closed-form-floor}

To show that the floor $\mathbb E[N_T^2]$ is genuinely positive, computable, and correctly scaled
by the vol-of-vol, we work out a rank-one instance in closed form. It is the simplest member of the
affine class above and isolates the floor without any rank-truncation error.

\paragraph{Model.}
Let $h\in H$ be a fixed curve direction and $e\in G$ a unit vector, and write
$\beta_t:=\langle W_t,e\rangle_G$ for the resulting scalar Brownian motion (the \emph{traded}
noise). Let $B$ be a scalar Brownian motion independent of $\beta$ (the \emph{non-traded}
covariance driver), and let the scalar variance $v$ follow the affine (CIR) dynamics
\begin{align}\label{eq:floor-cir}
\mathrm dv_t=\kappa(\theta-v_t)\,\mathrm dt+\xi\sqrt{v_t}\,\mathrm dB_t,
\qquad v_0>0,\ \kappa,\theta>0,\ \xi\ge0,
\end{align}
with $2\kappa\theta\ge\xi^2$ so that $v>0$. Take the rank-one volatility
$\sigma_t:=\sqrt{v_t}\,(h\otimes e)$, so that
$$
M_t=\int_0^t\sigma_s\,\mathrm dW_s=h\,X_t,
\qquad
X_t:=\int_0^t\sqrt{v_s}\,\mathrm d\beta_s,
\qquad
Q_M(t)=\Sigma_t=v_t\,(h\otimes h).
$$
The filtration is generated by $(\beta,B)$; the covariance driver $B$ is not spanned by the traded
gains $\mathcal G(M)=\{\int\phi\,\mathrm dX\}$. This is a diffusion-type affine specialisation of
\eqref{eq:affine-covariance-hjmm}, distinct from the subordinator-driven BNS/OU dynamics
\eqref{eq:operator-bns}, with a one-dimensional, $M$-orthogonal covariance noise.

\paragraph{Claim and hedge.}
Consider the quadratic curve claim
$$
K=\bigl(\langle M_T,\hat h\rangle_H\bigr)^2=X_T^2,
\qquad \hat h:=h/\|h\|_H^2,
$$
which is in $L^2(\mathbb P)$ because $v$ has all polynomial moments. Since
$\mathbb E[v_s\mid v_t]=\theta+(v_t-\theta)\mathrm{e}^{-\kappa(s-t)}$, the value process is affine:
$$
Z_t=\mathbb E[K\mid\mathcal F_t]=X_t^2+A(t)+\mathsf B(t)\,v_t,
\qquad
\mathsf B(t)=\frac{1-\mathrm{e}^{-\kappa(T-t)}}{\kappa},
$$
with $A$ determined by $A'(t)=-\kappa\theta\,\mathsf B(t)$, $A(T)=0$. Itô's formula gives the
martingale decomposition
$$
\mathrm dZ_t
=\underbrace{2X_t\sqrt{v_t}\,\mathrm d\beta_t}_{\text{traded ($\beta$)}}
\;+\;\underbrace{\mathsf B(t)\,\xi\sqrt{v_t}\,\mathrm dB_t}_{\text{non-traded ($B$)}},
$$
the drift terms cancelling because $A,\mathsf B$ solve the affine ODEs. The first term lies in
$\mathcal G(M)$ and the second is strongly orthogonal to $M$, so the GKW decomposition
\eqref{eq:price-decomposition-final} is read off directly: the variance-optimal hedge is
$$
\tilde\phi_t(\cdot)=2X_t\,\langle\hat h,\cdot\rangle_H
\qquad(\text{i.e. }\tilde\xi_t=2X_t\,\hat h),
\qquad
\mathrm dN_t=\mathsf B(t)\,\xi\sqrt{v_t}\,\mathrm dB_t .
$$
No source condition is needed: the residual $N$ is exhibited explicitly.

\paragraph{The floor.}
By the Itô isometry and $\mathbb E[v_t]=\theta+(v_0-\theta)\mathrm{e}^{-\kappa t}$,
\begin{align}\label{eq:floor-closed-form}
\mathbb E[N_T^2]
=\xi^2\int_0^T \mathsf B(t)^2\,\mathbb E[v_t]\,\mathrm dt
=\xi^2\int_0^T\Bigl(\frac{1-\mathrm{e}^{-\kappa(T-t)}}{\kappa}\Bigr)^2
\bigl(\theta+(v_0-\theta)\mathrm{e}^{-\kappa t}\bigr)\,\mathrm dt .
\end{align}
Three features are worth recording.
\begin{enumerate}
\item[(i)] \emph{Strict positivity.} $\mathbb E[N_T^2]>0$ whenever $\xi>0$: a non-traded covariance
shock cannot be removed by any forward-trading strategy. The market is genuinely incomplete despite
the covariance having finite (here, unit) rank, confirming that incompleteness is driven by
non-traded volatility noise rather than by the rank of $\Sigma_t$.
\item[(ii)] \emph{Vol-of-vol scaling and the deterministic limit.}
$\mathbb E[N_T^2]=\Theta(\xi^2)$, and $\mathbb E[N_T^2]\to0$ as $\xi\to0$. The $\xi=0$ case is
deterministic volatility. After reducing to the traded Brownian filtration, this is the
approximately complete regime of Corollary~\ref{cor:approx-completeness}; the floor is precisely the
obstruction created by switching on the covariance noise.
\item[(iii)] \emph{Maturity scaling.} As $\kappa\to0$ (no mean reversion, so
$\mathbb E[v_t]\to v_0$),
$\mathsf B(t)\to T-t$ and $\mathbb E[N_T^2]\to\xi^2 v_0\,T^3/3$, the familiar cubic-in-maturity
growth of an unhedged variance exposure.
\end{enumerate}
This example also shows that the truncation results are sharp in the right sense: any finite-rank
($n$-factor) hedge of $K$ reproduces the delta $\tilde\phi_t=2X_t\langle\hat h,\cdot\rangle_H$
exactly once $h\in P_nH$, so the rank-truncation gap of Proposition~\ref{thm:three-way} vanishes
from the first truncation containing $h$, while the residual term stays fixed at
\eqref{eq:floor-closed-form}. We revisit this
model numerically in Section~\ref{sec:numerics}.

\subsection{Multiplicative fixed-basis HJMM noise}
\label{subsec:multiplicative-example}

This subsection gives a tractable benchmark rather than a generic stochastic-covariance model:
the covariance eigenvectors are fixed and only the eigenvalues are stochastic or state dependent.
This restriction makes nested PCA convergence transparent.

Let $(e_k)_{k\ge1}$ be a deterministic orthonormal basis of $H$ and take $G=H$. Suppose
$$
\sigma(t,f)e_k=\sqrt{\lambda_k(t,f)}\,e_k,\qquad k\ge1,
$$
where $\lambda_k:[0,T]\times H\to\mathbb R_+$ are measurable and satisfy, for constants $C,L<\infty$,
\begin{align}
\sum_{k\ge1}\lambda_k(t,f)&\le C(1+\|f\|_H^2),\label{eq:mult-growth}\\
\sum_{k\ge1}\bigl|\sqrt{\lambda_k(t,f)}-\sqrt{\lambda_k(t,g)}\bigr|^2
&\le L^2\|f-g\|_H^2.\label{eq:mult-lipschitz}
\end{align}
Then $\sigma(t,f)\in\mathcal L_2(H)$, and the multiplicative HJMM equation
\begin{align}\label{eq:multiplicative-hjmm}
\mathrm df_t=\mathcal A f_t\,\mathrm dt+\sigma(t,f_t)\,\mathrm dW_t
\end{align}
has a unique mild solution under the standard semilinear SPDE assumptions. Its covariance density is
$$
Q_M(t)e_k=\lambda_k(t,f_t)e_k.
$$
Hence the eigenfunctions are fixed while the eigenvalues may be stochastic and state-dependent.
The finite-factor truncation
$$
\sigma^{(n)}(t,f)=P_n\sigma(t,f),\qquad
P_n=\sum_{k=1}^n e_k\otimes e_k,
$$
produces nested gain spaces. If
$$
\mathbb E\int_0^T\sum_{k>n}\lambda_k(t,f_t)\,\mathrm dt\to0,
$$
then Theorem~\ref{thm:fixed-basis-finite-factor-convergence} applies. If, in addition,
$K^{(n)}=h(f_T^{(n)})$ for a Lipschitz payoff and $f^{(n)}$ solves the projected equation, then
Corollary~\ref{cor:recomputed-payoffs} gives
$$
\Pi_{\mathcal C_n}K^{(n)}\to\Pi_{\mathcal C}K
\quad\text{in }L^2.
$$

A simple parametrisation is
$$
\lambda_k(t,f)=a_k\,\ell_k(\langle f,r_k\rangle_H),
\qquad
\sum_{k\ge1}a_k<\infty,
$$
where $\ell_k$ are bounded positive Lipschitz functions and $r_k\in H$ are uniformly bounded. If
the Lipschitz constants of the functions $\ell_k^{1/2}$ are uniformly bounded after multiplication
by $a_k^{1/2}$, then \eqref{eq:mult-growth}--\eqref{eq:mult-lipschitz} hold. This captures
curve-dependent volatility levels while preserving the fixed PCA directions needed for a clean
finite-factor convergence theorem.

\begin{remark}[Lifted Volterra equations]
Markovian lifts of stochastic Volterra equations provide another source of HJM-type SPDEs. For
example, a Volterra equation
$$
X_t=g_0(t)+\int_0^t K(t-s)b(X_s)\,\mathrm ds
       +\int_0^t K(t-s)c(X_s)\,\mathrm dB_s
$$
can often be represented by an infinite-dimensional Markov process $Y_t$ whose evolution is a
transport-type SPDE and for which $X_t$ is recovered by a continuous linear functional; see
\cite{CT20}. The quadratic hedging theory above applies to the lifted martingale part once the
traded gain space is specified. If the lifted curve is itself a forward curve, the present maturity
strip interpretation applies directly. If the lift is a latent volatility state, as in rough
volatility models, then the GKW projection should be taken with respect to the actually traded price
martingales, and the untraded lift contributes to the orthogonal residual. This is a natural
extension but not needed for the main forward-curve results of this paper.
\end{remark}

\section{Finite-factor implementation workflow}\label{sec:numerics}

The theorems above suggest the following implementation procedure.
\begin{enumerate}[label=\textbf{Step \arabic*.},leftmargin=*]
\item Specify or estimate the full covariance kernel and verify the square-integrability
assumptions.
\item Choose a spectral dimension $n$. In the fixed-eigenbasis case this is the PCA subspace
$P_nH=\mathrm{span}\{e_1,\ldots,e_n\}$; in empirical applications the basis may be estimated
from a regularised covariance kernel.
\item Compute the restricted variance-optimal projection $\Pi_{\mathcal C_n}K$ or its integrand
representative.
\item Project the rank-$n$ integrand onto the liquid bucket instruments
$\{\ell_{[a_i,b_i]}\}$ using Proposition~\ref{prop:epsilon-hedge}.
\item Report the bucket, rank, and residual terms in the decomposition
\eqref{eq:three-way-risk}.
\end{enumerate}

The error identity in Proposition~\ref{thm:three-way} separates the three losses that arise in this
workflow. The bucket term measures the cost of replacing a rank-$n$ continuum exposure by a finite
set of traded delivery instruments (Remark~\ref{rmk:delivery-buckets}). The rank term measures the
cost of discarding covariance directions beyond $P_nH$ and vanishes under
Theorem~\ref{thm:finite-rank-stability} or Theorem~\ref{thm:fixed-basis-finite-factor-convergence}.
The residual term $\mathbb E[N_T^2]$ is structural: it is the component of the claim martingale
generated by risk sources that are not spanned by forward trading.

\subsection{A reproducible synthetic study}\label{subsec:synthetic-study}

We illustrate the workflow on the closed-form model of
Section~\ref{subsubsec:closed-form-floor}, generalised to many curve factors so that all three
error terms are simultaneously active. The discretised benchmark covariance is the retained
$N=40$ Karhunen--Lo\`eve approximation of the exponential (Mat\'ern-$\tfrac12$) kernel
$k(x,y)=s^2\mathrm{e}^{-|x-y|/\rho}$ on a time-to-maturity horizon
$[0,\Theta_{\max}]$, with eigenpairs $(a_k,e_k)$. The whole covariance is modulated by a single
CIR variance $V_t$ \eqref{eq:floor-cir} driven by a non-traded Brownian motion.

The synthetic experiment works in driftless gain coordinates. The Musiela shift is suppressed, and
the claim is written on the traded martingale part
$$
K=G_T^2,\qquad G_t:=\langle M_t,w\rangle_H,
$$
where $w$ is the representer of the base-load delivery window. This is the many-factor analogue
of Section~\ref{subsubsec:closed-form-floor}; it is not a state claim $h(f_T)$. Therefore
$G$ is a martingale and the GKW integrand is exactly
$\tilde\phi_t=2G_t\,\langle w,\cdot\rangle_H$, with no adjoint Musiela semigroup term. In this
finite-dimensional numerical benchmark, $H$ is the $L^2$ maturity-grid space induced by the
quadrature weights, so $w$ is the corresponding $L^2$-representer of the delivery-window
average rather than the Filipovi\'c-space representer from Section~\ref{sec:trading-strategies}.
Thus the benchmark tests the projection geometry of the martingale gain space rather than a full
Musiela-shifted SPDE simulation. In particular, it validates the projection geometry and the floor,
not the state-recomputation of Corollary~\ref{cor:recomputed-payoffs}, since the claim $K=G_T^2$ is
written directly on the martingale part.

Up to the single positive constant $R:=\mathbb E\int_0^T G_t^2 V_t\,\mathrm dt$, the three terms
of Proposition~\ref{thm:three-way} are
$$
\text{(ii) rank-$n$ gap}=4R\sum_{k>n}a_k w_k^2,
\qquad
\text{(iii) floor}=\mathbb E[N_T^2],
$$
$$
\text{(i) bucket gap}=4R\,\mathrm{dist}_A\bigl(P_nw,\ \mathrm{span}\{P_n g_i\}\bigr)^2,
$$
where $w_k=\langle w,e_k\rangle_H$, the $g_i$ are the representers of a delivery ladder
$\{\ell_{[a_i,b_i]}\}$, and $\mathrm{dist}_A$ is the distance in the covariance inner product
$\langle u,v\rangle_A=\sum_k a_k u_kv_k$ induced by the hedging norm. The constant $R$ and the floor
are estimated by a Monte-Carlo simulation of $(V,G,N)$; the remaining geometry is exact given
$R$, whose own Monte-Carlo standard error is well under $1\%$ and
is dominated by the floor cross-check below.

\begin{center}
\begin{tabular}{ll}
\hline
Quantity & Value / method\\
\hline
CIR variance & $\kappa=2.0,\ \theta=0.04,\ \xi=0.15,\ V_0=0.04,\ T=1.0$\\
Feller condition & $2\kappa\theta=0.16\ge \xi^2=0.0225$\\
Kernel & $k(x,y)=s^2\mathrm{e}^{-|x-y|/\rho}$, $s=1.0,\ \rho=0.25$\\
Maturity grid & $\Theta_{\max}=3.0$, $N_x=600$, retained KL factors $N=40$\\
Claim window & base-load window $[0.5,1.5]$\\
Delivery ladders & nested dyadic $m\in\{2,4,8,16,32\}$ at fixed rank $N^\ast=8$ (Fig.~\ref{fig:study}, right)\\
Monte Carlo & $200{,}000$ paths, $250$ steps\\
Variance scheme & full-truncation Euler for the CIR process\\
Estimators & $R$ by path average of $\int_0^T G_t^2V_t\,\mathrm dt$;
floor by $N_T^2$\\
\hline
\end{tabular}
\end{center}

The many-factor floor benchmark used for the cross-check is
$$
\mathbb E[N_T^2]
=c_w^2\xi^2\int_0^T\mathsf B(t)^2\,\mathbb E[V_t]\,\mathrm dt,
\qquad
c_w=\sum_k a_k w_k^2,\qquad
\mathsf B(t)=\frac{1-\mathrm{e}^{-\kappa(T-t)}}{\kappa}.
$$
With the parameters above, the analytic benchmark, evaluated in the code by fine quadrature, is
$\floorCF$, the Monte-Carlo estimate is $\floorMC$, and the estimated standard error is
$\floorSE$.

\begin{figure}[t]
\centering
\includegraphics[width=0.49\linewidth]{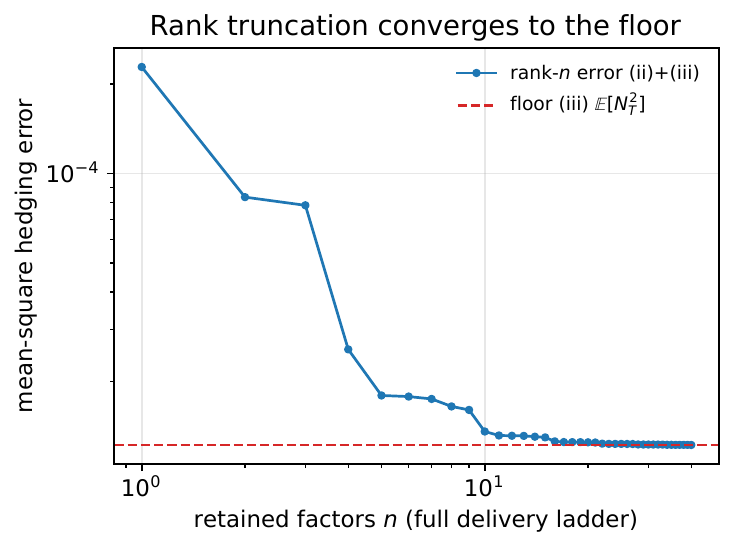}\hfill
\includegraphics[width=0.49\linewidth]{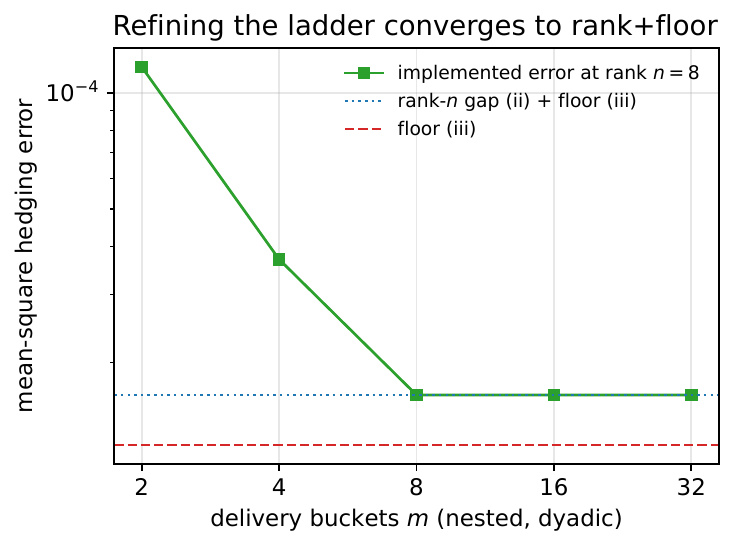}
\caption{The two panels separate the two convergence questions of
Proposition~\ref{thm:three-way}. \emph{Left (rank convergence):} with the full continuum of
maturities available, the rank-only error $(\mathrm{ii})+(\mathrm{iii})$, rank gap plus floor, decays
monotonically as covariance factors are retained and converges down to the stochastic-volatility
floor (iii) $\mathbb E[N_T^2]$. \emph{Right (implementation convergence):} at a fixed spectral rank
$N^\ast=8$, refining a nested dyadic delivery ladder $m\in\{2,4,8,16,32\}$ drives the implemented
error $(\mathrm i)+(\mathrm{ii})+(\mathrm{iii})$ monotonically down to its limit, the residual rank
gap plus the floor (dotted), which still lies strictly above the floor (dashed). The nesting
$\mathrm{span}\{P_{N^\ast}g_i^{(m)}\}\subset\mathrm{span}\{P_{N^\ast}g_i^{(2m)}\}$ of dyadic ladders
makes both curves monotone.}
\label{fig:study}
\end{figure}

Figure~\ref{fig:study} reports the outcome. In the left panel the rank-only error falls by more than
an order of magnitude as the leading covariance factors are retained and then flattens onto the
floor. This illustrates the spectral projection geometry behind
Theorem~\ref{thm:fixed-basis-finite-factor-convergence}: adding factors removes the rank gap (ii) but
never the floor (iii). The right panel fixes the rank at $N^\ast=8$ and instead refines the traded
delivery ladder; the implemented error decreases monotonically to the residual rank gap plus the
floor, illustrating Remark~\ref{rmk:delivery-buckets}, refining the ladder closes the bucket gap
(i), while the rank gap at $N^\ast$ and the floor remain. As a correctness check on the whole
pipeline, the Monte-Carlo floor $\floorMC$ matches the closed-form value $\floorCF$ of
\eqref{eq:floor-closed-form} to within $\floorRelErr$ (about one Monte-Carlo standard error), so the
simulated residual is indeed the analytic stochastic-volatility floor.

\paragraph{Explained variance is the wrong truncation criterion.}
Retaining the $n$ factors of largest variance $a_k$, the principal-component criterion, need not
minimise the hedging error, which by~\eqref{eq:three-way-risk} is governed by the covariance-weighted
exposure $a_kw_k^2$. Figure~\ref{fig:pca} illustrates this for a claim whose weight, besides the
base-load window, also places mass on a single low-variance factor $e_{k_0}$ ($k_0=13$): we compare
the covariance-weighted truncation gap $\sum_{k\notin S}a_kw_k^2$ under principal-component selection
$S=\{n\text{ largest }a_k\}$ against the hedging-norm-optimal selection
$S_\star(n)=\arg\max_{|S|=n}\sum_{k\in S}a_kw_k^2$. The variance criterion omits the low-variance but
hedging-relevant factor until $n\ge k_0$ and carries a gap larger by up to a factor of several
hundred in between, whereas the hedging norm selects $e_{k_0}$ already at $n=2$. Rank reduction must
therefore be assessed in the covariance hedging norm rather than by explained variance, consistent
with~\cite{Con05}.

\begin{figure}[t]
\centering
\includegraphics[width=0.62\linewidth]{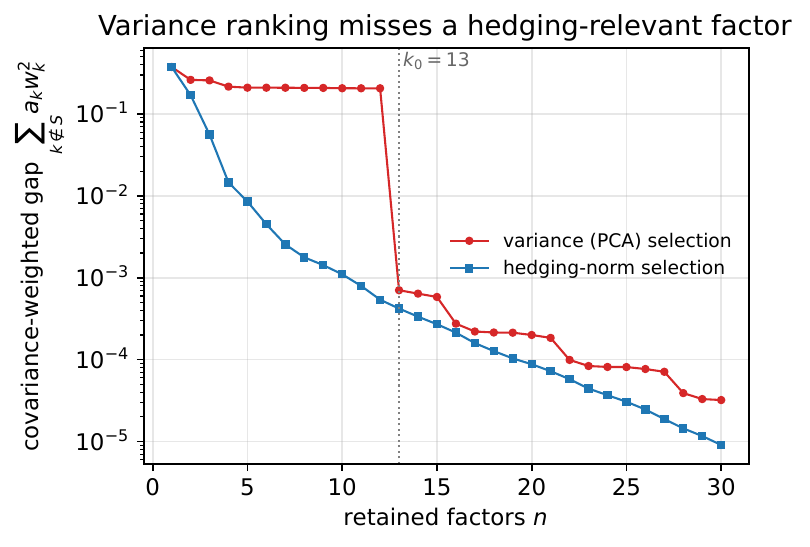}
\caption{Covariance-weighted truncation gap $\sum_{k\notin S}a_kw_k^2$ under variance
(principal-component) selection versus the hedging-norm-optimal selection, for a claim that also
loads on a low-variance factor $e_{k_0}$ ($k_0=13$, dotted line). Ranking factors by explained
variance leaves the hedging-relevant factor out until $n\ge k_0$; the hedging norm selects it
immediately. The plotted quantity is purely spectral (no Monte-Carlo).}
\label{fig:pca}
\end{figure}

The study is synthetic by design: it isolates the three error terms in a discretised benchmark where
the truth is known. An empirical study would add covariance-kernel calibration, SPDE
discretisation, and regression for the claim value gradient; the convergence results identify which
finite-dimensional objects such a study should approximate and which error terms it should report.

\section{Conclusion}\label{sec:conclusion}
This paper develops a variance-optimal hedging framework for European claims written on forward curves in
infinite-dimensional HJMM models. The construction treats the martingale part of the curve as the traded risk and defines admissible gains through the covariance-norm completion of
finite-maturity strategies. This yields a Galtchouk--Kunita--Watanabe decomposition in the correct
Hilbert-space quotient and identifies the variance-optimal hedge as an orthogonal projection in
$L^2$.

The approximation theory has two parts. Spectral projections of the full GKW hedge converge in the
covariance hedging norm. More generally, finite-factor variance-optimal terminal wealths converge
whenever the associated attainable wealth spaces converge to the full wealth space in the Mosco
sense; nested PCA truncations with a fixed covariance eigenbasis provide a transparent sufficient
condition. Along nested bucket and rank spaces the mean-square hedging error decomposes into a
bucket implementation term, a finite-rank truncation term, and an irreducible residual.

The examples show that the theory covers both exogenous operator-valued stochastic covariance and
state-dependent multiplicative HJMM noise. In the affine stochastic-covariance case, untraded
covariance shocks naturally contribute to the residual martingale. In the multiplicative
fixed-eigenbasis case, state-dependent eigenvalues preserve the nested PCA geometry needed for
finite-factor convergence.

The main limitations are the idealisation of frictionless trading, the use of square-integrable
claims, and the assumption that forward-curve gains can be approximated by liquid maturity buckets.
Extensions to transaction costs, jump-driven forward curves, multi-curve interest-rate models, and
lifted Volterra or rough-volatility dynamics are natural directions for further work.

\appendix

\section{Operator-theoretic preliminaries}\label{app:operator-prelim}

Let $G$ and $H$ be real separable Hilbert spaces. We write $\mathcal{L}(G,H)$ for bounded
linear operators and use the Hilbert--Schmidt and trace-class ideals
$$
\mathcal{L}_2(G,H)\subset \mathcal{K}(G,H),
\qquad
\mathcal{L}_1(G,H)\subset \mathcal{L}_2(G,H),
$$
where $\mathcal{K}$ denotes compact operators. For $B\in\mathcal{L}_2(G,H)$,
$$
\|B\|_{\mathcal{L}_2(G,H)}^{2}
=
\sum_{n\ge 1}\|B g_n\|_H^{2}
\quad\text{for any orthonormal basis }(g_n)\text{ of }G,
$$
and the norm is independent of the basis. The following composition bounds hold:
if $A\in\mathcal{L}(H)$, $B\in\mathcal{L}_2(G,H)$, $C\in\mathcal{L}(G)$, then
$$
\|A B C\|_{\mathcal{L}_2(G,H)}\le \|A\|_{\mathcal{L}(H)}\,\|B\|_{\mathcal{L}_2(G,H)}\,\|C\|_{\mathcal{L}(G)};
$$
if $B,C\in\mathcal{L}_2(H)$, then $CB\in\mathcal{L}_1(H)$ and
$$
\|CB\|_{\mathcal{L}_1(H)}\le \|C\|_{\mathcal{L}_2(H)}\,\|B\|_{\mathcal{L}_2(H)}.
$$
For self-adjoint, non-negative compact $Q\in\mathcal{L}(H)$ with spectral resolution
$Q=\sum_{k\ge 1}\lambda_k\,e^{(k)}\otimes e^{(k)}$, the Moore--Penrose pseudo-inverse is
the densely-defined operator
$$
\mathcal{D}(Q^{\dagger})
=
\Bigl\{h\in H:\ \sum_{k:\lambda_k>0}\lambda_k^{-2}|\langle h,e^{(k)}\rangle_H|^{2}<\infty\Bigr\},
\qquad
Q^{\dagger}h
=
\sum_{k:\lambda_k>0}\lambda_k^{-1}\langle h,e^{(k)}\rangle_H\,e^{(k)}.
$$
It satisfies $QQ^{\dagger}=P_{\overline{\mathrm{Ran}(Q)}}$ on $\mathcal D(Q^{\dagger})$, while
$Q^{\dagger}Q=P_{\ker(Q)^\perp}$ holds on all of $H$ (since $Qh\in\mathrm{Ran}(Q)\subseteq\mathcal
D(Q^{\dagger})$ for every $h$). The operator $Q^{\dagger}$ is bounded if and only if
$\mathrm{Ran}(Q)$ is closed; by Lemma~\ref{lem:closed-range-finite-rank}, this is the case if and
only if $Q$ has finite rank.
In infinite rank the range is not closed, so $P_{\overline{\mathrm{Ran}(Q)}}$ must not be
replaced by a projection onto $\mathrm{Ran}(Q)$.
A standard reference for these facts is~\cite{DaPratoZabczyk2014}; see also~\cite{PeszatZabczyk2007}.

\clearpage{}

\end{document}